\documentclass{article}

\usepackage{amsthm}
\usepackage{amsmath}
\usepackage{amssymb}
 \usepackage{stmaryrd}
\usepackage{paralist}
 \usepackage{hyperref}

 \usepackage{times}

\theoremstyle{plain}

\newtheorem{theorem}{Theorem}[section]
\newtheorem{proposition}[theorem]{Proposition}

\newtheorem{corollary}[theorem]{Corollary}
\newtheorem{lemma}[theorem]{Lemma}

\theoremstyle{definition}

\newtheorem{definition}[theorem]{Definition}
\newtheorem{example}[theorem]{Example}
\newtheorem{remark}[theorem]{Remark}
\newtheorem{fact}[theorem]{Fact}
\newtheorem{question}[theorem]{Question}



\newcommand{\M}{\mathcal{M}}

\newcommand{\fo}{\ensuremath{\mathrm{FO}}}
\newcommand{\fotwo}{\ensuremath{\mathrm{FO^2}}}
\newcommand{\emsotwo}{\ensuremath{\mathrm{EMSO^2}}}

\newcommand{\N}{\ensuremath{\mathbb{N}}}

\newcommand{\A}{\ensuremath{\mathcal{A}}}
\newcommand{\G}{\ensuremath{\mathcal{G}}}

\newcommand{\C}{\ensuremath{\mathcal{C}}}
\renewcommand{\L}{\ensuremath{\mathcal{L}}}

\newcommand{\last}{\mathsf{last}}

\renewcommand{\S}{\ensuremath{\mathcal{S}}}
\renewcommand{\P}{\ensuremath{\mathcal{P}}}

\renewcommand{\(}{\left(}
\renewcommand{\)}{\right)}

\newcommand{\fl}{\mathrm{CL}}
\newcommand{\defeq}{\ensuremath{\stackrel{\mathrm{def}}{=}}}
\newcommand{\D}{\ensuremath{\mathcal{D}}}

\newcommand{\resp}{{\it resp.}}

\newcommand{\succclass}{{+}^c1}

\newcommand{\plusc}{\mathbin{+^c}}
\newcommand{\minusc}{\mathbin{-^c}}

\newcommand{\gsp}[1]{\mathrm{sp}\(#1\)}
\newcommand{\gdp}[1]{\mathrm{dp}\(#1\)}
\newcommand{\msp}[1]{\mathrm{msp}(#1)}

\newcommand{\sem}[1]	{[\![#1]\!]}
\newcommand{\mlabel}[1] {\label{#1}}
\newcommand{\intro}[1]{{\em #1}}

\newcommand{\DCMT}{\mathrm{CMT}}

\newcommand{\tp}[1]{\mathrm{tp}\(#1\)}

\newcommand{\firstc}{\mathsf{first}^c}
\newcommand{\lastc}{\mathsf{last}^c}
\newcommand{\firstg}{\mathsf{first}^g}
\newcommand{\lastg}{\mathsf{last}^g}

\newcommand{\true} {\mathsf{true}}

\newcommand{\nextg}{\mathtt{X}^g}
\newcommand{\nextc}{\mathtt{X}^c}
\newcommand{\prevg}{\mathtt{Y}^g}
\newcommand{\prevc}{\mathtt{Y}^c}

\newcommand{\dnextg}{\tilde{\mathtt{X}}^g}
\newcommand{\dnextc}{\tilde{\mathtt{X}}^c}
\newcommand{\dprevg}{\tilde{\mathtt{Y}}^g}
\newcommand{\dprevc}{\tilde{\mathtt{Y}}^c}

\newcommand{\comp}{\mathsf{Comp}}

\newcommand{\Fg}{\mathtt{F}^g}
\newcommand{\Fc}{\mathtt{F}^c}
\newcommand{\Gg}{\mathtt{G}^g}
\newcommand{\Gc}{\mathtt{G}^c}

\newcommand{\Pg}{\mathtt{P}^g}
\newcommand{\Pc}{\mathtt{P}^c}

\mathchardef\mhyphen="2D

\newcommand{\Hg}{\mathtt{H}^g}
\newcommand{\Hc}{\mathtt{H}^c}

\newcommand{\Ug}{\mathbin{\mathtt{U}^g}}
\newcommand{\Uc}{\mathbin{\mathtt{U}^c}}
\newcommand{\Sg}{\mathbin{\mathtt{S}^g}}
\newcommand{\Sc}{\mathbin{\mathtt{S}^c}}

\usepackage{tikz}
\usetikzlibrary{calc,matrix,chains,positioning,arrows,fit,automata,backgrounds}
\usetikzlibrary{shapes,shapes.multipart,decorations,decorations.pathmorphing,decorations.pathreplacing}
\tikzstyle{dotstyle}=[fill=black,circle,minimum size=3pt,inner sep=0pt]
\tikzstyle{automatapath}=[->,decorate,decoration={snake,segment length=5mm}]
\tikzstyle{curlybracket}=[decorate,decoration={brace,amplitude=7pt}]

\begin{document}

\title{$\mu$-calculus on data words}

\author{Thomas Colcombet and Amaldev Manuel \thanks{The research leading to these results has received funding  from the
European
Union's Seventh Framework Programme (FP7/2007-2013) under grant agreement n° 259454.}\\
{\small LIAFA, Universit\'{e} Paris-Diderot}\\
{\small \{thomas.colcombet, amal\}@liafa.univ-paris-diderot.fr}\\
}

\maketitle
  
\begin{abstract}
We study the decidability and expressiveness issues of $\mu$-calculus on data words and
data $\omega$-words. It is shown that the full logic as well as the fragment which uses only the least fixpoints are undecidable, while the fragment
containing only greatest fixpoints is decidable. Two subclasses, namely BMA and BR, obtained by limiting the compositions of formulas and their
automata characterizations are exhibited. Furthermore, Data-LTL and two-variable first-order logic are expressed as unary
alternation-free fragment of BMA. Finally basic inclusions of the fragments are discussed. 
\end{abstract}



\nopagebreak


\section{Introduction}

Data words are words over the alphabet $\Sigma \times \D$ where $\Sigma$ is a finite set of {\em letters} and $\D$ is an infinite domain of {\em data values}.
Data languages are sets of such words that are invariant under permutations
of data values. This invariance reflects the fact that only properties involving
the equality of data values can be expressed in this formalism.
Typical data languages are:
\begin{itemize}
\item The first and the last data values are the same, 
\item the first data value appears a second time,
\item some data value appears twice, or its complement, all data values are different,
\item every data value at an odd position is the same as the following data value, etc\dots
\end{itemize}
This model of languages arises naturally in several contexts, such as databases or verification.

It is very desirable to extend language theory to this richer setting.
In particular, a very motivating goal is to be able to describe what should
be the natural notion of ``regular data languages''. Indeed, regular languages
of classical words form the most robust notion of language, and are basic blocks
used in the construction of many advanced results. 

However, what should be a ``regular data language''? It is not so clear since the situation
is much more complex than for word languages.
Many different formalisms
can be used for describing data languages, that can all be considered as natural extensions of regularity.
Most of them have distinct expressiveness, have different closure properties,
and different decidability status. For this reason, it is absolutely
unclear which model should be granted the name ``regular''.
Furthermore, there is no hope to find a larger class of data languages that would encompass all these
particular classes while retaining good effectiveness and decidability properties.

 Let us cite some of the most important formalisms:
\begin{asparadesc}
\item[Deterministic automata] The first and most used one is deterministic finite memory automata \cite{KaminskiFrancez94}.
	These are deterministic finite state automata that have several registers that can be used to store data values,
	and can be compared with the data value currently read. An even more ``deterministic model'' is the one of data
	monoid, which is the ``monoid variant'' of these automata \cite{Bojanczyk11data}. These models are naturally
	closed under union, intersection, and thanks to their deterministic nature, also under complement.
	Furthermore emptiness and universality are decidable properties. In exchange, these models are not very
	expressive, and deterministic finite memory automata are not closed under mirroring. Data languages recognized
	by data monoids have the same properties, and are further closed under mirroring, but these are even less expressive.
\item[Non-deterministic automata] These are the non-determini-stic counterpart of the above deterministic
model \cite{KaminskiFrancez94,KaminskiZeitlin08}. These are significantly more
	expressive, and closed under mirroring. In exchange the closure under complement and the decidability of universality are lost.
\item[Logical formalisms] The natural way to define a data language by means of a logical formula
	is to allow the use of a binary relation ``$x\sim y$''
	which signifies ``the data value at position $x$ and the data value at position $y$ are the same''. The problem is that allowing 
	this relation in first-order logic ($\fo$) immediately entails the undecidability of satisfiability. The situation is better for $\fotwo$
	(the restriction of $\fo$ to two variables, that can be reused). This class is closed under intersection, union, complement, mirroring,
	and its satisfiability is decidable \cite{BojanczykDMSS11}. The expressiveness of this model is incomparable
	to any of the above formalisms. The decidability is achieved by reduction to data automata (see below).
	By restricting the use of the new predicate ``$\sim$'' it is possible to regain decidability for logics richer than $\fotwo$. Typically
	suitable guards controlling the use of ``$\sim$'' makes monadic second-order logic equi-expressive with data-monoids	
	\cite{MFCS11:colcombet-ley-puppis}.
\item[Alternating one-way automata with one register] (of the same expressiveness as ``$\mu$-cal\-cu\-lus with free\-ze'') 
corresponds to the natural
	one-register alternating variant of the above finite memory automata \cite{DL08,JurdzinskiL11}.
	These are closed under union, intersection, complement, and emptiness and universality are decidable (but undecidable on data
$\omega$-words).
	This formalism is incomparable with all the others described in this paper.
\item[Walking models] A data word can be seen as a data structure consisting of positions, and navigational edges
	defined as follows.
	Each position is connected to its immediate successor, immediate predecessor, as well as its class successor
	and class predecessor (the class of a position is the set of positions that share the same data value; thus the class successor
	is the leftmost position to the right of the current position that carries the same data value, if it exists; the class predecessor
	is similar).
	This gives rise to models of acceptors that walk in this model, using basic commands such as ``advance to successor''
	or ``advance to the class successor''. 
	Data LTL is a member of this class \cite{KaraSZ10}. It is a variant of linear time logic (LTL) where
	operations {\em until}, {\em next}, {\em previous} and {\em since} exist in two variants, over the word and over the class. 
	An automaton mechanism, called data walking automaton (DWA), which walks on the data word is proposed in \cite{PMM}. 
	It turns out that for this model the emptiness and inclusion problems are decidable but they are strictly less expressive than data automata.
	They are not closed under projection and their closure under complementation is an open problem. The deterministic subclass, however, is closed
	under all Boolean operations.
	
\item[Data automata] Data automata were introduced  for deciding $\fotwo$ \cite{BojanczykDMSS11}.
	These are non-deterministic forms of automata, the emptiness of which is by reduction
	to reachability in petri-nets (we will encounter more precisely this model in the paper). These are closed under union and
	intersection, but not under complementation. 
\end{asparadesc}
\subsection*{Contributions}

Our contribution falls in the category of ``walking models''. In fact, we consider the most natural notion
of walking model: $\mu$-calculus. The modalities in the logic allow a formula to
refer to the predecessor, the successor, as well as the class predecessor and the class successor.
The $\mu$-calculus is well known to subsume many other formalisms, and in particular LTL. We study the properties of this
logic.

We show first that the satisfiability of the $\mu$-calculus is undecidable
(Theorem~\ref{theorem:undecidability}). For this reason, we restrict it to the $\nu$-fragment,
which is the fragment of the logic in which it is not allowed to use the least fix points.
We show that every data language definable in the $\nu$-fragment is effectively recognized by
a data automaton (Theorem~\ref{theorem:nu-to-data}). 
Furthermore, the class of languages definable in the $\nu$-fragment is naturally
closed under union, intersection, and mirroring. However it lacks closure under complement.
The previous statements carry over to the case of data $\omega$-words as well.

The second part of our analysis concerns the description of two subclasses of this logic that furthermore enjoy
the closure under complementation while retaining decidability and closure under union and intersection.
The  first such subclass is called the ``bounded reversal fragment'' (BR).
In this fragment, a fixpoint formula is allowed to switch between future modalities (``successor'' and ``class
successor'') and past modalities (``predecessor'' and ``class
predecessor'') only a bounded number of times. This class is naturally closed under complement,
and we show that it is strictly less expressive than the $\nu$-fragment (Theorem~\ref{theorem:br-to-nu}).
The decidability of BR is inherited from its inclusion in the $\nu$-fragment.
The second fragment we consider is the ``bounded mode alternation fragment'' (BMA).
In this fragment, a fixpoint formula is allowed to switch between global modalities (``successor'' and ``predecessor'')
and class modalities (``class successor'' and ``class predecessor'') only a bounded number of times. 
We show that BMA is contained in BR (Theorems~\ref{theorem:bma-to-br}).
We also show that BMA contains Data LTL, which itself contains~$\fotwo$ (Theorem~\ref{theorem:udltl-fo2}). In fact we show that
Data LTL with only unary modalities and $\fotwo$ are equivalent.

For the data $\omega$-word case we show that BMA is contained in data automata whereas it is not contained in the 
$\nu$-fragment. We do not treat the BR fragment for data $\omega$-words in this paper. Figures \ref{summary1}
and \ref{summary2} summarize our results. { Since all our fragments subsume $\fotwo$ their
satisfiability problems are equivalent (under elementary reductions) to reachability in vector addition systems}.

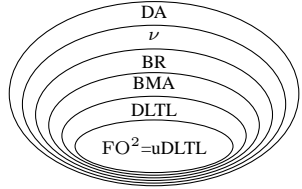
\begin{figure}
\centering\begin{tikzpicture}[scale=.7]
 \draw(0,1) ellipse (2.75cm and 1.75cm);
\draw(0,.80) ellipse (2.5cm and 1.5cm);
 \draw(0,.60) ellipse (2.25cm and 1.25cm);
\draw(0,.40) ellipse (2cm and 1cm);
 \draw(0,.20) ellipse (1.75cm and .75cm);
\draw(0,0) ellipse (1.5cm and .5cm);
  {\scriptsize
\draw (0,0) node {$\fotwo$=uDLTL};  
\draw (0,.70) node {DLTL}  ;
\draw (0,1.20) node {BMA}  ;
\draw (0,1.60) node {BR}  ;
\draw (0,2.10) node {$\nu$}  ;
\draw (0,2.55) node {DA}  ;}
\end{tikzpicture}
\caption{Decidable fragments of $\mu$-calculus on data words}
\label{summary1}
\end{figure}

\begin{figure}
\centering\begin{tikzpicture}[scale=.7]
 \draw(0,.60) ellipse (2.25cm and 1.25cm);
\draw(0,.40) ellipse (2cm and 1cm);
 \draw(0,.20) ellipse (1.75cm and .75cm);
\draw(0,0) ellipse (1.5cm and .5cm);
\draw[rotate=30](1.3,0) ellipse (.95cm and .4cm);
{\scriptsize
\draw (0,0) node {$\fotwo$=uDLTL};  
\draw (0,.70) node {DLTL}  ;
\draw (0,1.20) node {BMA}  ;
\draw (1.8,1.10) node {$\nu$}  ;
\draw (0,1.600) node {DA}  ;}
 \end{tikzpicture}
\caption{Decidable fragments of $\mu$-calculus on data $\omega$-words}
\label{summary2}
\end{figure}
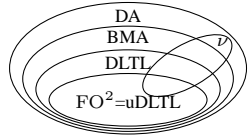

\section{Preliminaries}
$\N=\{1,2,\ldots\}$ is the set of natural numbers and $+1=\{(1,2),(2,3),\ldots\}$ denotes the successor relation on $\N$. Let $\N_0 = \N \cup \{0\}$.
Denote by $[n]$ the set $\{1,\ldots,n\}$. Let $A$ be an alphabet. A word over $A$ 
is a finite sequence of letters from $A$. An $\omega$-word over $A$ is a sequence of length $\omega$ of letters from $A$.

\subsection{Data words, data $\omega$-words and data languages}
Fix a finite alphabet $\Sigma$ of \intro{letters} and an infinite set $\D$ (usually $\N$) 
of \intro{data values}.
\intro{Data words} are finite words over the alphabet $\Sigma \times \D$. 
\intro{Data $\omega$-words} are $\omega$-words over the alphabet $\Sigma \times \D$. 

Given a data word $w=(a_1, d_1)\ldots (a_n, d_n)$ (resp. data $\omega$-word $w=(a_1, d_1)(a_2,d_2)\ldots$) the
\intro{string projection} of $w$, denoted by
$\gsp{w}$, is the word $a_1\ldots a_n$ (resp. the $\omega$-word $a_1a_2\ldots$).
Similarly the
\intro{data projection} of $w$, denoted by
$\gdp{w}$, is the word $d_1\ldots d_n$ (resp. the $\omega$-word $d_1d_2\ldots$).

The data values impose a natural equivalence relation $\sim$ on the
positions of the data word (resp. data $\omega$-word), namely
$i\sim j$ if $d_i=d_j$.
For a position $i$ in $w$, the \intro{class} of $i$ is the set of all positions sharing the 
same data value as $i$. A subset $S$ of positions of $w$ is a class if it is a maximal set of 
positions sharing the same data value. Given a finite class $S=\{i_1,\ldots, i_n\}$ 
(resp. infinite class $S=\{i_1,i_2,\ldots\}$) the \intro{class projection} corresponding to 
$S$, denoted as $\gsp{w|_S}$, is the finite word $a_{i_1}a_{i_2}\ldots a_{i_n}$ (resp. the $\omega$-word 
$a_{i_1}a_{i_2}\ldots$). The class projections corresponding to each class of $w$ are collectively
called the class projections of $w$.
The set of all classes in $w$, as mentioned already, forms a partition of all the positions in the word. 
For a position $i$, the position $i+1$ is the \intro{successor} of $i$ and the position $i-1$ is the \intro{predecessor} of $i$.
We say the position $j$ is the \intro{class successor} of $i$ or $i$ is the \intro{class predecessor} of $j$, denoted as $i \plusc1= j$ or
$j\minusc1=i$, if $j$ is
the least position after position $i$ having 
the same data value.  

We denote by $\M$ the finite alphabet $\{\P,\neg \P\} \times \{\S,\neg \S\}$ called the \intro{marking alphabet}.
Given a position $i$ the \intro{1-type} (or simply \intro{type}) $\tp{i}\in \M$ of $i$ is defined as follows; 
$\tp{i}=(p,s)$ where $s = \S$ if $i$ is not the last position (if it exists) and $i+1 = i\plusc1$, and 
$\neg \S$ otherwise. Similarly $p = \P$ if $i$ is not the first position and 
$i-1 = i\minusc 1$, and $\neg \P$ otherwise. The \intro{marked string projection} of $w$,
denoted as $\msp{w}$, is the word $(a_1, \tp{1}) \ldots (a_n, \tp{n})$ (resp. the $\omega$-word
$(a_1, \tp{1}) (a_2, \tp{2})\ldots$) over the alphabet $\Sigma \times \M$. 

Given a finite class $S=\{i_1,\ldots, i_n\}$ 
(resp. infinite class $S=\{i_1,i_2,\ldots\}$) the \intro{marked class projection} corresponding to 
$S$, denoted as $\msp{w|_S}$, is the finite word $(a_{i_1}, \tp{i_1}) (a_{i_2},\tp{i_2}), \ldots (a_{i_n},\tp{i_n})$ (resp. the $\omega$-word 
$(a_{i_1},\tp{i_1})(a_{i_2},\tp{i_2})\ldots$). The marked class projections corresponding to each class of $w$ are collectively
called the marked class projections of $w$.

Let $\pi:\D \rightarrow \D$ be a permutation of $\D$. 
The permutation of $w$ under $\pi$ is 
defined to be the data word $(a_1,\pi(d_1))\ldots (a_n,\pi(d_n))$ (resp. the data 
$\omega$-word $(a_1,\pi(d_1))(a_2,\pi(d_2))\ldots$). A language of
data words $L \subseteq \(\Sigma\times \D\)^*$ is a set of data words such that
for every data word $w$ and every permutation $\pi$ of $\D$, $w\in L$ if and only 
if $\pi(w) \in L$. Similarly a language of
data $\omega$-words $L \subseteq \(\Sigma\times \D\)^\omega$ is a set of data $\omega$-words 
such that for every data $\omega$-word $w$ and every permutation $\pi$ of $\D$, $w\in L$ 
if and only if $\pi(w) \in L$. A consequence of such an invariance is that 
as far as a model of computation on data words which defines a data language is concerned 
individual data values are not important but only the relationship they induce on the positions
(namely the class relations). This is formalized as follows.
To each $w$ we associate the graph $G_w=\(D, \ell, +1, \plusc1\)$ where $D$ is 
the set of all positions in $w$ (i.e. $[n]$ if $w$ is finite and $\omega$ otherwise), $\ell:\Sigma \rightarrow 2^D$ is the labelling function
defined as $\ell(a) = \{i \mid a_i=a\}$, $+1$ is the successor relation on $\N$
restricted to $D$, and $\plusc1$ is the class successor relation of $w$. Henceforth
we will identify a data word with its graph.

Given a subset $S$ of $D$ we define
\begin{align*}
S-1 &= \{i-1 \in D \mid i \in S\} &S\minusc 1 &=\{i\minusc 1 \in D \mid i \in S\}\\
S+1 &=\{i+1 \in D \mid i \in S\}  &S\plusc 1 &=\{i\plusc 1 \in D \mid i \in S\}\\ 
\end{align*}

\begin{example}
The example shows a finite data word and its corresponding graph. Dotted and thick arrows denote the successor and class successor 
functions respectively.

\begin{center}	
	\begin{tikzpicture}[scale=.8]
%
		\node (a1) at (1,0) {$\begin{array}{l}a\\ 1 \end{array}$};
		\node (a2) at (2.5,0) {$\begin{array}{l}b\\ 2 \end{array}$};
		\node (a3) at (4,0) {$\begin{array}{l}a\\ 2 \end{array}$};
		\node (a4) at (5.5,0) {$\begin{array}{l}a\\ 1 \end{array}$};
		\node (a5) at (7,0) {$\begin{array}{l}b\\ 3 \end{array}$};
		\node (a6) at (8.5,0) {$\begin{array}{l}a\\ 1 \end{array}$};
		\node (a7) at (10,0) {$\begin{array}{l}b\\ 2 \end{array}$};
		
		\draw[->,dotted, thick] (a1) to (a2);
		\draw[->,dotted,thick] (a2) to (a3);
		\draw[->,dotted,thick] (a3) to (a4);
		\draw[->,dotted,thick] (a4) to (a5);
		\draw[->,dotted,thick] (a5) to (a6);
		\draw[->,dotted,thick] (a6) to (a7);
		
		\draw[->,thick] (a1) to[out=25,in=155] (a4);
		\draw[->,thick] (a4) to[out=25,in=155] (a6);
	
		\draw[->,thick] (a2) to[out=25,in=155] (a3);
		\draw[->,thick] (a3) to[out=25,in=155] (a7);

	\end{tikzpicture}
\end{center}     
The first position has type $(\neg \P, \neg \S)$, while the second position has type $(\neg \P, \S)$.
\end{example}

\intro{Two-variable first order logic} 
(in short $\fotwo$) over data words (resp. data $\omega$-words) 
is the first order logic with two variables $x$ and $y$ with predicates $a(x)$ (\textit{the position is labelled by $a$}), $x=y$, $x<y$, $x+1=y$,
$x\plusc1=y$, and $x<^c y$ (where $<^c$ is the transitive closure of $\plusc1$). Note that
$x\sim y$ is definable in $\fotwo$ in terms of $x<^c y$.
\intro{Existential MSO with two-variable kernel} (in short $\emsotwo$) is the set of all formulas of the form 
$\exists X_1\ldots \exists X_k~ \varphi$ where $\varphi$ is a $\fotwo$ formula over data words.

\subsection{Data automata and Data $\omega$-automata}

A \intro{data automaton} $\A=(B, \Sigma', C)$ is a composite automaton consisting of a non-deterministic letter-to-letter finite
state transducer $B$ with input alphabet $\Sigma \times \M$ and output alphabet $\Sigma'$, and a finite state
automaton $C$ with input alphabet $\Sigma'$. 
On a data word $w$ the automaton $\A$ work as follows. 
The transducer $B$ runs over the word $\msp{w}$ and outputs a string $v'\in \Sigma'^*$ if 
the run succeeds. Let $w'$ be the unique data word such that $\gsp{w'}=v'$ and $\gdp{w'}=\gdp{w}$. 
(Note that the fact that the transducer is length preserving is crucial here).
For each class $S$ in $w'$, the automaton $C$ runs over 
the word $\gsp{w'|_S}$. The automaton $\A$ accepts $w$ if all the runs are successful. 

A \intro{data $\omega$-automaton} (abbreviated as DA) $\A=(B, \Sigma', C, C_\omega)$ is a composite 
automaton consisting of a non-deterministic letter-to-letter finite
state \intro{B\"{u}chi} transducer $B$ with input alphabet $\Sigma \times \M$ and output alphabet $\Sigma'$, a
finite state automaton $C$ with input alphabet $\Sigma'$ and a finite state B\"{u}chi automaton
$C_\omega$ over the alphabet $\Sigma'$. On a data $\omega$-word $w$ the automaton $\A$ work as follows. 
The transducer $B$ runs over the $\omega$-word $\msp{w}$ and outputs a string $v'\in \Sigma'^\omega$ if 
the run succeeds. Let $w'$ be the unique data $\omega$-word such that $\gsp{w'}=v'$ and $\gdp{w'}=\gdp{w}$. 
For each finite class $S$ in $w'$, the automaton $C$ runs over 
the word $\gsp{w'|_S}$ and for each infinite class $S$ in $w'$, the automaton $C_\omega$ runs over 
the $\omega$-word $\gsp{w'|_S}$. The automaton $\A$ accepts $w$ if all the runs are successful. 

The most remarkable thing about data automata is that,
\begin{theorem}[\cite{BojanczykDMSS11}]
 Emptiness problem for data automata and data $\omega$-automata is elementarily equivalent to
 the reachability problem for vector addition systems and hence is decidable. 
\end{theorem}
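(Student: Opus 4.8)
Since the statement asserts a two-way elementary equivalence together with a decidability consequence, the plan is to produce elementary reductions in each direction and then invoke the known decidability of reachability for vector addition systems (Petri nets).

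For the reduction from emptiness of a data automaton $\A=(B,\Sigma',C)$ to reachability, the idea is to simulate the composite behaviour of $\A$ by a multicounter machine $M$ (equivalently a vector addition system with a designated target marking) whose counters are indexed by the states of the class automaton $C$. The machine $M$ scans a single sequence of positions from left to right, guessing simultaneously the marked string projection $\msp{w}$ and the output letter in $\Sigma'$ produced by $B$, and keeping the control state of $B$ in its finite memory. The counter $c_q$ records the number of currently ``open'' classes whose run of $C$ has reached state $q$ and whose next element still lies strictly in the future; the run of the unique class that continues into the immediately following position is instead kept in the finite control, so that the chain traced through consecutive same-class positions is followed deterministically. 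At each position with guessed output letter $b$, the machine either opens a fresh class (starting $C$ from its initial state on $b$), resumes the class carried in the control (when the type of the position asserts that its class predecessor is the immediate predecessor), or resumes some previously suspended class by choosing a state $q$ with $c_q>0$, decrementing $c_q$ and incrementing $c_{\delta(q,b)}$; symmetrically it decides whether the class ends here (requiring a final state of $C$) or is suspended again. Acceptance of $M$ is the requirement that $B$ ends in an accepting state and that every counter equals zero, which is precisely a reachability condition.

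The local bookkeeping is forced to be consistent by the marking alphabet $\M$: a position carries $\P$ exactly when the preceding position carries $\S$, and this can be checked with finite memory, so the chains traced through adjacent positions are unambiguous. The nondeterministic choices --- whether a $\neg\P$ position begins a new class or resumes a suspended one, and whether a $\neg\S$ position terminates its class or suspends it --- are validated only globally, through the final requirement that all counters vanish; this is exactly why the target is reachability (an exact zero test at the end) rather than mere coverability. For the converse reduction, reachability in a vector addition system is encoded as emptiness of a data automaton by representing each token by a class consisting of a matched pair of positions, an increment and the decrement that consumes it: the class automaton $C$ checks that each class has this two-position shape and names the right counter, while the transducer together with the string consistency enforces the sequence of control transitions and that each decrement is matched by an earlier increment, so that the target zero marking corresponds to every class being closed. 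Both reductions are elementary, and since reachability for vector addition systems is decidable the emptiness of data automata (and, with B\"uchi conditions replacing the final zero test and infinite classes handled by $\omega$-runs of $C_\omega$, of data $\omega$-automata) follows.

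The step I expect to be the main obstacle is the correctness of the forward reduction, specifically the decoding argument: one must show that every accepting run of $M$ can be turned into a genuine data word over $\Sigma\times\D$ whose induced class structure and marking realise all the guesses. Not every sequence of types in $\M^{*}$ is realisable as the $\msp{w}$ of an actual data word, and not every interleaving of open classes respects the class successor relation $\plusc1$; proving that the local type consistency together with the exact zero test is both necessary and sufficient for such a realisation --- and that fresh data values can always be assigned so that suspended classes are resumed in the correct left-to-right order --- is the technical heart of the equivalence.
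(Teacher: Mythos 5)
Your statement is one the paper does not prove at all: it is imported verbatim from \cite{BojanczykDMSS11}, so the only meaningful comparison is against the argument in that reference. For finite data words, your reduction is essentially that argument. The multicounter machine with counters indexed by states of the class automaton $C$, the convention that the one class continuing into the immediately next position (forced by the $\S$/$\P$ markings) lives in the finite control while all other open classes are suspended into counters, the nondeterministic suspend/resume bookkeeping, and the exact zero test at the end --- which is precisely what makes the target reachability rather than coverability --- is the standard construction, and your converse encoding of a vector addition system by two-position classes (increment matched with its consuming decrement, non-negativity coming for free because the class automaton reads each class in word order) is also the standard one. You are likewise right to isolate realizability as the technical heart, and that part does go through: local consistency of the guessed types ($\P$ at a position iff $\S$ at its predecessor) together with the final zero test lets one assign fresh data values to opened classes and recover a genuine data word inducing the guessed marked string projection.

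The genuine gap is the $\omega$-word case, which you dispose of in a parenthesis that does not work as stated. Over an infinite run there is no ``end'' at which to test that all counters vanish, and the acceptance condition of a data $\omega$-automaton is not a B\"uchi condition on the counter machine: it demands that \emph{every} finite class that is ever suspended is eventually resumed and closed in an accepting state of $C$, and that every infinite class traces an accepting run of $C_\omega$. The first requirement is a fairness condition on tokens (``every increment is eventually matched by a decrement''), which cannot be phrased as a B\"uchi condition on control states of a vector addition system, so known decidability results for B\"uchi VASS do not apply off the shelf; the second requires distinguishing, inside the counter abstraction, classes that are resumed infinitely often from classes that are suspended forever. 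In \cite{BojanczykDMSS11} the $\omega$-case is handled by a separate and considerably more involved argument rather than by decorating the finite-word reduction, so as written your proposal establishes the finite-word half of the theorem but not the data $\omega$-automaton half.
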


It is a consequence of Hanf's theorem (for two-variable logic) that data automata and data $\omega$-automata are equivalent
to $\emsotwo$ with predicates $a(x)$, $x=y$, $x+1=y$, and $x\plusc 1=y$. However with a more
intricate analysis it can be shown that,

\begin{theorem}[\cite{BojanczykDMSS11}]
\label{theorem:dataemso2}
Data automata and data $\omega$-automata are equivalent to $\emsotwo$ over data words. 
\end{theorem}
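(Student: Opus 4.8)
The plan is to establish the two inclusions separately, and to spend essentially all the effort on one of them. The inclusion of data (and data $\omega$-) automata into $\emsotwo$ is free: by the observation preceding the statement, Hanf's theorem already gives that these automata coincide with $\emsotwo$ over the \emph{local} signature $\{a(x),\,x=y,\,x+1=y,\,x\plusc1=y\}$, and this fragment is literally contained in the $\emsotwo$ over the richer signature $\{a(x),\,x=y,\,x<y,\,x+1=y,\,x\plusc1=y,\,x<^c y\}$ considered here. So the whole content is the converse: compiling an arbitrary $\emsotwo$ sentence over the full signature into an equivalent automaton $\A=(B,\Sigma',C)$ (resp.\ $(B,\Sigma',C,C_\omega)$).

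First I would normalise the two-variable kernel to Scott normal form, writing it (after adding boundedly many fresh monadic symbols) as
\[
\forall x\,\forall y\;\chi(x,y)\ \wedge\ \bigwedge_{i\in I}\forall x\,\exists y\;\psi_i(x,y),
\]
with $\chi$ and the $\psi_i$ quantifier-free. The fresh symbols, together with the existential second-order prefix $\exists X_1\cdots\exists X_k$ already present, are folded into an enriched letter: a position now records its original letter, its $\M$-type, and its guessed membership in each monadic predicate. Producing and committing to such an enriched letter is exactly the role of the transducer output alphabet $\Sigma'$, so from here on all monadic information may be treated as part of the letter, and only the first-order constraints remain to be checked. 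Next I would eliminate the order atoms. Any quantifier-free $\psi_i(x,y)$ pins down the mutual position of $x$ and $y$ up to labels, and this falls into finitely many cases: $x=y$; same class with $x<^c y$ or $y<^c x$ (possibly refined to $x\plusc1=y$ or $y\plusc1=x$); or different classes with $x<y$ or $y<x$ (possibly refined by the string successor). Accordingly each requirement $\forall x\,\exists y\,\psi_i$ splits into a finite disjunction of demands of the form ``every position of a given enriched type has, in a prescribed relative position, a witness of a prescribed enriched type''. The \emph{same-class} demands are regular properties of the marked class projection $\msp{w|_S}$ and are handed to $C$ (to $C_\omega$ on infinite classes); the \emph{pure string-order} demands that ignore class membership are regular properties of $\msp{w}$ and are handed to $B$, which reads left to right and can guess-and-verify at each position the set of enriched letters occurring ahead and the set occurring behind.

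The hard part will be the remaining, genuinely data-specific demand, which couples string order with class \emph{in}equality: ``there exists $y$ with $x<y$, $x\not\sim y$ and a prescribed enriched letter $b$'' (and its past mirror). Neither component handles it directly---$C$ sees only one class, and $B$ cannot subtract off $x$'s own class. My plan here rests on the observation that such a demand can fail at $x$ only when all $b$-positions ahead of $x$ lie in a single class: as soon as two distinct classes carry a $b$ ahead of $x$, one of them differs from the class of $x$ and supplies a witness. I would therefore guess, as part of $\Sigma'$, a class-closed marking $Z$ (its class-closure, i.e.\ that each class is wholly inside or wholly outside $Z$, is checked by $C$) meant to name that exceptional single class on the suffix where it exists; $B$ then checks the regular string property ``no $b$ occurs ahead outside $Z$'', while the local test $x\notin Z$ certifies that the class of $x$ differs from the marked one. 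Combining the $B$-flag ``some $b$ occurs ahead'' with these $Z$-based tests decides the demand locally.

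The delicate point---and the real crux of the argument---is that a single $Z$ per letter will not serve every $x$ at once, since the ``single class ahead'' is meaningful only on a suffix and degenerates once two $b$-classes appear ahead; so one actually guesses a monotone threshold separating the easy region (at least two $b$-classes ahead, witness automatic) from the exceptional suffix, and has $B$ validate the threshold against $Z$. Finally the universal part $\chi$ contributes a finite set of forbidden enriched pair-types, which decomposes into a string part enforced by $B$ and a class part enforced by $C$ in the same manner. The $\omega$-word case is identical once the acceptance conditions of $B$ and of the infinite-class automaton $C_\omega$ are taken to be B\"uchi.
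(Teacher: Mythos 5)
You should first be aware that the paper itself contains \emph{no} proof of this statement: it is imported verbatim from \cite{BojanczykDMSS11} (``with a more intricate analysis it can be shown that\dots''), so the only meaningful comparison is against the argument of that cited source, which your proposal reconstructs. For finite data words your reconstruction is essentially the right one: the easy inclusion via the Hanf remark and signature containment; Scott normal form with fresh monadic predicates folded into $\Sigma'$; the order-type case split; same-class demands dispatched to $C$ and pure string-order demands to $B$; and, for the crux constraints $\forall x\exists y\,(x<y \wedge x\not\sim y\wedge b(y))$, the class-closed marking $Z$ plus a threshold. Your soundness/completeness accounting for that gadget is correct: class-closure of $Z$ (checked by $C$) already makes the local test ``some $b$ ahead $\wedge\ x\notin Z$'' sound, while completeness comes from the good guess $Z=$ class of the last $b$-position and threshold $=$ last $b$-position outside $Z$.

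There are, however, two genuine gaps. First, the $\omega$-word case is \emph{not} ``identical once the acceptance conditions are B\"uchi'': your entire crux construction is anchored on ``the last $b$-position'' and ``the last $b$-position outside $Z$'', which need not exist in a data $\omega$-word. The case analysis must be redone: either every position has $b$'s ahead in at least two classes forever (which happens when infinitely many classes contain a $b$, or at least two classes contain infinitely many $b$'s) --- and certifying \emph{this} requires new B\"uchi-style gadgets, e.g.\ two disjoint class-closed markings each hit by $b$'s infinitely often, or marking the first $b$ of each class and requiring infinitely many marks --- or else a unique class contains all but finitely many $b$'s, and it, rather than the class of a (possibly nonexistent) last $b$, plays the role of $Z$; on the exceptional suffix the test ``some $b$ ahead'' also changes character. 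None of this is deep, but it is additional case analysis, not a change of acceptance condition. Second, smaller but real: for the $\forall x\forall y$ part, the forbidden cross-class pair types ($a$ at $x$, $b$ at $y$, $x<y$, $x\not\sim y$) are \emph{not} handled soundly ``in the same manner'' by a merely class-closed $Z$. There the certificate must conclude $x\sim y$ from $x,y\in Z$, which is valid only if $Z$ is a \emph{single} class; enforcing that needs an extra gadget (guess a representative position $z_0$, let $C$ check that the class of $z_0$ lies inside $Z$ and every other class avoids $Z$). Your $\exists$-demands get away with class-closure because their test concludes $x\not\sim y$ from $x\notin Z\ni y$; the asymmetry is easy to miss and your sketch papers over it.
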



\section{\mbox{$\mu$-Calculus on Data Words}}
\mlabel{section:mu-calculus}
In this section, we introduce $\mu$-calculus over data words and data $\omega$-words and establish the basic decidability results.

Let $\mathit{Prop}=\{p,q, \ldots\}$ and $\mathit{Var}=\{x, y, \ldots\}$ be countable sets of propositional variables and
fixpoint variables respectively. The $\mu$-calculus on data words is the set of all formulas $\varphi$ given by the following syntax.
\begin{align*}
 \mathtt{M} &:= \nextg \mid \nextc \mid  \prevg \mid \prevc\\
A &:=  p \in \mathit{Prop} \mid \S \mid \P \mid \firstc \mid \firstg \mid \lastc \mid \lastg\\
\varphi &:= x \mid A \mid \neg A \mid \mathtt{M}\,\varphi \mid \varphi \vee \varphi \mid \varphi \wedge  
\varphi \mid \mu x. \varphi \mid \nu x. \varphi
\end{align*}

Next we disclose the semantics; as usual, on a given structure each formula denotes the set of positions where it is true. The modality $\S$ is true
at a position $i$ if the successor and class successor of $i$ coincide. Similarly
$\P$ is true at $i$ if the predecessor and class predecessor of $i$ coincide. 
The modalities $\nextg \varphi$, $\nextc\varphi$, $\prevg\varphi$, $\prevc\varphi$ 
hold if $\varphi$ holds on the successor, class successor, predecessor and class predecessor positions respectively.

\begin{figure*}[ht]
\begin{align*}
\sem{x}_w &=  \ell(x) &  & \\
\sem{\firstg}_w &=  \{1\} &\sem{\lastg}_w &=  \{ n \in D \mid  \forall i \in D~ n \geq i \}  \\
\sem{\firstc}_w &=  \{i \mid \nexists j = i \minusc 1\} &\sem{\lastc}_w &= \{i \mid \nexists j = i \plusc 1\}  \\
\sem{p}_w &=  \ell(p) &\sem{\neg p}_w &=  D\setminus \ell(p)  \\
\sem{\nextg \varphi }_w	&= \sem{ \varphi }_w -1 & \sem{\nextc \varphi }_w &=  \sem{ \varphi }_w \minusc 1\\
\sem{\varphi_1 \wedge \varphi_2 }_w &=  \sem{ \varphi_1 }_w \cap \sem{ \varphi_2 }_w &
\sem{\varphi_1 \vee \varphi_2 }_w &=  \sem{ \varphi_1 }_w \cup \sem{ \varphi_2 }_w \\
\sem{\prevg \varphi }_w	&=  \sem{ \varphi }_w +1 &
	\sem{\prevc \varphi }_w	 &= \sem{ \varphi }_w \plusc 1\\
\sem{\P}_w &=  \{i ~\mid~ i-1 = i \minusc 1 \} &
\sem{\S}_w  &=  \{i ~\mid~ i+1 = i\plusc 1\}\\
\sem{\mu x. \varphi}_w 	&=  \bigcap \Big\{S \subseteq D~\mid~ \sem{ \varphi }_{w[\ell(x):=S]} \subseteq
S\Big\}&
\sem{\nu x. \varphi}_w	&=  \bigcup \Big\{S \subseteq D ~\mid~  S \subseteq \sem{\varphi}_{w[\ell(x):=S]}\Big\} 
\end{align*}
\caption{Semantics of $\mu$-calculus on a data word ($\omega$-word) $w=\(D, +1, \succclass, \ell\)$.\label{figure:semantics}}
\end{figure*}

Note that we allow negation only on atomic propositions. However it is possible to negate a formula in our logic.
For this we define the dual modalities $\dnextg$, $\dprevg$, $\dnextc$, $\dprevc$ of $\nextg, \prevg, \nextc, \prevc$ respectively  
and the following relationship holds between them. Take special note that below $\neg$ means set complement.
$$\begin{array}{llllll}
\nextg \varphi &\equiv& \neg \dnextg \neg \varphi & \nextc \varphi &\equiv& \neg \dnextc \neg \varphi\\
\prevg \varphi &\equiv& \neg \dprevg \neg \varphi & \prevc \varphi &\equiv& \neg \dprevc \neg \varphi\\
\end{array}$$

Since the class successor relation is functional (a relation $R$ is functional if for every $x$ in the domain of $R$ there is at most one $y$ 
such that $xRy$), on all positions $i$ with a class successor, the formula $\dnextc \varphi$ is true if and only if 
$\nextc \varphi$ is true. On the other hand on all positions $i$ which do not have a class successor, $\dnextc \varphi$ is true
while $\nextc \varphi$ is false. Hence in $\dnextc$ is equivalent to 
$$\dnextc \varphi \equiv \lastc \vee \nextc \varphi$$
Since all relations in our graph are functional, similar relationship holds between 
all our modalities and their duals i.e.,
\begin{align*}
\dnextg \varphi \equiv \lastg \vee \nextg \varphi\ ,\  \dprevg \varphi \equiv \firstg \vee \prevg \varphi\ ,\  \dprevc \varphi \equiv \firstc \vee
\prevc \varphi
\end{align*}
Coming to the fixpoint formulas, each formula
$\varphi(x)$ defines a function from sets of positions
to sets of positions which is furthermore monotone (since we do not allow negation of variables). Hence by Knaster-Tarski theorem (which says that
{\em fixpoints of a monotone function on a complete lattice form a complete lattice}) it has 
fixpoints. In particular it has a least fixpoint which is intersection of all pre-fixpoints (a set of positions $S$ such that $\varphi(S) \subseteq
S$)
and a greatest fixpoint which is the union of all post-fixpoints (a set of positions $S$ such that $S \subseteq \varphi(S)$). We define 
the denotation of $\mu x. \varphi(x)$ and $\nu x. \varphi(x)$ to be the least and greatest fixpoints of $\varphi(x)$. Finally we note that the
following holds:
$$\mu x. \varphi(x) \equiv \neg \nu x. \neg \varphi(\neg x)\ .$$
The formal semantics $\sem\varphi_w$ of a formula $\varphi$ over a data word~$w$  is described in Figure~\ref{figure:semantics}.

To negate a formula $\varphi$ we take the dual of $\varphi$;
this means exchanging in the formula $\wedge$ and $\vee$, $\mu$ and $\nu$,
$p$ and $\neg p$, and all the modalities with their dual. This allows us to talk of 
$\neg \varphi$ even when $\varphi$ is not atomic, so far as the particular fragment $\varphi$ is in has 
all the necessary fixpoint operators and modalities to take the dual.

Next we lay out some terminology and abbreviations which we will use in the subsequent sections.
Let $\lambda$ denote either $\mu$ or $\nu$. Every occurrence of a fixpoint variable $x$ in a subformula $\lambda x. \psi$ of a formula is called
\intro{bound}. 
All other occurrences of $x$ are called \intro{free}.
A formula is called a \intro{sentence} if all the fixpoint variables in $\varphi$ are bound.
If $\varphi(x_1, \ldots, x_n)$ is a formula with free variables $x_1,\ldots, x_n$, then by $\varphi(\psi_1, \ldots, \psi_n)$ we mean the formula
obtained by substituting $\psi_i$ for each $x_i$ in $\varphi$.  
As usual the bound variables of $\varphi(x_1,\ldots, x_n)$ may require a renaming to avoid the capture of the free variables of $\psi_i$'s.
For a sentence 
$\varphi$ and a position $i$ in the word $w$, we denote by $w,i\models \varphi$ if $i\in\sem{\varphi}_w$.
The notation $w\models\varphi$ abbreviates the case when $i=1$.
The \intro{data language} of a sentence $\varphi$ is the set of data words $w$ such that $w\models\varphi$,
while the \intro{data $\omega$-language} of a sentence $\varphi$ is the set of data $\omega$-words $w$ such that $w\models\varphi$,

By $\mu$-fragment we mean the subset of $\mu$-calculus
which uses only $\mu$ fixpoints. Similarly $\nu$-fragment stands for the subset which uses only $\nu$-fixpoints. 

\begin{example}[Temporal modalities]
\label{modalityexample}
 An example of a formula would be $\varphi\Ug\psi$
which holds if $\psi$ holds in the future, and $\varphi$ holds in between.
This can be implemented as $\mu x.\psi\vee(\varphi\wedge\nextg x)$
The formula  $\varphi\Uc\psi=\mu x.\psi\vee(\varphi\wedge\nextc x)$
is similar, but for the fact that it refers only to the class of the current position.
The formula $\Fg\varphi$ abbreviates $\top\Ug\varphi$, and its dual
is $\Gg\varphi=\neg\Fg\neg\varphi$.
The constructs $\Sg$, $\Sc$, $\Pg$, $\Pc$, $\Hg$ and $\Hc$,
are defined analogously, using past modalities, and correspond
respectively to $\Ug$, $\Uc$, $\Fg$, $\Fc$, $\Gg$ and $\Gc$.
For instance, $\Fc\Pc\varphi$ expresses that there is a position
in the class that satisfies $\varphi$
and $\Fc\Pc(\varphi\wedge\dnextc\Gc\neg\varphi\wedge\dprevc\Hc\neg\varphi)$
expresses that there exists exactly one position which satisfies $\varphi$ in the class.
\end{example}

\begin{example} 
 The formula $\Gg\Fg (\firstc \wedge \nu x. \nextc x)$ is satisfied by all data $\omega$-words that have
 infinitely many infinite classes. Its negation $\Fg\Gg (\firstc \rightarrow \mu x. \dnextc x)$ says that 
  eventually all classes are of finite length (but still there could be infinite classes in the word). The formula $\Gg\Fg (\firstc \wedge
\mu x. \dnextc x)$ says that there exist
 infinitely many finite classes.
\end{example}

We say a variable $x$ in 
$\lambda x. \varphi(x)$ is {\em guarded} if each occurrence of $x$ in $\varphi(x)$ is in the scope of 
some modality. We say a formula $\varphi$ is \intro{guarded} if each bound variable in $\varphi$ is guarded. 
The following fact is classical, but for the sake of completion we repeat it here,
\begin{lemma}
 Every formula is equivalent to a formula which is furthermore guarded.
\label{guarded}
 \end{lemma}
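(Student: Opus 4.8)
The plan is to prove the statement by induction on the structure of the formula, processing fixpoint subformulas from the innermost outwards, so that the whole problem reduces to a single task: given $\lambda x.\varphi$ (with $\lambda\in\{\mu,\nu\}$) in which every fixpoint subformula strictly inside $\varphi$ is already guarded, rewrite it into an equivalent formula in which the variable $x$ itself becomes guarded, without spoiling the guardedness already achieved. First I would rename all bound variables to be pairwise distinct, so that I can speak unambiguously of ``the occurrences of $x$''; this is harmless. The only nontrivial case of the structural induction is then exactly this single-fixpoint guarding step, since atoms and modal formulas introduce no unguarded bound variables and the boolean cases are immediate.

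The engine of the proof is a pair of dual fixpoint identities. Using associativity, commutativity and distributivity of $\wedge$ and $\vee$ — which are sound because $\sem{\cdot}_w$ interprets them as intersection and union — I would bring $\varphi$ into a boolean normal form computed over the part of the syntax tree reachable from the root without crossing a modality, treating modal subformulas $\mathtt M\,\psi$ and inner fixpoint subformulas as opaque leaves. For a least fixpoint this yields the shape $\varphi\equiv(x\wedge\gamma)\vee\delta$, where the literal $x$ no longer occurs at the top boolean level of $\gamma$ or $\delta$. The decisive observation is that a set $S$ is a pre-fixpoint of the body $X\mapsto(X\wedge\gamma)\vee\delta$ precisely when it is a pre-fixpoint of $\delta$, because the conjunct $S\cap\sem{\gamma}$ is automatically contained in $S$; hence, by the Knaster--Tarski characterisation of $\mu$ as the least pre-fixpoint,
$$\mu x.\big((x\wedge\gamma)\vee\delta\big)\ \equiv\ \mu x.\,\delta\ .$$
Dually, writing $\varphi$ in conjunctive normal form as $(x\vee\gamma)\wedge\delta$ and comparing post-fixpoints gives $\nu x.\big((x\vee\gamma)\wedge\delta\big)\equiv\nu x.\,\delta$. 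Each application of one of these identities deletes all the top-level unguarded occurrences of the fixpoint variable, and the degenerate outcomes (where the eliminated body collapses to something equivalent to $\mathsf{false}$ or $\mathsf{true}$) are guarded vacuously.

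The main obstacle, and the point that must be handled with care, is that an unguarded occurrence of the bound variable $x$ may lie buried inside one of the already-guarded inner fixpoint subformulas $\eta y.\gamma$, where $x$ occurs free; such an occurrence is invisible to the boolean normalization above and is not removed by the two identities, since factoring on $x$ requires an $x$-binder at that spot. When the inner and outer fixpoints have the same type I would remove the intervening binder by the diagonal identity $\lambda x.\lambda y.\gamma(x,y)\equiv\lambda z.\gamma(z,z)$, which brings the offending occurrence to the top boolean level where it can be factored out as above. The genuinely hard case is alternating nesting ($\mu$ over $\nu$ or vice versa), for which no such merging is available; here I would push the normalization through the inner fixpoint and iterate, controlling termination by a well-founded measure on the formula — for instance a lexicographic rank built from the fixpoint alternation depth together with the number of binders separating an unguarded occurrence from its nearest enclosing modality — and argue that each rewriting step strictly decreases this rank. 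Soundness at every step follows directly from the semantics of Figure~\ref{figure:semantics} and the Knaster--Tarski theorem, so the only real content of the argument is exhibiting a measure under which this elimination provably terminates in a fully guarded formula.
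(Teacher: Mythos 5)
Your boolean-level engine is sound and essentially matches the paper's: the identities $\mu x.\big((x\wedge\gamma)\vee\delta\big)\equiv\mu x.\,\delta$ and $\nu x.\big((x\vee\gamma)\wedge\delta\big)\equiv\nu x.\,\delta$ are correct, and your pre-/post-fixpoint comparison is exactly how one verifies them (the paper uses CNF for both cases, stating $\mu x.\,(x\vee\alpha)\wedge\beta\equiv\mu x.\,\alpha\wedge\beta$ and $\nu x.\,(x\vee\alpha)\wedge\beta\equiv\nu x.\,\beta$; your DNF variant for $\mu$ is an immaterial difference). The genuine gap is in the step you yourself flag as the main obstacle: unguarded occurrences of $x$ buried inside already-guarded inner fixpoint subformulas $\theta y.\psi(x,y)$. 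Your diagonal identity $\lambda x.\lambda y.\gamma(x,y)\equiv\lambda z.\gamma(z,z)$ applies only when the inner binder sits \emph{immediately} under the outer one, i.e. when $\lambda y.\gamma(x,y)$ is the entire body; in general the inner fixpoint sits inside a boolean context, as in $\mu x.\,\big(a\wedge\mu y.(x\vee b\vee\nextg y)\big)\vee c$, where the identity does not apply (merging there would need vectorial fixpoints and Beki\'c's principle, which you do not develop). Worse, for the alternating case ($\mu$ over $\nu$ or vice versa) you offer no concrete rewriting step at all, only the hope that some lexicographic rank decreases under an unspecified transformation; the proof is therefore incomplete precisely at its crux.

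The paper resolves this with a single observation you are missing: \emph{unfold} the inner fixpoint once, replacing $\theta y.\psi(x,y)$ by the equivalent $\psi\big(x,\theta y.\psi(x,y)\big)$. This works uniformly, whatever the combination of $\mu$'s and $\nu$'s. The point is that, by the structural induction hypothesis, $y$ is already guarded in $\psi$, so after unfolding every substituted copy of $\theta y.\psi(x,y)$ occurs under a modality; hence every occurrence of $x$ still lying inside that binder is guarded, while the previously buried unguarded occurrences of $x$ surface outside it, one binder shallower. Iterating (the number of binders enclosing unguarded occurrences of $x$ strictly decreases, so termination is immediate) brings all unguarded occurrences of $x$ to the top boolean level, after which your normal-form factoring and fixpoint identities finish the proof. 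No alternation-depth measure and no case split on fixpoint types is needed.
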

\begin{proof}
 Proof is by induction on the structure of the formula. 
The atomic, boolean and modal cases are straightforward. The non-trivial case is when the formula 
is of the form $\lambda x. \varphi(x)$. Assume $\lambda x. \varphi(x)$ is unguarded  and $\varphi(x)$ is
guarded. We can furthermore assume that all unguarded occurrences of $x$ is outside of any subformula
$\theta y. \psi(x,y)$ of $\varphi(x)$, otherwise in $\varphi(x)$ we substitute   
for $\theta y. \psi(x,y)$ the equivalent formula $\psi(x,\theta y.\psi(x,y))$ which yields  
the desired form. Next we write $\varphi(x)$ is conjunctive normal form to obtain a formula 
of the form
$$\lambda x. ( x \vee \alpha(x)) \wedge \beta(x) ,$$ 
where $\alpha(x)$ and $\beta(x)$ are guarded. It is left to the reader to check that
$$\mu x. ( x \vee \alpha(x)) \wedge \beta(x) \equiv \mu x. \alpha(x) \wedge \beta(x) \,,$$
and
$$\nu x. ( x \vee \alpha(x)) \wedge \beta(x) \equiv \nu x. \beta(x) \,.$$
\end{proof}

We will be using the modalities defined above liberally.
The zeroary modalities $\S$ and $\P$ are used to capture $\fotwo$. They are definable in $\mu$-calculus only using unary modalities and the $\nu$
operator.
\begin{proposition}
\label{proposition:marking}
 The modalities $\S$ and $\P$ are definable in $\nu$-fragment in terms of the unary modalities.
\end{proposition}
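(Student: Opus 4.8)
The plan is to define the modality $\P$ outright by a single greatest fixpoint and then read off $\S$ from it via $\S \equiv \nextg \P$ (indeed $\nextg\P$ holds at $i$ iff $\P$ holds at $i+1$, i.e. iff $d_i = d_{i+1}$, i.e. iff $i+1 = i\plusc1$; the boundary case $\lastg$ is handled correctly since $\nextg$ then fails). So everything reduces to giving a $\nu$-formula with unary modalities for $\P$. The central gadget is the partial map $h(j) = (j\minusc1)+1$: first step to the class predecessor, then to the global successor. The point is that $\prevc\nextg\,\varphi$ holds at $j$ exactly when $j$ has a class predecessor and $\varphi$ holds at $h(j)$, and that $h(j)\le j$ always, with $h(j)=j$ iff $\P(j)$. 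The candidate definitions are
$$\P \;:=\; \nu x.\,\prevc\nextg x, \qquad \S \;:=\; \nextg\,\P\,.$$

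The key step is a \emph{descent lemma}: if $j$ has a class predecessor but $\neg\P(j)$, then $h(j) < j$ and moreover $\neg\P(h(j))$, so the map strictly decreases and stays off the ``diagonal'' of $\P$-positions. To see this, let $m = j\minusc1$, so $m\plusc1 = j$ and $m\le j-1$; since $\neg\P(j)$ we have $m \ne j-1$, hence $m < j-1$. Then $m < m+1 \le j-1 < j = m\plusc1$, so the position $m+1 = h(j)$ lies strictly between $m$ and its class successor, whence $d_{m+1}\ne d_m$. Consequently $\P(m+1)$ fails (it would require $d_m = d_{m+1}$), which is exactly $\neg\P(h(j))$, and $h(j)=m+1 < j$.

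To conclude, I unfold the greatest fixpoint: $\nu x.\,\prevc\nextg x$ holds at $j$ iff the forward $h$-orbit $j, h(j), h(h(j)),\dots$ is defined at every step, i.e. infinite. If $\P(j)$ then $h(j)=j$ and the orbit is constant, hence infinite, so $j$ is included. If $\neg\P(j)$, the descent lemma forces the orbit to strictly decrease while remaining in the $\neg\P$ region; a strictly decreasing sequence in $\N$ must terminate, and it can only terminate by reaching a $\firstc$ position, where $\prevc\nextg x$ fails — so $j$ is excluded. Thus $\nu x.\,\prevc\nextg x$ denotes exactly $\sem{\P}_w$, and $\S := \nextg\P$ then denotes $\sem{\S}_w$. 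The main subtlety worth flagging is \emph{why} I route $\S$ through the downward map $h$ rather than through the symmetric upward formula $\nu x.\,\nextc\prevg x$: that forward formula also captures $\S$ on \emph{finite} data words, but on data $\omega$-words an alternating word such as $1,2,1,2,\dots$ produces an infinite strictly increasing orbit through $\neg\S$-positions, so the forward fixpoint over-approximates. Using the decreasing map $h$, whose orbits must terminate in $\N$, is precisely what makes the definition uniform for finite words and $\omega$-words alike. The symmetric claim for $\P$ follows by the mirror argument, and by duality one may also present $\P$ from $\S$ via the mirror image of the above.
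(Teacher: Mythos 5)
Your proof is correct and is essentially the paper's own argument in mirror image: the paper defines $\S \equiv \nu x.\,\nextg\prevc x$ via the weakly decreasing map $i \mapsto (i+1)\minusc 1$ and then sets $\P \equiv \prevg\S$, whereas you define $\P \equiv \nu x.\,\prevc\nextg x$ via $h(j) = (j\minusc 1)+1$ and then set $\S \equiv \nextg\P$. In both cases the crux is the same: the two-step map is weakly decreasing with fixed points exactly the marked positions, so well-foundedness of the positions forces the greatest fixpoint to fail everywhere else.
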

\begin{proof} 
We claim that $\S \equiv \nu x. \nextg \prevc x$ and $\P \equiv \prevg \S$. 

Regarding the proof of the claim we want to remark that the proof exploits the same idea used in 
\cite{BjorklundS10} to prove that there is a data automaton which guesses and verifies the marked string 
projection of a data word.

Fix a data word $w$. It is clear that if $i \sim i+1$ then $w,i \models \nu x. \nextg \prevc x$. It only remains to show that
If $i \not \sim i+1$ then $w,i \not \models \nu x. \nextg \prevc x$. Consider the sequence of positions $i_0=i, i_1, \ldots$ such that 
for every $j\in \N$ it is the case that $i_j+1=i_{j+1}\succclass$ (or in other words $i_{j+1}$ is the class predecessor of the successor of $i_j$). 
We claim that this sequence is finite. From this claim, it follows that $w,i \not \models \nu x. \nextg \prevc x$ since there is no infinite path from
$i$.
It is enough to show that for every $j\in$ it is the case that $i_{j+1} < i_j$ since the data word is of finite length.
We prove this claim using induction. The base case of $i_1 < i$ follows from the assumption that $i \not \sim i+1$ (since, either $i+1$ does not have 
a class predecessor or it is strictly below $i$). For the inductive step assume that the claim is proved for $i_0, \ldots, i_{j-1}, i_j$. 
Consider $i_j$ and $i_{j+1}$. Since $i_j+1 = i_{j+1}\succclass$ it is clear that $i_{j+1} \leq i_j$. It remains to show that
$i_{j+1} \neq i_j$. Assume on the contrary $i_{j+1} = i_j$. This means that $i_j +1 = i_{j+1}\succclass = i_{j} \succclass$. It follows that
$i_j+1 = i_j\succclass = i_{j-1}+1$, since successor function $+1$ is an injection, we deduce that $i_j = i_{j-1}$. But by induction hypothesis,
$i_j < i_{j-1}$ which is a contradiction. Therefore the inductive step $i_{j+1} < i_j$ is proved. 
From our claim it follows that the sequence strictly decreases. Since the set of positions is well-founded, the sequence is finite. 
Therefore the formula $\nu x. \nextg \prevc x$ is not true at $i$. This proves our claim that $\S \equiv \nu x. \nextg \prevc x$.

Let us observe that 
$\S$ is in the $\nu$-fragment and so is $\P$ (hence $\neg \S$, $\neg \P$ are in the $\mu$-fragment). By definition the 
formula $\nu x. \nextg \prevc x$ is not in BR, however we do not know if there is a formula which is equivalent to $\S$ which is in 
BR (See Section \ref{section:bounded-reversal}). Readers who are familiar with register automata or data monoids will immediately recognize that the
formula $\S$ and its negation both are
recognizable by a data monoid (in fact this is one of the examples provided in \cite{Bojanczyk11data}) and hence by a deterministic 
one register automata. We conjecture that $\neg \S$ is not in $\nu$-fragment, which will separate our largest decidable fragment and data monoids.

The idea used in the proof of the above proposition can be extended easily to define similar zeroary modalities which indicates how a position and its
$k$-th successor compares with respect to $\sim$. For instance consider the modality $\S_2$ which says that 
that the successor of the successor of a position $i$ is the class successor of $i$. Formally $w,i \models \S_2$ if 
$i \succclass= (i+1)+1$. Let $\mathsf{Even}$ denote the $\mu$-calculus formula which is true at all even positions. Then using ideas similar to that
of the above proof 
it can be shown that $\S_2$ is also definable in $\mu$-calculus in the following way,
\begin{align*}
\S_2 \defeq \( \nu x. \mathsf{Even} \wedge \nextg \nextg \prevc (\mathsf{Even} \wedge x )\) & \\
        &\hspace{-2.5cm}\vee \( \nu x. \neg \mathsf{Even}  \wedge \nextg \nextg \prevc (\neg \mathsf{Even} \wedge x )\)\ .\\
\end{align*}
We note that similarly the modality $\S_n$ can be defined which says that the $n$-th successor of a position is its class successor. 

Consider the modality $\S_{k,n}$ which is true at a position $i$ if the $n$-th successor of $i$ is the $k$-th class successor of $i$. Such a formula
can be written as disjunction of formulas using unary modalities and $\S_1,\ldots, \S_n$. This shows that the modality $\S_{k,n}$ is also expressible
in $\mu$-calculus.

Finally let us remark that all these formulas are recognizable by register automata and also by data automata, 
since register automata are subsumed by data automata \cite{BjorklundS10}. Therefore adding these formulas to our language does not affect the 
decidability of the $\nu$-fragment.

\end{proof} 

But we do not know if $\neg \S$ and $\neg \P$ (obviously definable using $\mu$ operator) 
are definable using $\nu$ operator only (we conjecture negatively). However since these formulas are definable using a data automaton 
(which is our tool for showing decidability) adding them to our language does not affect any of the decidability results.

\subsection{The $\mu$-fragment}

\mlabel{subsection:undecidability-mu}
We consider in this section the \intro{$\mu$-fragment} of $\mu$-calculus, which is the restriction
to the use of least-fixpoints $\mu$ only. The main result is to show the undecidability of its satisfiability.

Consider a data word that uses, say, letters $a,b,c$, and such that
the relation $\sim$ between positions
is a bijection between $a$-labeled positions and $b$-labeled positions.
It is easy to write a $\mu$-calculus formula that checks this property.
However, this is not yet sufficient for our purpose.
We need the following lemma.
\begin{lemma}\mlabel{lemma:monotonic}
The exists a formula in the $\mu$-fragment that checks over finite data words the property
that $\sim$ is an increasing bijection between $a$-labeled positions and $b$-labeled positions.
\end{lemma}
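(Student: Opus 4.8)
The plan is to first fix the easy part and the vocabulary, then reduce "increasing" to a step‑wise condition, and finally encode that condition as a single well‑founded (hence least‑fixpoint) walk. Assume we already have a modality‑only formula $\mathsf{Bij}$ asserting that $\sim$ is a bijection between the $a$- and $b$-positions, so that every relevant class is a pair $\{$one $a$, one $b\}$ and the $c$'s sit in singleton classes. Under $\mathsf{Bij}$ the partner of an $a$- or $b$-position is reached by $\nextc\varphi\vee\prevc\varphi$; write $\mathsf{m}\,\varphi$ for this ``its match satisfies $\varphi$''. The ``next $a$'' and ``next $b$'' jumps are $\mu$-definable in until style, e.g. $\mathsf{nA}\,\varphi\defeq\mu x.\,\nextg\bigl((a\wedge\varphi)\vee(\neg a\wedge x)\bigr)$, and symmetrically $\mathsf{nB},\mathsf{pA},\mathsf{pB}$ with $\prevg$; first/last‑$a$ are $a\wedge\neg\Pg a$ and $a\wedge\neg\mathsf{nA}\,\true$. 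Note that $\mathsf{m}$ is an involution and each jump has a functional inverse; this will matter below.

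Order the $a$'s as $\alpha_1<\dots<\alpha_n$ and the $b$'s as $\beta_1<\dots<\beta_n$. The first reduction is the inductive characterisation: given $\mathsf{Bij}$, the matching is increasing iff the match of the first $a$ is the first $b$ and, for every pair of $a$'s consecutive in $a$-order, their matches are in $b$-order. This is an easy induction on $a$-order using bijectivity (if the matches taken in $a$-order are strictly increasing in $b$-order and are a bijection, the $k$-th of them is forced to be $\beta_k$). The base condition is directly $\mu$-definable as $\Fg\bigl((a\wedge\neg\Pg a)\wedge\mathsf{m}(b\wedge\neg\Pg b)\bigr)$.

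For the inductive part I would use a deterministic walk that advances through the classes in $a$-order. Define the composite deterministic step $G\defeq \mathsf{m}\circ\mathsf{nA}\circ\mathsf{m}$, which sends the $b$ matched to $\alpha_i$ to the $b$ matched to $\alpha_{i+1}$. Launched from the match of the first $a$, the $G$-walk visits the matches of $\alpha_1,\alpha_2,\dots,\alpha_n$ and always terminates (there is no next $a$ past $\alpha_n$), so a least fixpoint is the right operator over finite words. The key monotonicity observation is that, as long as the matches of $\alpha_1,\dots,\alpha_i$ are exactly $\beta_1,\dots,\beta_i$, the position $G(\gamma)$ for $\gamma=\alpha_i^{\sim}$ can never lie strictly left of $\mathsf{nB}(\gamma)=\beta_{i+1}$ (those earlier $b$'s are already used), and equals it precisely when the step preserves order. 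Hence the matching is increasing iff the base condition holds and \emph{no step overshoots}, i.e. at each visited $b$-position $\gamma$ one has $G(\gamma)=\mathsf{nB}(\gamma)$. The verification is then the skeleton
\[
\mu x.\ \mathsf{m}(a\wedge\neg\mathsf{nA}\,\true)\ \vee\ \bigl(\mathsf{NoOvershoot}\ \wedge\ \langle G\rangle x\bigr),
\]
where the base disjunct fires when the current $b$ is matched to the last $a$, and $\langle G\rangle x$ follows the deterministic composite $G$ and demands $x$ there; this whole formula lives in the $\mu$-fragment provided $\mathsf{NoOvershoot}$ does.

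The main obstacle is exactly $\mathsf{NoOvershoot}$, namely testing the equality $G(\gamma)=\mathsf{nB}(\gamma)$ of two positions reached from $\gamma$ by two different deterministic programs; equivalently, ``there is no $b$ strictly between $\gamma$ and $G(\gamma)$''. This is the delicate point, since comparing two independently computed positions is not available out of the box in modal $\mu$-calculus. I expect to resolve it by exploiting that every jump is functional \emph{and} has a functional inverse, with $\mathsf{m}$ involutive: the equality $G(\gamma)=\mathsf{nB}(\gamma)$ can be tested by a round trip along $G$ followed by the inverse program $\mathsf{m}\circ\mathsf{pA}\circ\mathsf{m}\circ\mathsf{pB}$, which returns to $\gamma$ exactly when there is no overshoot. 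Because each elementary step strictly changes the position and the word is finite, the round‑trip/``first $b$ in between'' search is itself a well‑founded walk and can be written with a nested least fixpoint, keeping everything inside the $\mu$-fragment; the auxiliary letter $c$ is available as scaffolding to pad the word and make the intermediate searches land cleanly. Combining the base condition, the overshoot‑free $G$-walk, and $\mathsf{Bij}$ yields a $\mu$-fragment sentence that holds exactly on finite data words where $\sim$ is an increasing bijection between the $a$- and $b$-positions.
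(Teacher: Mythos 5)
There is a genuine gap, and it sits exactly where you flagged it: the per-step equality test $\mathsf{NoOvershoot}$. Your outer walk and the reduction of ``increasing'' to ``base case plus no step overshoots'' are mathematically fine, but the construction stands or falls with expressing $G(\gamma)=\mathsf{nB}(\gamma)$ \emph{inside the $\mu$-fragment}, and the round-trip idea does not deliver this. First, the only available way to detect ``a deterministic round trip returns to its starting point'' in this logic is the one the paper uses to show $\S\equiv\nu x.\,\nextg\prevc x$ (Proposition~\ref{proposition:marking}): one asserts that the round-trip walk can be iterated \emph{forever}, which is a greatest-fixpoint statement, not a least-fixpoint one --- termination is what $\mu$ can assert, while returning to the start is certified precisely by non-termination. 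So at best your $\mathsf{NoOvershoot}$ comes out in the $\nu$-fragment, and placing it under your outer $\mu$ leaves the $\mu$-fragment; you cannot repair this by local negation, since the $\mu$-fragment is not closed under negation, and the paper conjectures that already the simplest such equality ($\S$ itself) is not expressible with least fixpoints only. Second, even the $\nu$-version of your round trip is unsound. The correctness of the $\S$-trick rests on a delicate well-foundedness argument: a failed step moves strictly left, and injectivity of the successor rules out any later self-loop. No analogue holds for the composite walk $\mathsf{pB}\circ G$: before ``increasing'' has been verified, matches jump in both directions, so the deterministic walk launched at an overshooting $\gamma$ may fall into a spurious cycle elsewhere (e.g.\ a later $\gamma'$ with $\mathsf{pB}(G(\gamma'))=\gamma'$), making the walk infinite although $\gamma$ overshoots; non-termination then fails to certify equality at $\gamma$.

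The paper's proof avoids per-step equality tests altogether, and that is the idea your proposal is missing. It expresses the \emph{complement} (non-increasing) as the existence of an \emph{infinite chain} $x_1\mathbin{S}x_2\mathbin{S}\cdots$ for the nondeterministic step relation $x\mathbin{S}z$ defined by $x \mathbin{R} x'< y' \mathbin{R^{-1}} y<z$. Since $S$ is monotone in its second argument, over a finite word an infinite chain exists iff some $x$ satisfies $x\mathbin{S}x$, i.e.\ iff the matching is not increasing; and because the chain is existential rather than a deterministic walk, spurious cycles are harmless --- \emph{any} infinite path certifies non-increasingness. This existence statement is written with a single $\nu$ (the surrounding $\Fg,\Fc,\Pc$ being themselves $\nu$-expressible over finite words), and one complementation at the very end produces the desired $\mu$-formula. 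To complete your step-wise approach you would have to either exhibit an equality test in the $\mu$-fragment (contradicting the paper's conjecture in its simplest instance) or restructure the argument along these lines.
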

\begin{proof}
For the sake of explanations, let us consider a data word $u$, and
let $A$ ({\it resp.} $B$) be the set of $a$-labeled ({\it resp.} $b$-labeled) positions in $u$.
Let $\mathbin{R}$ be $\sim$ restricted to $A\times B$.
We have to provide a formula that holds if $R$ is a monotonic bijection between $A$ and $B$.
It is easy to write a formula of the $\mu$-fragment that holds if and only if $R$ is a bijection between $A$ and $B$.
We assume this is the case from now.

Consider now the binary relation $S\subseteq A^2$ such that
$x\mathbin{S}z$ if $x \mathbin{R} x'< y' \mathbin{R^{-1}} y<z$. 
An element $x\in A$ such that $x\mathbin{S} x$ is called a \intro{small witness}. 
Note first that the the existence of a small witness means that 
there exists $x>y$ and $x'<y'$ such that $x\mathbin{R} x'$ and $y\mathbin{R}y'$.
Hence, there exists a small witness if and only if $R$ is not increasing.
Unfortunately, we are not able to directly detect the existence of a small witness using a $\mu$-formula.
Instead, we will search for `big witnesses'.
A \intro{big witness} is a sequence $x_1,x_2,\dots$ of elements of $A$ such that
		$$x_1 \mathbin{S} x_2 \mathbin{S}\dots$$

We claim $(\star)$ that there exists a small witness if and only if there exists a big witness.
Of course, if there is a small witness, there is a big one. Assume now that there exists a big witness $x_1,\dots$
Since the $x_i$'s range over a finite domain, there exists $i$ such that $x_{i+1}\leq x_i$.
Thus, $x_i\mathbin{S} x_{i+1}\leq x_i$ and hence $x_i\mathbin{S}x_i$.
we have found a small witness. 

One easily verifies now that the $\mu$-formula
$$\Fg\nu x.a\wedge \Fc\Pc (b\wedge \nextg\Fg (b\wedge \Fc\Pc(a\wedge \nextg \Fg x))))\ $$
expresses the existence of a big witness. Thus the non-existence of a big witness, hence of a small witness, 
hence the non increasing nature of $R$ is definable by a $\mu$-formula. {\it A priori}, this formula
is a formula that uses both $\mu$- and $\nu$-fixpoints since the modalities $\Fc$ and $\Fg$ are in fact syntactic sugar
for formulas of the $\mu$-fragment. However, it is easy to check that, over \emph{finite data words}, $\Fg(\varphi)$
is equivalent to $\nu x. \varphi\wedge\nextg x$ (the difference between least and greatest fixpoint does not exist when the
fixpoints are reached within a finite number of steps). Thus, the above formula can be expressed in the $\nu$-fragment,
and hence its complement in the $\mu$-fragment.
\end{proof}

Using this lemma we reduce the Post's correspondence problem to the satisfiability problem of the logic giving us,

\begin{theorem}\label{theorem:undecidability}
Satisfiability of the $\mu$-fragment over data words is undecidable. 
\end{theorem}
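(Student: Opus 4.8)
The plan is to reduce Post's Correspondence Problem (PCP) to satisfiability of the $\mu$-fragment over finite data words, using Lemma~\ref{lemma:monotonic} as the crucial building block. Recall that an instance of PCP is a finite list of pairs $(u_1,v_1),\ldots,(u_k,v_k)$ of nonempty words over some alphabet, and we must decide whether there is a nonempty sequence of indices $j_1,\ldots,j_m$ with $u_{j_1}\cdots u_{j_m}=v_{j_1}\cdots v_{j_m}$. Since PCP is undecidable, a faithful reduction yields the theorem.

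First I would encode a candidate PCP solution as a single finite data word. The idea is to lay out two copies of the concatenated string side by side along the word and use the data equality relation $\sim$ to enforce that they are the \emph{same} string, position by position. Concretely, I would let the word consist of two consecutive blocks: a first block spelling out $u_{j_1}\cdots u_{j_m}$ and a second block spelling out $v_{j_1}\cdots v_{j_m}$, with each symbol carried by a distinct data value, and with the $\sim$-relation pairing the $\ell$-th letter of the $u$-block with the $\ell$-th letter of the $v$-block. To guarantee that this pairing really is position-by-position — i.e.\ that the $\sim$-induced bijection between the two blocks respects the left-to-right order — I would mark the letters of the first block with $a$ and those of the second block with $b$ and invoke Lemma~\ref{lemma:monotonic}: the formula it provides holds exactly when $\sim$ is an \emph{increasing} bijection between the $a$-positions and the $b$-positions. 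A $\mu$-formula can also check that the letter labels agree across each $\sim$-pair (equality of labels at class-related positions), so together these enforce that the two concatenations are literally equal as strings.

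Second I would write auxiliary $\mu$-formulas, all definable with least fixpoints and the navigational modalities (using $\Fg$, $\Ug$, and the class modalities as in Example~\ref{modalityexample}), asserting the structural well-formedness of the encoding: that the word splits into exactly one $a$-block followed by one $b$-block, that both blocks decompose as a concatenation of tiles drawn from the list $(u_{j},v_{j})$, and — most delicately — that \emph{the same index sequence} $j_1,\ldots,j_m$ is used for both blocks. The index agreement can be enforced by a second layer of markings and a second application of the monotonicity trick, or by threading tile-boundary markers whose class-links force the $u$-decomposition and the $v$-decomposition to use identical tiles in the same order. The conjunction of all these formulas is satisfiable by a finite data word if and only if the PCP instance has a solution.

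The main obstacle, and the reason Lemma~\ref{lemma:monotonic} was isolated beforehand, is the order-preservation of $\sim$: without it, a $\mu$-formula can force $\sim$ to be a bijection pairing equal letters but cannot prevent a permuted pairing that spuriously ``verifies'' two unequal strings, so the reduction would be unsound. The monotonicity formula closes exactly this gap. A secondary technical point to handle carefully is ensuring that every formula used genuinely lies in the $\mu$-fragment: the navigational reachability properties are natural least-fixpoint statements, and the increasing-bijection check is available in the $\mu$-fragment because its \emph{complement} (non-monotonicity, via big witnesses) lives in the $\nu$-fragment by the argument in Lemma~\ref{lemma:monotonic}, so its negation is a $\mu$-formula. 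Once all these pieces are assembled and conjoined, satisfiability of the resulting $\mu$-formula decides the PCP instance, establishing undecidability.
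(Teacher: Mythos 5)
Your overall strategy coincides with the paper's: reduce PCP and use Lemma~\ref{lemma:monotonic} to force $\sim$ to be an increasing bijection. But your encoding is different from the paper's, and the difference is exactly where your argument breaks. You write \emph{two} copies of the candidate string (an $a$-block $u_{j_1}\cdots u_{j_m}$ followed by a $b$-block $v_{j_1}\cdots v_{j_m}$) and spend the data values on the letter-by-letter pairing: the $\ell$-th letter of the first block is $\sim$-linked to the $\ell$-th letter of the second. That yields string equality, but it leaves the ``same index sequence'' condition with no mechanism to enforce it --- and without that condition the reduction proves nothing, since the existence of a common word in $\{u_1,\dots,u_k\}^+\cap\{v_1,\dots,v_k\}^+$ with possibly different index sequences on the two sides is an emptiness-of-intersection question for two regular languages, which is decidable. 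Neither of the two fixes you gesture at works as stated: a position of a data word carries a \emph{single} data value, so no ``second layer of markings'' on the same positions can realize a second, independent pairing; and the existing class-links connect equal offsets $\ell \leftrightarrow \ell$, whereas the $r$-th tile boundaries sit at offsets $|u_{j_1}\cdots u_{j_{r-1}}|+1$ and $|v_{j_1}\cdots v_{j_{r-1}}|+1$, which differ in general, so ``threading tile-boundary markers'' through the existing classes relates nothing useful.

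The step can be repaired, but it needs an actual construction. One option is to insert \emph{fresh} separator positions (one per tile, in each block, labelled by the tile's index), so that their data values are free to realize a second increasing bijection; this requires a second invocation of Lemma~\ref{lemma:monotonic} on the separator letters, a check that $\sim$-linked separators carry the same index, and regular checks tying each separator to the tile that follows it. The paper avoids the issue altogether with a one-copy encoding: the data word is the common string $u_{i_0}\cdots u_{i_n}$ with marker letters $a$ and $b$ \emph{interleaved} into it, the $a$'s marking the $u$-tiling boundaries and the $b$'s the $v$-tiling boundaries. String equality is then automatic (there is only one underlying string), the lemma is invoked once to make $\sim$ an increasing bijection between $a$-positions and $b$-positions, and index agreement becomes a local condition: for every non-final $a$-position $x$ with class partner $R(x)$ there is a common $i\in[k]$ such that the suffix at $x$ with $b$'s erased begins with $au_ia$ and the suffix at $R(x)$ with $a$'s erased begins with $bv_ib$; these are regular, hence $\mu$-definable, properties. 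Your remaining ingredients (label agreement across classes, block structure, tilability of each block) are indeed expressible in the $\mu$-fragment over finite data words, for the reason the paper records inside the proof of Lemma~\ref{lemma:monotonic}: over finite words the universal modalities collapse into the $\mu$-fragment.
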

\begin{proof} 
The proof is by reduction from the Post's Correspondence Problem (PCP).
An instance $I$ of PCP is a finite set of tuples  $I=\{(u_1, v_1), \ldots, (u_k,v_k)\mid u_j,v_j \in \Sigma^+\}$.
A solution to $I$ is a sequence $i_0 \ldots i_n\in [k]^{+}$ such that
 $u_{i_0}\ldots u_{i_n}=v_{i_0}\ldots v_{i_n}$. It is well known that the problem of determining if an instance
 of the PCP has a solution is undecidable.

Given an instance $I$ of the PCP,
we construct a formula in the $\mu$-fragment that is satisfiable if and only if $I$ has a solution.
For this, we encode the solution of $I$ as a data word $u$ over the alphabet
 $\Sigma \uplus \{a,b\}$ (where $a,b$ are assumed not present in $\Sigma$). Intuitively, $u$ is
 $u_{i_0}\ldots u_{i_n}$ in which are inserted letters $a$ and $b$ letters in order to describe the decomposition in $u_{i_0},\ldots,u_{i_n}$
(using $a$'s) and in $v_{i_0},\ldots,v_{i_n}$ (using $b$'s). The data values are required to induce an increasing bijection between $a$-labeled and
$b$-labeled positions in order to be able to check the correctness of the solution.
Formally, a data word $u$ \intro{encodes} the solution $i_0\ldots i_n$ to $I$ if:
\begin{itemize}
\item the word has length at least 4, starts with letters $ab$ and ends with $ab$, and
\item $\sim$ induces an increasing bijection between $a$-labeled positions and $b$-labeled positions.
	 Let $x_0<\dots<x_n$ be the $a$-labeled positions and $y_0<\dots< y_n$ be the $b$-labeled positions.
 \item Then for all $\ell=1\dots n$, the word obtained as the string projection of $u$ restricted to
	the positions in $(x_{\ell},x_{\ell+1})$ ({\it resp.} $(y_{\ell},y_{\ell+1})$)to which $b$-letters ({\it resp.} $a$-letters) are removed is
$u_{i_\ell}$ ({\it resp.} $v_{i_\ell}$).
 \end{itemize}
 It is easy, from a solution to construct a data word that encodes it.

 Hence, in order to guess a solution to $I$, it is sufficient to guess a data word over the alphabet $\Sigma\cup\{a,b\}$ such that ($\dagger$):
\begin{itemize}
\item the word has length at least 4, starts with letters $ab$ and ends with $ab$, and
\item $\sim$ induces an increasing bijection between $a$-labeled positions and $b$-labeled positions, and there is at least one occurrence of $a$;
 \item for all occurrences $x$ of an $a$-letter, but the last one, there exists $i\in[k]$ such that:
 		\begin{itemize}
		\item the string projection of $u$ starting at position $x$ belongs to $K_i=\{w~:~\overline w^b \in au_ia(\Sigma\cup a)^*\}$ where
$\overline w^b$ is the word $w$ with letter $b$ removed, and
		\item the string projection of $u$ starting at position $R(x)$ belongs to $L_i=\{w~:~\overline w^a \in bv_ib(\Sigma\cup b)^*\}$ where
$\overline w^a$ is the word $w$ with letter $a$ removed.
		\end{itemize}
 \end{itemize}
 Quite naturally, if a data word encodes a solution to $I$ then it satisfies ($\dagger$). Conversely, if a data word satisfies ($\dagger$),
 then there exists a solution to $I$ that it encodes.
 
 Thus, it is sufficient for us to write a formula of the $\mu$-fragment for ($\dagger$), which is easy using Lemma~\ref{lemma:monotonic}
 for the second item, and the fact that the languages $K_i$ and $L_i$ are regular, thus definable by a formula of the $\mu$-fragment.
\end{proof} 

The above theorem extends to $\omega$-words.
\begin{corollary}\label{corollary:undecidability-omega}
Satisfiability of the $\mu$-fragment over data $\omega$-words is undecidable. 
\end{corollary}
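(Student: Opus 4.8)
The plan is to rerun the reduction from Post's Correspondence Problem used in Theorem~\ref{theorem:undecidability}, but to place the finite encoding of a solution in an initial segment of the data $\omega$-word and to pad it with an infinite, inert tail. Concretely, I would fix a fresh letter $e\notin\Sigma\cup\{a,b\}$ and consider data $\omega$-words in which a finite prefix over $\Sigma\cup\{a,b\}$ is followed by an infinite block of $e$-positions all carrying a single data value that does not occur in the prefix. Such a tail forms one $\sim$-class, disjoint from every class met in the prefix, so it is invisible to the class modalities when one navigates inside the prefix.

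First I would transform the finite-word encoding formula $\varphi$ of Theorem~\ref{theorem:undecidability} into an $\omega$-formula $\varphi^{e}$ by relativising every modality to the non-$e$ region, i.e.\ replacing $\nextg\psi$ by $\nextg(\neg e\wedge\psi)$ and likewise for $\prevg$, $\nextc$, $\prevc$. Because $e$ is atomic, the guard $\neg e$ is a negated atom, so this rewriting introduces neither negation of compound formulas nor any $\nu$; hence $\varphi^{e}$ stays in the $\mu$-fragment. I would then take as final formula $\varphi^{e}\wedge\Fg e$. The conjunct $\Fg e$ is a $\mu$-formula and forces a first $e$-position at some finite index $p$, so that the positions $1,\dots,p-1$ before it form a genuine finite data word $u$; the $\neg e$-guards make every navigation started inside $u$ stop at $p$, exactly mimicking the finite word $u$. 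Completeness is then immediate: from a PCP solution one builds $u$ as in Theorem~\ref{theorem:undecidability}, appends the fresh $e$-tail, and checks that $\varphi^{e}\wedge\Fg e$ holds, the tail being class-disjoint and thus inert.

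The hard part will be soundness, and in particular re-validating the one genuinely finiteness-dependent ingredient, Lemma~\ref{lemma:monotonic}. Its proof uses finiteness twice: to argue that a big witness forces a small witness, and to rewrite the $\Fg$-sugar as $\nu x.\,\psi\wedge\nextg x$ so that the ``increasing'' property lands in the $\mu$-fragment. Both steps survive here because the formula constrains only $a$- and $b$-labelled positions, all of which lie in the finite prefix $u$: the relevant chains range over the finitely many $a$-positions of $u$, and the guarded $\Fg/\Fc$ searches terminate within $u$. The remaining subtlety, which I would treat with care, is to rule out that in an arbitrary model a prefix $a$- or $b$-position has its class successor leak past $p$ into a non-$e$ position. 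I would forbid this inside the $\mu$-fragment by additionally forcing each $a$-class to be the two-element set $\{a\text{-position},\,b\text{-partner}\}$ using the guards $\firstc$, $\nextc(b\wedge\lastc)$ and their mirrors; Boolean combinations of these remain in the $\mu$-fragment, since only atomic negations occur. With class edges thus pinned inside the encoding, the restriction of any model to $u$ satisfies the original encoding conditions, so $u$ encodes a PCP solution, and undecidability over data $\omega$-words follows exactly as in Theorem~\ref{theorem:undecidability}.
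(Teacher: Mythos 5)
Your strategy is in essence the paper's own: append an infinite tail behind a fresh marker letter and relativize the finite-word formula to the prefix (the paper does this with a fresh letter $\sharp$, and does it \emph{generically} for an arbitrary $\mu$-formula, so that Theorem~\ref{theorem:undecidability} is invoked as a black box and Lemma~\ref{lemma:monotonic} never needs to be reopened). But your execution has two genuine gaps, and the first is the choice of guard. You relativize to the negated atom $\neg e$, whereas the paper relativizes to $\psi =$ ``no occurrence of the marker at or before the current position'', a prefix-closed property definable in the $\mu$-fragment by $\mu x.\,\neg e \wedge(\firstg\vee\prevg x)$. The difference is not cosmetic: in an arbitrary model of $\varphi^{e}\wedge\Fg e$ the non-$e$ positions need not form a prefix. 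Guarded global steps indeed cannot cross an $e$-position, but a guarded class step $\nextc(\neg e\wedge\cdot)$ can jump over the first $e$ and land in a later $e$-free block. What the relativized formula then explores is a disjoint union of $e$-free blocks glued together by class edges, not a finite data word, and soundness fails. Your proposed patch pins the size of the $a$-classes to two, but it does not force the $b$-partner---nor any other position participating in the encoding---to lie before the first $e$, so the leak remains. Relativizing to the prefix-closed guard $\psi$ eliminates the problem uniformly, with no PCP-specific constraints needed.

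The second gap is that you translate only the four unary modalities and leave $\lastg$ and $\lastc$ untouched. On a data $\omega$-word the denotation of $\lastg$ is empty, yet the PCP encoding formula must assert ``the word ends with $ab$'' and must make universal statements over positions; in the $\mu$-fragment over finite words, $\Gg\gamma$ is rendered as $\mu x.\,\gamma\wedge(\lastg\vee\nextg x)$, which needs $\lastg$ as its terminating case. With $\lastg$ left as is, these subformulas denote the empty set over $\omega$-words, so even your intended model $u\,e^{\omega}$ fails $\varphi^{e}\wedge\Fg e$: the easy (completeness) direction of your reduction already breaks. This is exactly why the paper's relativization additionally replaces $\lastg$ by $\nextg\sharp$ and reinterprets $\lastc$ as ``genuinely class-last, or the class successor lies at or beyond a $\sharp$''. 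Both repairs are straightforward, but as written the reduction is incorrect in both directions.
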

\begin{proof}
Consider a formula $\varphi$ of the $\mu$-fragment, our goal is to construct a formula $\varphi^\sharp$ such that $\varphi$
is satisfiable over data words if and only if $\varphi^\sharp$ is satisfiable over $\omega$-data words. In combination with
Theorem~\ref{theorem:undecidability}, this proves the statement.

The formula $\varphi^\sharp$ (for $\sharp$ a new fresh symbol) defines the data $\omega$-words $w$
such that:
\begin{itemize}
\item $w$ contains at least one occurrence of the letter $\sharp$,
\item the data $\omega$-word $w$ restricted to the positions that are to the left of all $\sharp$-occurrences satisfy $\varphi$.
\end{itemize}
Of course, if we can write such a formula, then it is satisfiable over data $\omega$-words if and only if $\varphi$ is satisfiable over data words. It
is also clear that the first item is definable in the $\mu$-fragment. Thus, we just have to turn $\varphi$ into a formula that is sensitive only to
the part of the word left of all $\sharp$'s. This is exactly the classical technique of relativization. Remark first that the property `being at the
left of all $\sharp$' is definable in the $\mu$-fragment. Let $\psi$ be such a formula. In our case,
relativizing $\varphi$ to $\psi$ consists in replacing syntactically every subformula of the form $\mathsf{M}(\gamma)$ for some modality
$\mathsf M\in\{\nextc,\nextg,\prevc,\prevg\}$ by $\mathsf{M}(\gamma\wedge\psi)$, $\lastg$ by $\nextg\sharp$ and $\lastc$
by $\lastc\vee\nextc\Sg\sharp$. The result is a formulas that holds over a
word if and only if $\varphi$ holds on the input restricted to its longest $\sharp$-free prefix.
\end{proof}

\subsection{The $\nu$-fragment}
 
Fortunately, the $\nu$-fragment is decidable.
We show that for every formula in the $\nu$-fragment there is an equivalent data automaton, which immediately yields the decidability
of the fragment as well.

\begin{theorem}\mlabel{theorem:nu-to-data}
For every formula $\varphi$ in the $\nu$-fragment there is an effectively constructed Data $\omega$-automaton
$\A_\varphi=(B,\Sigma',C,C_\omega)$ such that $\varphi$ and $\A_\varphi$ define the same data $\omega$-language. Moreover the data automaton
$(B,\Sigma',C)$  and $\varphi$ define the same data language.
\end{theorem}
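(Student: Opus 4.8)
The plan is to reduce satisfaction of a $\nu$-formula to the existence of a locally consistent labeling of positions by sets of subformulas, and then to let the two components of a data automaton each check one ``axis'' of this local consistency. First I would put $\varphi$ in guarded form using Lemma~\ref{guarded}, and let $\fl(\varphi)$ be its (finite) Fischer--Ladner closure, where $\nu x.\psi$ is identified with its unfolding $\psi[\nu x.\psi/x]$. Call a map $t$ assigning to each position $i$ a set $t(i)\subseteq\fl(\varphi)$ \emph{consistent} if at every position it respects the obvious local clauses: booleans are saturated ($\psi_1\wedge\psi_2\in t(i)$ iff both conjuncts are, dually for $\vee$); atoms agree with the actual structure ($p\in t(i)$ iff the letter is $p$, and $\S,\P,\firstg,\lastg,\firstc,\lastc$ hold in $t(i)$ exactly when they hold at $i$); the fixpoint clause $\nu x.\psi\in t(i)$ iff $\psi[\nu x.\psi/x]\in t(i)$; the global modal clauses $\nextg\psi\in t(i)$ iff $i$ has a successor with $\psi\in t(i{+}1)$, dually for $\prevg$; and the class modal clauses $\nextc\psi\in t(i)$ iff $i$ has a class successor with $\psi\in t(i\plusc1)$, dually for $\prevc$. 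The central claim is that $w\models\varphi$ iff there is a consistent $t$ with $\varphi\in t(1)$.

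Granting the claim, the automaton is immediate. I would let $\Sigma'$ carry, at each position, the guessed set $t(i)$. The transducer $B$ reads $\msp w$ (so it sees the markings $\S,\P$, and being letter-to-letter it knows position $1$ and, in the finite case, the last position), guesses $t(i)$, and outputs it, while checking all clauses living on the \emph{global} axis: boolean saturation, the fixpoint clause, agreement of $p,\S,\P,\firstg,\lastg$ with the input, the two clauses for $\nextg,\prevg$ (relating $i$ to its word-neighbours, visible to $B$), and that $\varphi\in t(1)$. The remaining clauses---agreement of $\firstc,\lastc$ and the clauses for $\nextc,\prevc$---live on the \emph{class} axis and are delegated to the class automata: since the members of a class occur consecutively in $\gsp{w'|_S}$, the finite-class automaton $C$ and the infinite-class Büchi automaton $C_\omega$ can check that $\firstc/\lastc$ mark exactly the first/last letters and that each $\nextc/\prevc$ obligation is matched against the adjacent class member. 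As $\Sigma'$ carries $t$, all components see the same labeling, so runs exist on all of them simultaneously iff the guessed $t$ is globally consistent; hence $\A_\varphi$ accepts $w$ iff a consistent $t$ with $\varphi\in t(1)$ exists. The restriction $(B,\Sigma',C)$ handles the finite-word case verbatim, all classes then being finite.

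The decisive feature of the $\nu$-fragment is that \emph{no nontrivial acceptance condition is needed}: consistency is a pure safety property, so $B$, $C$ and $C_\omega$ accept every run that never violates a clause (all states final, the Büchi conditions trivial, transitions simply absent when a clause fails)---this is exactly why only greatest fixpoints may appear. In the easy direction of the claim (completeness), the canonical labeling $t^\ast(i)=\{\psi\in\fl(\varphi):i\in\sem\psi_w\}$ is consistent: each clause is a semantic equivalence holding in $w$, the fixpoint clause by Knaster--Tarski; so $w\models\varphi$ yields a consistent witness. The hard direction (soundness), where I expect the real work, is to show every consistent $t$ is sound, i.e. $\psi\in t(i)\Rightarrow i\in\sem\psi_w$; applied to $\varphi\in t(1)$ this gives $w\models\varphi$.

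For soundness I would induct on the fixpoint nesting. For an innermost $\nu x.\psi$, the set $S=\{i:\nu x.\psi\in t(i)\}$ is, by the fixpoint and modal clauses, a post-fixpoint of the monotone operator $X\mapsto\sem\psi_{w[x:=X]}$ (the subformulas of $\psi$ other than $x$ being handled by the induction hypothesis), whence $S\subseteq\sem{\nu x.\psi}_w$ since the greatest fixpoint is the union of all post-fixpoints; outer fixpoints are treated identically, reading inner ones as already-verified, so a simultaneous post-fixpoint argument closes the induction. The point to stress is that, with no least fixpoints present, this is purely co-inductive and needs no well-foundedness or progress measure---infinite deferral of a $\nu$-obligation is harmless---which is precisely what legitimises the trivial acceptance conditions above. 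The main obstacle is therefore this soundness step for \emph{nested} greatest fixpoints: getting the bookkeeping of the simultaneous post-fixpoints right, and checking that guardedness forces every recursive occurrence through a modality so that the operators above are the intended ones. Everything else---finiteness of $\fl(\varphi)$, the letter-to-letter and Büchi bookkeeping of $B$, and the routine verifications in $C,C_\omega$---is mechanical.
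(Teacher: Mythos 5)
Your proposal matches the paper's own proof in all essentials: the same guessed labeling of positions by sets of closure formulas (the paper calls them atoms), the same split of the local consistency conditions between the transducer $B$ (global axis, conditions on $\nextg,\prevg,\firstg,\lastg$, letters, markings, and $\varphi$ at position~$1$) and the class automata $C,C_\omega$ (class axis, $\nextc,\prevc,\firstc,\lastc$), completeness via the canonical truth labeling, and soundness via the post-fixpoint characterization of greatest fixpoints. The only cosmetic difference is that the paper organizes the soundness induction on the structure of the formula, handling a fixpoint $\nu x.\chi(x)$ by re-reading the atoms as atoms over $\fl(\chi(x))$ with $x$ treated as a proposition on the relabeled word, whereas you induct on fixpoint nesting with simultaneous post-fixpoints; this is the same argument with different bookkeeping.
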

\newcommand{\ST}{\mathit{ST}}
\begin{proof} 
It is a general fact that the $\nu$-fragment of $\mu$-calculus over a set of modalities that are definable in $\fotwo$ can be
defined in $\emsotwo$ using the standard translation. This fact along with the theorem \ref{theorem:dataemso2}
implies that $\nu$-fragment is subsumed by data automata. In the following we give the standard construction
for the $\nu$-fragment which will be used elsewhere in the paper.

We need the following definitions. Let $\mathrm{Prop}(\varphi)$ be the set of all propositional variables used in $\varphi$,
and let $\mathrm{Sub}(\varphi)$ be the set of all subformulas of $\varphi$.
\begin{definition}
\label{definition:closure}
The {\em closure} $\fl\(\varphi\)$ of $\varphi$ is the smallest set such that,
\begin{enumerate}
 \item $\mathrm{Prop}(\varphi) \cup \{ \varphi, \S, \P, \firstc, \firstg, \lastc, \lastg\}$ and their negations belong to $\fl(\varphi)$,
 \item If $\psi \in \fl(\varphi)$ then $\neg \psi$ (negation is pushed to the literals) belongs to $\fl(\varphi)$,
 \item If $\varphi_1 \wedge \varphi_2 \in \fl(\varphi)$ or $\varphi_1 \vee \varphi_2 \in \fl(\varphi)$ then $\varphi_1 \in \fl(\varphi)$ and
$\varphi_2 \in \fl(\varphi)$,
 \item If one of $\nextc \varphi_1, \nextg \varphi_1, \prevc \varphi_1, \prevg \varphi_1$ is in $\fl(\varphi)$, then $\varphi_1 \in \fl(\varphi)$,
 \item If $\nu x. \varphi_1(x) \in \fl(\varphi)$ then $\varphi_1 (\nu x. \varphi_1(x)) \in \fl(\varphi)$.
 \item If $\mu x. \varphi_1(x) \in \fl(\varphi)$ then $\varphi_1 (\mu x. \varphi_1(x)) \in \fl(\varphi)$.
 \end{enumerate}
 \end{definition}
 \begin{definition}
 \label{definition:atom}
 An {\em atom} $A$ is a subset of $\fl(\varphi)$ that satisfies the following properties:
\begin{enumerate}
 \item For all $\psi \in \fl(\varphi)$, $\psi \in A$ iff $\neg \psi \not \in A$,
 \item For all $\varphi_1 \vee \varphi_2 \in \fl(\varphi)$,  $\varphi_1 \vee \varphi_2 \in A$ iff $\varphi_1 \in
A$ or $\varphi_2 \in A$,
 \item For all $\nu x. \varphi_1(x) \in \fl(\varphi)$,  $\nu x. \varphi_1(x) \in A$ iff $\varphi_1(\nu x.
\varphi_1(x)) \in A$. 
\end{enumerate}
\end{definition}
Now we describe how the data $\omega$-automaton $\A_{\varphi}=(B, \Sigma', C, C_\omega)$ works on a given data $\omega$-word $w$. The internal
alphabet $\Sigma'$ is precisely the set
of all 
 atoms in $\fl(\varphi)$. The automaton $B$ while reading the marked string projection of $w$ labels each position with an atom $A_i$
and outputs it.
It 
also verifies that
\begin{enumerate}[(i)]
 \item $\firstg \in A_i$ iff $i$ is the first position and $\lastg \in A_i$ iff $i$ is the last position,
 \item $p \in A_i$ iff the label at position $i$ is $p$, 
 \item let $\tp{i} = (p,s)$ then $\S \in A_i$ iff the marking $s=\S$, similarly, $\P \in A_i$ iff the marking $p$ is $\P$,
 \item $\nextg \varphi_1 \in A_i$ iff $\varphi_1 \in A_{i+1}$.
 \item $\prevg \varphi_1 \in A_i$ iff $\varphi_1 \in A_{i-1}$,
 \item $A_1$ contains $\varphi$.
\end{enumerate}

The class automata $C$ and $C_\omega$ running over a class verifies that,
\begin{enumerate}[(a)]
\item $\firstc \in A_i$ iff $i$ is the first position of a class and $\lastc \in A_i$ iff $i$ is the last
position of a class,
\item $\nextc \varphi_1 \in A_i$ iff $\varphi_1 \in A_{i \plusc 1}$,
\item $\prevc \varphi_1 \in A_i$ iff $\varphi_1 \in A_{i \minusc 1}$.
\end{enumerate}

To show the correctness of the construction assume that $w \in L(\varphi)$ and consider the run of $B$ in which the word $w$ is labelled with the
atoms
$A_i$ such that formulas in $A_i$ hold at position $i$. It follows from definitions that both $B$, $C$ and $C_\omega$ have successful runs on this
particular
transduction and hence the word is accepted.

For the other direction we need to show that ($\star$) if $\A_{\varphi}$ has a successful run on $w$ then $w \in L(\varphi)$.
Observe that if $\A_{\varphi}$ has a successful run on $w$ then there is an annotation $A_1, A_2,\ldots,$ of it which satisfy 
the conditions (i--vi) and (a--c). To
prove ($\star$) we prove the stronger claim that   
{\em For every formula $\varphi$ in the $\nu$-fragment and for every data word $w$ and for every sequence $A_i$ of atoms in $\fl(\varphi)$ satisfying 
conditions (i--vi) and
(a--c) and for every $\psi \in \fl(\varphi) \cap \mathrm{sub}(\varphi)$, if $\psi\in A_i$ then
$w,i \models \psi$}. Obviously this claim in conjunction with condition (vi) implies ($\star$). Proof is by induction on
the structure of the formula. For propositions,
their negations, and zeroary modalities the claim is guaranteed by the conditions (i--iii) and
(a). For the case of boolean operators 
and unary modalities,
we use induction hypothesis and conditions (iv-v) and (b-c). 
The only remaining case is when $\psi$ is of the form $\nu x. \chi(x)$. Consider the data word $w[\ell(x):=\{ i \mid
\psi \in A_i\}]$. Let $A_1',A_2',\ldots $ be the sequence of atoms in $\fl(\chi(x))$ (considering $x$ as a propositional variable) uniquely defined as
$A_i'= \{ \phi[\nu x.\chi(x)/x] \mid \phi \in A_i\} \cap \fl(\chi(x))$. One can easily verify that
$A_1',A_2',\ldots$ satisfy the conditions (i--vi) and
(a--c) on the data word $w[\ell(x):=\{ i \mid
\psi \in A_i\}]$. Hence by induction hypothesis $w[\ell(x):=\{ i \mid
\psi \in A_i\}], i \models \chi(x)$. Therefore the set  
$\{ i \mid
\psi \in A_i\}$ is a post-fixpoint of the function $\chi(x)$ on $w$. Since the greatest
fix point subsumes any post-fixpoint we conclude that for any position $i$ such that $\nu x. \chi(x) \in A_i$
it is the case that $w,i\models \nu x. \chi(x)$.
\end{proof}

We dont know if the containment of $\nu$-fragment in DA is strict. The decidability of the $\nu$-fragment follows from the above theorem. We also note
that the $\nu$-fragment is not closed effectively under complement
since it is decidable
while its complement is not decidable. In fact, building on the formulas used for undecidability of the $\mu$-fragment, we can prove that it is not
closed under complement, 
even non-effectively. Let us finally note that the $\nu$-fragment extended with the zeroary predicates discussed in the previous section is also 
decidable by translation to data automata.


\section{The bounded reversal and bounded mode alternation fragments}
\mlabel{section:bounded-reversal}
In this section we introduce the main fragments discussed in the paper, namely Bounded Reversal (BR) and Bounded Mode Alternation (BMA).
We begin by presenting the $\comp$ hierarchy, which is the logical counterpart to cascade of automata, we then introduce the
BR and BMA fragments.

\subsection{Composition and the BR and BMA logics}
\mlabel{comp}

Before delving into the technical details let us outline the intuition behind each of the fragments.
Each modality in the $\mu$-calculus goes either left ($\prevg,\prevc$)
or right ($\nextg,\nextc$) to evaluate the argument formula. 
A formula is in the BR fragment if the number of times
the formula switches between the ``left'' and ``right'' directions is bounded. 
Just like every modality in our logic has a direction, it has a mode.
Each modality in the $\mu$-calculus is either a class modality ($\nextc,\prevc$)
or a global modality ($\nextg,\prevg$). A formula is in the BMA fragment if the number of times the formula
switches between the ``class'' mode and ``global'' mode is bounded.
The formal way to describe these fragments is as composition of formulas that are purely ``left'' or purely ``right''
(in the BR case), or purely ``global'' or purely ``class'' (in the BMA case). This is done
using the $\comp$-operator from $\mu$-calculus.
\begin{definition}
Let $\Psi$ be a set of $\mu$-calculus formulas.  Define the sets 
\begin{itemize}
\item $\comp^0(\Psi)=\emptyset$,
\item $\comp^{i+1}(\Psi)=\{\psi(\varphi_1, \ldots, \varphi_n)~|~\psi(x_1,\ldots, x_n)\in\Psi,~\varphi_1,\ldots, \varphi_n\in \comp^i(\Psi)\}$ where
the substitution follows the usual condition that none of the free variables of 
$\varphi_1,\ldots, \varphi_n$ get bound in  $\psi(\varphi_1, \ldots, \varphi_n)$. 
\end{itemize}
The set of formulas $\comp(\Psi)$ is defined as $\comp(\Psi) = \bigcup_{i \in \N} \comp^i(\Psi)$. For a formula $\psi\in \comp(\Psi)$ 
we define the {\em $\comp$-height of $\psi$ in $\comp(\Psi)$} as the least $i$ such that $\psi \in  \comp^i(\Psi)$.
\end{definition}

Next we formally define BR and BMA. 
If $M$ is a set of modalities, then $\mathsf{Formulas}(M)$ is defined as the subset of $\mu$-calculus which uses only 
the modalities $M$ (apart from the zeroary modalities). 
\begin{definition}[BR and BMA] \mlabel{definition:BR}\mlabel{definition:BMA}
Let $M_{\mathtt{X}} = \{\nextc, \nextg\}$, $M_{\mathtt{Y}}=\{\prevc, \prevg\}$,  
$M_{\mathit{g}}=\{\nextg, \prevg\}$ and $M_{\mathit{c}}=\{\nextc, \prevc\}$.

The \intro{BR fragment} of $\mu$-calculus is the set of formulas  
$\comp\(\mathsf{Formulas}\(M_{\mathtt X}\)\cup \mathsf{Formulas}\(M_{\mathtt{Y}}\)\))$.

The  \intro{BMA fragment} of $\mu$-calculus is the set of formulas 
$\comp\(\mathsf{Formulas}\(M_{\mathit{g}}\) \cup \mathsf{Formulas}\(M_{\mathit{c}}\)\)$.
\end{definition}
\begin{example} \mlabel{brexample}
Define
\begin{align*}
\varphi_1&=\nu x.(\dnextc x \vee \nextg \mu y. ( q \wedge \dprevc y )), \varphi_2 = \nu x. \( \nextc\last \vee \nextc\prevg x\),\\
\varphi_3&=\mu x. ((\nu y.\, q \vee \nextc y) \vee \nextg x \vee \prevg x), \varphi_4 = \mu x.(\nextc\nextg x\vee p).
\end{align*}
The formula $\varphi_1$ is in BR (comp-height 2) and in BMA (comp-height 3). The formula $\varphi_2$ is neither in BR nor in BMA.
The formula $\varphi_3$ is in BMA (comp-height 2) but not in BR. The formula $\varphi_4$ is in BR (comp-height 1) but not in BMA.
\end{example}
\newcommand{\bridge}{\mathsf{Bridge}}
\begin{example}
\mlabel{Example:Bridge}
Define the language $\bridge_k$ as the set of all data words such that,
by applying global successor, followed by class successor, \dots ($k$-times),
one reaches a position labeled with letter $a$.
This language is described by the formula, 
$$\overbrace{\nextg\nextc\dots \nextg\nextc}^{\text{$k$-times}}a\ .$$   
It is BR (of comp-height 1) and in BMA (of comp-height $2k$).
The language $\bridge$ is the union of all $\bridge_k$, and can be described
by the formula $\mu x.(\nextg\nextc x\vee a)$. It is BR (of comp-height 1) but not in BMA.
\end{example}

\begin{theorem}[BMA $\subseteq$ BR]
\label{theorem:bma-to-br}
For every formula $\varphi$ in BMA of $\comp$-height $k$ there is an equivalent (over data words and data $\omega$-words) 
formula $\varphi'$ in BR of $\comp$-height $k+1$.
 \end{theorem}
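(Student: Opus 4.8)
The plan is to argue by induction on the $\comp$-height $k$, exploiting the single structural fact that makes everything work: a position $i$ separates the word into the part strictly before it and the part strictly after it, and likewise separates the class of $i$ into its strict class-past and its strict class-future. Consequently any path built from global steps $\nextg,\prevg$ (resp.\ class steps $\nextc,\prevc$) that leaves the half-line to the right of $i$ and later reaches the half-line to its left must pass through $i$ itself. This is exactly the situation in which one can \emph{separate} a formula mixing future and past modalities of a single mode into a combination of a pure-future and a pure-past formula, glued at the current position. Since both the global order and each class order are linear, the same separation is available in both modes.

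Accordingly, the technical heart is a separation lemma for single mode-pure blocks: every $\psi\in\mathsf{Formulas}(M_g)$ is equivalent, at every position and over both finite and $\omega$-words, to a bounded composition of formulas from $\mathsf{Formulas}(\{\nextg\})$ and $\mathsf{Formulas}(\{\prevg\})$, and symmetrically every $\psi\in\mathsf{Formulas}(M_c)$ to a bounded composition of $\mathsf{Formulas}(\{\nextc\})$ and $\mathsf{Formulas}(\{\prevc\})$ formulas. I would prove this in the spirit of Gabbay's separation theorem, or equivalently by viewing $\psi$ as a two-way alternating automaton running along the (global or class) line and replacing it, through a crossing-sequence analysis at the cut through $i$, by a one-way composition that reverses only boundedly often. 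The delicate point here, and the step I expect to be the main obstacle, is handling fixpoints that genuinely alternate between the two directions (e.g.\ $\mu z.(a\vee\nextg z\vee\prevg z)$): the alternation must be broken at the cut, re-expressing an unbounded two-way search as iterated future-excursions and past-excursions, each itself definable by a direction-pure fixpoint.

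The reason the height grows by only one is that the two generating directions of $\mathrm{BR}$ are mode-agnostic: $M_{\mathtt X}=\{\nextc,\nextg\}$ collects \emph{all} future modalities and $M_{\mathtt Y}=\{\prevc,\prevg\}$ collects \emph{all} past ones. Hence the future component produced when separating any block, whether global or class, lands in $\mathsf{Formulas}(M_{\mathtt X})$ and the past component in $\mathsf{Formulas}(M_{\mathtt Y})$. This lets pieces of the same direction coming from two consecutive levels of the original mode-alternation fuse into a single $\mathrm{BR}$ block rather than stack. I would carry this through the induction by maintaining, for each subformula, both a future-topped and a past-topped $\mathrm{BR}$-equivalent of $\comp$-height at most $k+1$, so that at each composition node the copy whose top direction matches the adjacent separated piece can be glued without adding a level, the opposite-direction pieces contributing the single extra level. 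A naive separation of all $k$ blocks would produce $2k$ direction-pure layers, but the fusions across the $k-1$ internal junctions collapse them to $k+1$.

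The final points to watch are bookkeeping and correctness. I must verify that the separation is stated for truth at \emph{every} position, not merely at position $1$, since blocks are substituted compositionally; that the side condition of the $\comp$-operator (no free variable of a substituend being captured) survives the fusions; and that the argument is uniform over finite and $\omega$-words, the only structural input being linearity of both orders. Pinning down the exact amortization that yields $k+1$ rather than $2k$, that is, showing every internal junction can indeed be arranged to fuse even when a fixpoint variable of the upper block occurs in both its future and its past component, is where the proof will require the most care.
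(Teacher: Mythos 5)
Your plan follows the same route as the paper's proof: induct on $\comp$-height and, for each mode-pure block, separate truth at a position into a pure-past and a pure-future condition. The paper implements your separation lemma exactly by the automaton detour you mention as the alternative to Gabbay-style separation: a formula $\psi\in\mathsf{Formulas}(M_{\mathit{g}})$ is translated, via the classical correspondence between $\mu$-calculus and finite-state (resp.\ B\"uchi) automata, into a nondeterministic automaton over the marked string projection whose successful runs mark the positions where $\psi$ holds; then $w,i\models\psi$ iff for some state $q$ the prefix $w[1:i]$ admits a run from the initial state to $q$ (a condition written using only $\prevg$, hence in $\mathsf{Formulas}(M_{\mathtt{Y}})$) and the suffix admits an accepting run from $q$ (a condition using only $\nextg$, hence in $\mathsf{Formulas}(M_{\mathtt{X}})$); class blocks are handled identically on class projections with $\prevc,\nextc$. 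The crucial point your sketch leaves open is the exact \emph{shape} of the separation output: it is not an arbitrary ``bounded composition'' but a \emph{Boolean combination} of direction-pure formulas, namely a finite disjunction over states $q$ of (past condition $\wedge$ future condition).

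Pinning down that shape is precisely what dissolves the part you flag as the main obstacle, and the resolution is simpler than your fusion/two-variant bookkeeping. Boolean connectives are modality-free, hence belong to $\mathsf{Formulas}(M)$ for \emph{every} set of modalities $M$; consequently, when a Boolean combination $\mathrm{Bool}(\bar\chi)$ of BR formulas $\bar\chi$ of $\comp$-height $k$ is substituted into a direction-pure formula $\theta$, the skeleton is absorbed, $\theta(\mathrm{Bool}(\bar\chi))=(\theta\circ\mathrm{Bool})(\bar\chi)$ with $\theta\circ\mathrm{Bool}$ still direction-pure, so the result has $\comp$-height $k+1$ rather than $k+2$. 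The paper therefore carries the invariant ``every BMA formula of $\comp$-height $k$ is equivalent to a Boolean combination of BR formulas of $\comp$-height $k$'' through the induction: separate the top block as above, substitute the inductively obtained Boolean combinations into its direction-pure pieces, absorb the skeletons; the final $+1$ is contributed only by the outermost Boolean skeleton. In particular no matching of directions across junctions is ever needed -- the skeleton-topped form composes under a future layer and a past layer alike -- which also settles your worry about a fixpoint variable of the upper block occurring in both its future and its past component: all its occurrences simply receive the same skeleton-topped substituend. With the separation lemma stated in the disjunction-over-states form, your induction goes through and coincides with the paper's argument.
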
 
 \begin{proof} 
We prove the following claim by induction, {\it for every formula of $\varphi$ in BMA of $\comp$-height $k$ there is an 
there is an equivalent (over data words and data $\omega$-words) formula $\varphi'$ which is a boolean combination of 
formulas in BR of $\comp$-height $k$.} Note that since a boolean combination of BR formulas of $\comp$-height $k$
has $\comp$-height $k+1$ the theorem follows.

For the base case let $\varphi$ be in $\mathsf{Formulas}\(M_{\mathit{g}}\) \cup \mathsf{Formulas}\(M_{\mathit{c}}\)$ 
(of $\comp$-height $1$). Consider the case when $\varphi$ is in $\mathsf{Formulas}\(M_{\mathit{g}}\)$. 
Let $w$ be a data word (\resp{}. data $\omega$-word) and $i$ be a position in $w$,
The idea is to translate $\varphi$ into an equivalent finite state (\resp{} B\"{u}chi) 
automaton and re-encode it as a boolean combination of $\mathsf{Formulas}\(M_{\mathtt{X}}\) \cup \mathsf{Formulas}\(M_{\mathtt{Y}}\)$. 
One can think of 
 $\varphi$ as a formula evaluated over a word ($\omega$-word) 
 $w$ over the alphabet $P=2^{\mathit{Prop}(\varphi)} \times \M$. 
 Utilizing the correspondence between $\mu$-calculus and finite state
 (\resp{} B\"{u}chi) automata, there is a finite state (\resp{} B\"{u}chi) automaton 
 $A_\varphi=\(Q, P, \Delta, q_0, F\)$ with the set of states
$Q$, the set of transitions 
 $\Delta \subseteq Q \times P \times Q$, 
 the initial state $q_0$ and the set of final states (\resp{} B\"uchi states) $F$, 
 equivalent to $\varphi$ in the following sense. There is a state
$q\in Q$ such that
if 
 $A_{\varphi}$ has a successful run $\rho=q_0 q_1 \ldots q_n$(\resp{} $\rho=q_0 q_1 \ldots$)  then for all positions $i$, it is the case that 
 $w,i \models \varphi$ if and only if $q_i = q$.
 Therefore to verify that $w,i \in \varphi$ it is enough to check that 
 (1) the automaton $A_{\varphi}$ has a run
 starting in the state $q_0$ ending in state $q$ on the prefix $w[1:i]$ (2) $A_{\varphi}$ has a successful run
 starting in the state $q$ on the suffix $w[i+1:n]$ (\resp{} $w[i+1:\infty]$). 
 We can encode condition (1) using a $\mu$-calculus formula using only
 the modality $\prevg$ and condition (2) using a formula using only the modality $\nextg$. Thus $\varphi$ is equivalent to a boolean
combination of formulas in $\mathsf{Formulas}\(M_{\mathtt{X}}\) \cup \mathsf{Formulas}\(M_{\mathtt{Y}}\)$.
When $\varphi$ is in $\mathsf{Formulas}\(M_{\mathit{c}}\)$ the construction is similar except that while encoding the run of the automaton
$A_{\varphi}$ we use the 
 modalities $\prevc$ and $\nextc$.
 
 For the inductive step, let $\varphi = \psi(\varphi_1,\ldots,\varphi_k)$ be a BMA formula of $\comp$-height $k+1$ 
 where $\psi(x_1,\ldots,x_k)\in
\mathsf{Formulas}\(M_{\mathit{g}}\) \cup \mathsf{Formulas}\(M_{\mathit{c}}\)$ and $\varphi_1,\ldots, \varphi_k$ are BMA formulas of $\comp$-height
$k$. Using induction hypothesis we obtain $\varphi_1',\ldots, \varphi_k'$ which are boolean combinations of BR formulas
of $\comp$-height $k$ 
and are equivalent to
$\varphi_1,\ldots, \varphi_k$ respectively. Repeating the previous argument we also obtain
$\psi'(x_1,\ldots,x_k) \in \mathsf{Bool}(\mathsf{Formulas}\(M_{\mathtt{X}}\) \cup \mathsf{Formulas}\(M_{\mathtt{Y}}\))$ equivalent
to $\psi(x_1,\ldots, x_k)$. To conclude observe that $\psi'(\varphi_1',\ldots,\varphi_k')$ is a boolean combination
of BR formulas of $\comp$-height at most $k+1$.
 \end{proof}

Next we show that BR is subsumed by the $\nu$-fragment over data words. The result extends to data $\omega$-words partially.

\begin{lemma}
Let $\varphi(x,\bar{y})$ be a formula such that the only unary modalities it uses are $\prevg,\prevc$
and furthermore any free occurrence of $x$ appears in the scope of at least $k$ nested modalities. Then for any data 
word (\resp{} data $\omega$-word) $w$ and valuation
$S_1, \ldots, S_l$ of $\bar y=y_1,\dots,y_l$, and $S$ of $x$, and for all $i<k$,
\begin{align*}
w[\ell(\bar {y}):=\bar{S}, \ell(x) = S],i &\models \varphi \\ 
&\Leftrightarrow w[\ell(\bar {y}):=\bar{S}, \ell(x) =
\emptyset], i \models \varphi\ . 
\end{align*}
\label{lemmaforBR-Y}
\end{lemma}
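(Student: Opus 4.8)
The plan is to prove this by induction on the structure of the formula $\varphi$, tracking how the ``depth of modalities seen so far'' controls whether the value of $x$ can yet influence the semantics. The intuition is entirely local: the formula $\varphi$ only moves \emph{backwards} (using $\prevg,\prevc$), so evaluating $\varphi$ at a position $i$ can only inspect positions $j \le i$, and each nested modality strictly decreases the position index by one step along $-1$ or $\minusc 1$. Since every free occurrence of $x$ is guarded by at least $k$ nested backward modalities, to ``reach'' such an occurrence from position $i$ one must take at least $k$ backward steps, landing at a position $\le i-k < 0$ whenever $i < k$; but there are no such positions, so the occurrence of $x$ is never actually consulted, and hence the valuation of $x$ is irrelevant.

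To make this precise I would strengthen the induction hypothesis to track the remaining modal budget. Concretely, I would prove the following claim by induction on $\varphi$: \emph{for every subformula $\psi$ of $\varphi$ in which every free occurrence of $x$ lies within the scope of at least $m$ nested backward modalities, and for every position $i < m$, the membership $i \in \sem{\psi}_{w[\ell(\bar y):=\bar S,\, \ell(x):=S]}$ does not depend on $S$.} The lemma is the case $m=k$. The base cases are immediate: atomic formulas $A$, $\neg A$, and propositional/fixpoint variables other than $x$ do not mention $x$ at all, while the variable $x$ itself can only appear as a subformula with $m=0$, so the constraint $i<m$ is vacuous and there is nothing to prove. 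For the Boolean connectives $\psi_1 \wedge \psi_2$ and $\psi_1 \vee \psi_2$, the bound $m$ passes unchanged to both conjuncts/disjuncts, so the induction hypothesis applies directly. For the backward modalities, $\sem{\prevg \psi_1}_w = \sem{\psi_1}_w + 1$ and $\sem{\prevc \psi_1}_w = \sem{\psi_1}_w \plusc 1$, so $i \in \sem{\prevg\psi_1}_w$ iff $i-1 \in \sem{\psi_1}_w$ (and similarly $i\minusc 1$ for $\prevc$): in $\psi_1$ every free occurrence of $x$ is guarded by $m-1$ modalities, and the relevant position $i-1$ (resp.\ $i\minusc 1$) is $< m-1$ because it is strictly smaller than $i < m$, so the induction hypothesis for $\psi_1$ with budget $m-1$ gives independence of $S$. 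The crucial use of the ``only backward modalities'' restriction is precisely that both $i-1$ and $i\minusc 1$ are strictly less than $i$, so the budget and the position bound decrease together.

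The only genuinely delicate case is the fixpoint $\lambda z.\,\psi_1(z)$ (with $z \ne x$). Here I would argue that the value of $S$ does not affect the \emph{entire set} $\sem{\lambda z.\psi_1}_w$ restricted to positions below $m$; that is, I prove simultaneously for all positions $< m$ that the fixpoint does not depend on $S$. The approach is to observe that the fixpoint is computed as a limit (intersection of pre-fixpoints for $\mu$, union of post-fixpoints for $\nu$) of iterating the monotone operator $T_z \colon R \mapsto \sem{\psi_1}_{w[\ldots,\ell(z):=R]}$, and to show by an inner induction on the ordinal stages that each stage, restricted to positions $<m$, is independent of $S$. Within $\psi_1(z)$ the free occurrences of $x$ are still guarded by $m$ modalities (binding $z$ does not change this), so the induction hypothesis on $\psi_1$ applies at each stage; one only has to check that the two candidate runs (with $S$ and with $\emptyset$) produce the same restriction below $m$ at every stage, which follows because the operator's output at a position $i<m$ depends only on the operand's values at positions $<i<m$, a region already known to coincide. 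I expect this fixpoint case to be the main obstacle, since it requires phrasing the statement uniformly over all positions below $m$ rather than pointwise, so that the inner fixpoint induction goes through. The intended conclusion, the lemma's displayed equivalence, is then the special case $m=k$ and $i<k$.
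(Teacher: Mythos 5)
Your proposal follows the same skeleton as the paper's proof --- a structural induction in which the modal budget and the position bound decrease together at each $\prevg$/$\prevc$ step (using $i-1<m-1$ and $i\minusc 1\le i-1$), with the Boolean cases passing through --- but it diverges at the fixpoint case, and there your route is genuinely different and in fact sounder than the paper's. The paper handles $\theta y_j.\,\psi$ by arguing that, since $\psi$ agrees at positions $<k$ under the two valuations of $x$, a set is a pre-fixpoint (resp.\ post-fixpoint) of $\psi$ under $\ell(x)=S$ iff it is one under $\ell(x)=\emptyset$, and then invokes Knaster--Tarski. That inference is invalid: agreement below $k$ says nothing about positions $\ge k$, and the families of pre-fixpoints can genuinely differ. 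For instance, for $\varphi(x)=\mu z.\,(\prevg\prevg x \vee \prevg z)$ on a word of length $5$ with $\ell(x)=\{1\}$ versus $\ell(x)=\emptyset$, the least fixpoints are $\{3,4,5\}$ and $\emptyset$ respectively, and $\emptyset$ is a pre-fixpoint in the second case but not in the first; only the restrictions to positions $<2$ coincide, which is exactly what the lemma asserts. Your formulation --- prove that the \emph{restriction to positions $<m$} of every ordinal stage of the fixpoint iteration is independent of $S$, and pass to the limit --- is the correct repair, and it is what a careful version of this proof has to do.

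The one obligation you have not discharged is the causality property you appeal to inside the stage induction: that for backward-only formulas the operator's value at a position $i$ depends only on the operand's values at positions $\le i$ (note $\le i$, not $<i$ as you wrote --- the bound variable $z$ may occur unguarded in $\psi_1$; this is harmless since $\{j \mid j\le i\}\subseteq\{j\mid j<m\}$ when $i<m$). This property is intuitively clear, but it is itself proved by a structural induction whose fixpoint case requires the very same stage argument, so it cannot simply be cited as obvious. The clean way out is to strengthen your induction hypothesis so that locality and $x$-independence are proved simultaneously: show that $i\in\sem{\psi}_{V}$ iff $i\in\sem{\psi}_{V'}$ whenever, for every variable $v$ whose free occurrences in $\psi$ all lie under at least $d_v$ backward modalities, the valuations $V(v)$ and $V'(v)$ agree on all positions $\le i-d_v$. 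This single statement passes through modalities (both $d_v$ and $i$ drop by at least one), Booleans, and --- via your stage induction --- fixpoints, and it instantiates both to your locality claim ($d_v=0$) and to the lemma ($d_x=k$ and $i<k$, where the condition on $x$ becomes vacuous).
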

\begin{proof}
Without loss of generality assume that $x$ is not a bound variable in $\varphi(x,\bar{y})$ (otherwise rename the occurrences of $x$).
We proceed by an induction on the pair $(k,i)$ ordered lexicographically (for all $i\geq k$ the claim holds trivially); For the base case when $k=1$,
the claim is vacuously true. For the inductive step
assume the claim is true for pairs $(k',i')$ where $k'<k$ or, $k'=k$ and $i'<i$. 
Let $\varphi(x,\bar{y})$ be a formula in which $x$ appears with in the scope of $k+1$ nested modalities.
We do an induction on the structure of the formula. 
Let $\varphi(x,\bar{y})$ is of the form $\mathtt{M}\psi(x,\bar{y})$ where $\mathtt{M} \in
\{\prevg,\prevc\}$. We do a case analysis on $\mathtt{M}$.
Assume $\mathtt{M}$ is $\prevg$ (the case when $\mathtt{M}$ is $\prevc$ being analogous) then
\begin{align*}
w[\ell(\bar {y})&:=\bar{S}, \ell(x) = S],i \models \mathtt{M}\psi(x,\bar{y})\\ 
                &\Leftrightarrow w[\ell(\bar {y}):=\bar{S}, \ell(x) = S],i-1 \models \psi(x,\bar{y}) \tag{By defn. of $\prevg$}\\
             &\Leftrightarrow w[\ell(\bar {y}):=\bar{S}, \ell(x) = \emptyset],i-1 \models \psi(x,\bar{y}) \tag{$i < k \Rightarrow i-1 < k-1$,
hence by IH}\\
&\Leftrightarrow w[\ell(\bar {y}):=\bar{S}, \ell(x) = \emptyset],i \models \mathtt{M}\psi(x,\bar{y})\\
\end{align*}
The boolean cases are straightforward. Next assume $\varphi(x,\bar{y})$ is of the form $\theta y_i. \psi(x,\bar{y})$ ($\theta \in \{\mu,\nu\}$). We
have to show that  
\begin{align*}
w[\ell(\bar {y})&:=\bar{S}, \ell(x) = S],i \models \theta y_i. \psi(x,\bar{y}) \\
                &\Leftrightarrow w[\ell(\bar {y}):=\bar{S}, \ell(x) =
\emptyset], i \models \theta y_i. \psi(x,\bar{y})\ .\\ 
\end{align*}
By induction hypothesis (on the structure of the formula)
\begin{align*}
w[\ell(\bar {y})&:=\bar{S}, \ell(x) = S],i \models \psi(x,\bar{y}) \\
                &\Leftrightarrow w[\ell(\bar {y}):=\bar{S}, \ell(x) = \emptyset], i \models \psi(x,\bar{y})\ .
\end{align*}
Hence $S_i$ is a pre-fixpoint (\resp{} post-fixpoint) of $\psi(x,\bar{y})$ on $w[\ell(\bar {y}):=\bar{S}, \ell(x) = S]$
if and only if it is a pre-fixpoint (\resp{} post-fixpoint)
of $\psi(x,\bar{y})$ on $w[\ell(\bar {y}):=\bar{S}, \ell(x) = \emptyset]$. Hence the claim is proved by Knaster-Tarski theorem.
This concludes the induction.
\end{proof}

By symmetry the following lemma also holds,
\begin{lemma}
Let $\varphi(x,\bar{y})$ be a formula such that the only unary modalities it uses are $\nextg,\nextc$ and
furthermore any occurrence of $x$ appears in the scope of at least $k$ nested modalities. Then for any data 
word $w$ of length $n$ and valuation
$S_1, \ldots, S_l$ of $\bar y=y_1,\dots,y_l$, and $S$ of $x$, and for all $i>n-k$,
\begin{align*}
w[\ell(\bar {y}):=\bar{S}, \ell(x) = S],i &\models \varphi \\
&\Leftrightarrow w[\ell(\bar {y}):=\bar{S}, \ell(x) =
\emptyset], i \models \varphi\ . 
\end{align*}
\label{lemmaforBR-X}
\end{lemma}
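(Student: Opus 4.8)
The plan is to derive this lemma from Lemma~\ref{lemmaforBR-Y} by the mirror symmetry between past and future that the phrase ``by symmetry'' alludes to; it applies only to finite words, which is why the reversal is available. For a finite data word $w=(a_1,d_1)\dots(a_n,d_n)$ let $w^R=(a_n,d_n)\dots(a_1,d_1)$ be its reversal, so that position $i$ of $w$ corresponds to position $n+1-i$ of $w^R$. Reversal turns the successor relation of $w$ into the predecessor relation of $w^R$, and the class-successor relation into the class-predecessor relation (the least later position with a given value becomes the greatest earlier one). For a formula $\varphi$ using only $\nextg,\nextc$ let $\varphi^R$ be obtained by the swaps $\nextg\!\leftrightarrow\!\prevg$, $\nextc\!\leftrightarrow\!\prevc$, $\firstg\!\leftrightarrow\!\lastg$, $\firstc\!\leftrightarrow\!\lastc$, $\S\!\leftrightarrow\!\P$, leaving propositions and their negations untouched; for a valuation $S$ put $S^R=\{\,n+1-j\mid j\in S\,\}$. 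Note that $\emptyset^R=\emptyset$, that reversal preserves the number of nested modalities in whose scope $x$ lies, and that $\varphi^R$ uses only $\prevg,\prevc$.

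First I would establish, by a routine induction on $\varphi$, the invariance
\[
w[\ell(\bar y):=\bar S,\ \ell(x):=S],\ i\ \models\ \varphi
\quad\Longleftrightarrow\quad
w^R[\ell(\bar y):=\bar S^{R},\ \ell(x):=S^{R}],\ n+1-i\ \models\ \varphi^{R}.
\]
The propositional and boolean cases are immediate; the modal case uses that $+1$ and $\plusc 1$ read at $i$ in $w$ are exactly $-1$ and $\minusc 1$ read at $n+1-i$ in $w^R$; the zeroary case uses that ``successor $=$ class successor'' at $i$ in $w$ is ``predecessor $=$ class predecessor'' at $n+1-i$ in $w^R$ (hence $\S$ reverses to $\P$), and similarly for $\firstg/\lastg$ and $\firstc/\lastc$. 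For the fixpoint case I would observe that $T\mapsto T^R$ is an order isomorphism of the powerset of positions, so it carries pre-fixpoints to pre-fixpoints and post-fixpoints to post-fixpoints; since reversal commutes with the binder, $(\theta y.\psi)^R=\theta y.\psi^R$, the semantics of a $\nu$-fixpoint (\resp{} $\mu$-fixpoint) over $w$ is carried by $(\cdot)^R$ to the same kind of fixpoint over $w^R$, and the structural hypothesis closes the case for all $i$ at once. With the invariance in hand I would apply Lemma~\ref{lemmaforBR-Y} to $\varphi^R$, $w^R$ and the mirrored position $n+1-i$: since $\varphi^R$ uses only $\prevg,\prevc$ and $x$ still lies under at least $k$ nested modalities, Lemma~\ref{lemmaforBR-Y} gives $S$-independence of $\varphi^R$ at every position strictly below $k$, and transporting this back through the invariance yields Lemma~\ref{lemmaforBR-X} at the corresponding positions.

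The main obstacle is a single off-by-one boundary position. A position $i$ of $w$ with $i>n-k$ mirrors to $n+1-i\le k$, but Lemma~\ref{lemmaforBR-Y} as stated covers only the strict range below $k$; the positions with $i>n-k+1$ mirror to $n+1-i<k$ and transport verbatim, leaving exactly $i=n-k+1$ (mirror $n+1-i=k$) uncovered. I would dispose of it directly: from a position at distance $k-1$ from the right end, every chain of $k$ applications of $\nextg$ or $\nextc$ strictly increases the position and so runs past $n$, whence no such chain can reach a position carrying the valuation of $x$ and the truth of $\varphi$ there cannot depend on $\ell(x)$. Equivalently one checks that the induction behind Lemma~\ref{lemmaforBR-Y} already goes through for $i\le k$, the extra diagonal case $i=k$ bottoming out at the first position where both $\prevg$ and $\prevc$ are false; mirroring that slightly stronger statement leaves no residual case. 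As a self-contained alternative, the lemma can be proved without reversal by rerunning the induction of Lemma~\ref{lemmaforBR-Y} with $\nextg,\nextc$ in place of $\prevg,\prevc$ and the measure $n-i$ in place of $i$, the base case $i=n$ being settled by the fact that the last position has neither a successor nor a class successor.
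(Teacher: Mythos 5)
Your proposal is correct and is essentially the paper's own argument: the paper obtains this lemma from Lemma~\ref{lemmaforBR-Y} with the single phrase ``by symmetry,'' and your reversal construction $w\mapsto w^R$, $S\mapsto S^R$, $\varphi\mapsto\varphi^R$ (swapping $\nextg\leftrightarrow\prevg$, $\nextc\leftrightarrow\prevc$, $\firstg\leftrightarrow\lastg$, $\firstc\leftrightarrow\lastc$, $\S\leftrightarrow\P$, preserving the fixpoint binders) together with the transfer of pre/post-fixpoints along the order isomorphism $T\mapsto T^R$ is exactly what that phrase stands for. Your further observation is a genuine catch: the two stated ranges are not exact mirror images --- $i>n-k$ reverses to $n+1-i\le k$ while Lemma~\ref{lemmaforBR-Y} covers only positions strictly below $k$, leaving $i=n-k+1$ unaccounted for --- and either of your patches (the direct argument that from a position at distance $k-1$ from the right end any $k$ nested future modalities overshoot the word, or the remark that the induction behind Lemma~\ref{lemmaforBR-Y} in fact establishes the claim for $i\le k$, its base case resting on the falsity of $\prevg$/$\prevc$-formulas at the first position) closes this off-by-one that the paper's one-word proof glosses over.
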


\begin{theorem}
Every BR-formula is equivalent to a formula of the $\nu$-fragment {\em over data words}. 
\label{theorem:br-to-nu}
\end{theorem}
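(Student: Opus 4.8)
The plan is to exploit the compositional structure of BR. A BR-formula lies in $\comp\(\mathsf{Formulas}\(M_{\mathtt X}\)\cup\mathsf{Formulas}\(M_{\mathtt Y}\)\)$, so it is built by repeatedly substituting, into the free variables of a formula that uses only the right modalities $M_{\mathtt X}=\{\nextg,\nextc\}$ or only the left modalities $M_{\mathtt Y}=\{\prevg,\prevc\}$, formulas of the same shape. Since substituting $\nu$-fragment formulas for the free variables of a $\nu$-fragment formula never introduces a least fixpoint (the free variables are not fixpoint-bound), the class of $\nu$-fragment formulas is closed under the $\comp$-operation. Hence it suffices to prove that every single building block — a formula using only $M_{\mathtt X}$, or only $M_{\mathtt Y}$ — is equivalent over finite data words to a $\nu$-fragment formula, and then to push the conversion through the composition by induction on the $\comp$-height, replacing equivalents by equivalents, which is sound because no variable occurs under a negation (so all operators are monotone in the substituted arguments).

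The heart of the argument is the following claim: if $\varphi(x,\bar y)$ uses only the modalities $M_{\mathtt Y}$ and $x$ is guarded, then $\mu x.\varphi\equiv\nu x.\varphi$ over finite data words, and symmetrically for $M_{\mathtt X}$. By Lemma~\ref{guarded} I may assume every bound variable is guarded, so the claim applies to every fixpoint. To prove it, fix a word $w$ of length $n$ and a valuation of $\bar y$, and let $G(S)=\sem{\varphi}_{w[\ell(x):=S]}$. Iterating $G$ from $\emptyset$ yields the least fixpoint and from $D$ the greatest, and since $G$ is monotone on the finite lattice $2^{D}$, both fixpoints are reached within $n$ steps, so $G^{n+1}(\emptyset)=\sem{\mu x.\varphi}$ and $G^{n+1}(D)=\sem{\nu x.\varphi}$. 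The key observation is that the $j$-fold unfolding $\varphi^{(j)}$ (obtained by substituting $\varphi$ into itself $j$ times) still uses only $M_{\mathtt Y}$ and, because $x$ is guarded in $\varphi$, every free occurrence of $x$ in $\varphi^{(j)}$ sits under at least $j$ nested modalities. Applying Lemma~\ref{lemmaforBR-Y} with $k=j$ to $\varphi^{(j)}$ shows that $G^{j}(S)$ is independent of $S$ on every position $i<j$. Taking $j=n+1$ (where both iterations have stabilised) forces $G^{n+1}(\emptyset)$ and $G^{n+1}(D)$ to agree on every position, i.e. $\sem{\mu x.\varphi}=\sem{\nu x.\varphi}$. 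The case of $M_{\mathtt X}$ is identical, using Lemma~\ref{lemmaforBR-X} instead.

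With the claim in hand, I would convert a single building block into the $\nu$-fragment by induction on its number of $\mu$-subformulas: pick an innermost $\mu x.\chi$, so that $\chi$ is already $\mu$-free; note that $\chi$ still uses only one family of modalities and that $x$ is guarded, so the claim gives $\mu x.\chi\equiv\nu x.\chi$ for every valuation of the remaining free variables; replace $\mu x.\chi$ by $\nu x.\chi$, which preserves the meaning of the whole formula and strictly decreases the number of $\mu$'s. Assembling the two reductions — building blocks to the $\nu$-fragment, and closure of the $\nu$-fragment under $\comp$ — and carrying them through the induction on $\comp$-height yields the theorem.

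I expect the main obstacle to be precisely the claim $\mu x.\varphi\equiv\nu x.\varphi$: one must argue that unidirectional, guarded recursion is well-founded on a finite word, and the cleanest way to make Lemmas~\ref{lemmaforBR-Y} and~\ref{lemmaforBR-X} bite is to apply them not to $\varphi$ itself (where the guard depth is only $1$, so the lemma is vacuous) but to its $j$-fold unfolding, whose guard depth grows to $j$ and eventually exceeds the word length. This is also the step where finiteness is essential, explaining why the statement is restricted to data words and does not extend verbatim to data $\omega$-words.
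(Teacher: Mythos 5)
Your proposal is correct and takes essentially the same route as the paper's own proof: both guard the formula via Lemma~\ref{guarded}, reduce to the unidirectional building blocks $\mathsf{Formulas}(M_{\mathtt X})$ and $\mathsf{Formulas}(M_{\mathtt Y})$ and propagate through the composition, and collapse $\mu$ to $\nu$ by unfolding the fixpoint $n+1$ times so that Lemma~\ref{lemmaforBR-X} (resp.\ Lemma~\ref{lemmaforBR-Y}) makes the fixpoint variable irrelevant on a word of length $n$. The only difference is presentational: you run the Kleene iterations from $\emptyset$ and $D$ and show they meet, while the paper unfolds $\nu x.\varphi$ syntactically, substitutes $\bot$, and concludes by the Knaster--Tarski theorem.
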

\begin{proof} 
This is done in two steps. The first step is to transform the formula in BR to 
an equivalent one that is furthermore guarded. This is achieved by Lemma \ref{guarded}.
In the second step we turn every subformula of the form 
$\mu x. \varphi(x,\bar{y})$ into $\nu x. \varphi(x,\bar{y})$. 
We claim that the resulting formula is equivalent to the original one. 
Thanks to Lemma~\ref{guarded}, we only have to prove the correction of the second step,
which amounts to prove that 
{\it (Claim $\star$) given a guarded BR-formula, it is equivalent over all data words to the formula
in which each $\mu$-fixpoint is turned into a $\nu$-fixpoint.}

Observe first that it is sufficient to prove ($\star$) for formulas in $\mathsf{Formulas}\(M_{\mathtt{X}}\)$. Indeed, from this
result, by symmetry, it also holds for formulas in $\mathsf{Formulas}\(M_{\mathtt{Y}}\)$. 
Note now that given formulae $\phi(x),\phi'(x),\psi$ such that $\phi(x)$ and $\phi'(x)$ are equivalent over all data words, then the same holds for the substitutions $\phi(\psi)$ and $\phi'(\psi)$. Since formulas in BR are obtained from formulas in $\mathsf{Formulas}\(M_{\mathtt{X}}\)$ and $\mathsf{Formulas}\(M_{\mathtt{Y}}\)$ via inductive substitution, this implies ($\star$) for all formulas in BR.

Hence, what remains to be shown is that ($\star$) holds for a formula in $\psi\in\mathsf{Formulas}\(M_{\mathtt{X}}\)$.
Observe that by induction on the structure of the formula it is enough to verify that for every guarded formula
$\psi=\mu x. \varphi(x,\bar{y})  \in\mathsf{Formulas}\(M_{\mathtt{X}}\)$ and for every data word $w$ (of length $n$) and valuation
$S_1,\ldots,S_k$ (all of them subsets of $[n]$) of $\bar y=y_1,\dots,y_k$,
$$\sem{\nu x. \varphi(x,\bar{y})}_{w'} \subseteq \sem{\mu x. \varphi(x,\bar{y})}_{w'}$$
where $w'=w[\ell(y_1):=S_1,\ldots,\ell(y_k):=S_k]$, since the other inclusion follows from the fact that the least fixpoint is always included in the
greatest fixpoint. This reduces to showing that
$$w',i\models \nu x. \varphi(x,\bar{y}) \Rightarrow w',i\models \mu x. \varphi(x,\bar{y})$$
This is exhibited by the following calculation,
\begin{align*}
 w',i  \models \nu x. \varphi(x,\bar{y}) &\Leftrightarrow  w',i  \models \varphi(\nu x. \varphi(x,\bar{y}),\bar{y})\tag{By fixpoint
iteration}\\
 &\Leftrightarrow  w',i  \models \varphi^{n+1}(\nu x. \varphi(x,\bar{y}),\bar{y}) \\
 &\Rightarrow w',i  \models \varphi^{n+1}(\bot,\bar{y}) \tag{By Lemma \ref{lemmaforBR-X}}\\ 
 &\Rightarrow w',i  \models \mu x. \varphi(x,\bar{y}) \tag{By Knaster-Tarski theorem} 
\end{align*}
\end{proof} 
From the proof it follows that,
\begin{corollary}
\label{corollary:BR-unique}
 Every guarded BR-formula has a unique fixpoint on every data word.  
\end{corollary}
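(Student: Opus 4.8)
The plan is to read ``unique fixpoint'' as the statement that, for every fixpoint subformula $\lambda x.\varphi(x,\bar y)$ occurring in the given BR-formula, the monotone operator $S\mapsto\sem{\varphi(x,\bar y)}_{w[\ell(x):=S]}$ on the complete lattice $(2^D,\subseteq)$ has exactly one fixed point on every data word $w$. The first step is the elementary reformulation supplied by Knaster--Tarski: the fixed points of a monotone operator on a complete lattice themselves form a complete lattice whose bottom and top are $\sem{\mu x.\varphi}_w$ and $\sem{\nu x.\varphi}_w$; hence the operator has a unique fixed point precisely when $\sem{\mu x.\varphi}_w=\sem{\nu x.\varphi}_w$. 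Thus the corollary is equivalent to the claim $(\star)$ from the proof of Theorem~\ref{theorem:br-to-nu}, and the whole task is to extract that equality.

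The second step is to notice that this equality is exactly what the final calculation in the proof of Theorem~\ref{theorem:br-to-nu} establishes. By Lemma~\ref{guarded} we may assume the formula guarded, and the inductive-substitution structure of $\comp$ reduces matters to a pure-direction fixpoint, say $\lambda x.\varphi(x,\bar y)\in\mathsf{Formulas}(M_{\mathtt X})$, the $M_{\mathtt Y}$ case being symmetric via Lemma~\ref{lemmaforBR-Y}. On a word of length $n$, guardedness forces every free occurrence of $x$ in the $(n+1)$-fold unfolding $\varphi^{n+1}(x,\bar y)$ to lie under at least $n+1$ nested modalities, so Lemma~\ref{lemmaforBR-X} with $k=n+1$ (for which the threshold $i>n-k$ is vacuous over $[n]$) makes the semantics of $\varphi^{n+1}(S,\bar y)$ independent of the valuation $S$ of $x$. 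Feeding this into the displayed chain of that proof, giving $\sem{\nu x.\varphi}\subseteq\sem{\varphi^{n+1}(\bot,\bar y)}\subseteq\sem{\mu x.\varphi}$, yields one inclusion, and since the least fixpoint is always contained in the greatest, the two coincide.

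The final step is to propagate this from the pure fragments to all of BR, exactly along the $\comp$-hierarchy used in Theorem~\ref{theorem:br-to-nu}: coincidence of least and greatest fixpoints of an operator is a semantic equality preserved under substituting equivalent formulas, and BR-formulas are built from $\mathsf{Formulas}(M_{\mathtt X})$ and $\mathsf{Formulas}(M_{\mathtt Y})$ by such substitutions, so the equality $\mu=\nu$ holds at every fixpoint subformula. The point I expect to require the most care, and which I regard as the main obstacle, is the treatment of nested fixpoints: when the outer pure-future variable $x$ sits above inner fixpoint subformulas occupying the parameter slots $\bar y$, one must be sure the unfolding argument for $x$ is unaffected by the particular sets assigned to $\bar y$. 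This is precisely why Lemma~\ref{lemmaforBR-X} is stated with a universally quantified valuation $\bar S$ of $\bar y$; once this is invoked, no calculation beyond the one already in the proof of Theorem~\ref{theorem:br-to-nu} is needed.
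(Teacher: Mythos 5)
Your proposal is correct and takes essentially the same route as the paper: the corollary appears there as an immediate byproduct (``From the proof it follows'') of Theorem~\ref{theorem:br-to-nu}, whose claim ($\star$) and the chain $\sem{\nu x.\varphi}=\sem{\varphi^{n+1}(\nu x.\varphi)}=\sem{\varphi^{n+1}(\bot)}\subseteq\sem{\mu x.\varphi}$ via Lemma~\ref{lemmaforBR-X} (with the valuation of $\bar y$ universally quantified, exactly as you note) is precisely the equality of least and greatest fixpoints that Knaster--Tarski converts into uniqueness. Your reconstruction, including the propagation along the $\comp$-hierarchy by substitution, is faithful to the paper's argument.
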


\begin{theorem}
Over data $\omega$-words, $\mathsf{Formulas}\(M_{\mathtt{Y}}\) \subseteq \nu\mbox{-Fragment}$. It follows that,
over data words and data $\omega$-words,
 $$\comp\(\mathsf{Formulas}\(M_{\mathtt{Y}}\) \cup \nu\mbox{-Fragment}\)=\nu\mbox{-Fragment}\ .$$
\end{theorem}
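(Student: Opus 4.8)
The plan is to adapt the proof of Theorem~\ref{theorem:br-to-nu} to the past modalities, the crucial point being that Lemma~\ref{lemmaforBR-Y} is available for data $\omega$-words, whereas Lemma~\ref{lemmaforBR-X} is not (its hypothesis $i>n-k$ is meaningless when the word is infinite). First I would establish the inclusion $\mathsf{Formulas}\(M_{\mathtt Y}\)\subseteq\nu\mbox{-Fragment}$ over data $\omega$-words. Given $\psi\in\mathsf{Formulas}\(M_{\mathtt Y}\)$, I would make it guarded by Lemma~\ref{guarded}; guarding only rewrites fixpoints using the identities in that proof and introduces no new modalities, so the result still lies in $\mathsf{Formulas}\(M_{\mathtt Y}\)$. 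I would then syntactically replace every $\mu$ by $\nu$ to obtain a formula $\psi'$ in the $\nu$-fragment. Proving $\psi\equiv\psi'$ reduces, by a routine structural induction in which equivalence is propagated through substitution, to the single claim that for a guarded $\varphi(x,\bar y)\in\mathsf{Formulas}\(M_{\mathtt Y}\)$ one has $\mu x.\varphi\equiv\nu x.\varphi$ over every data $\omega$-word.

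For this claim only the inclusion $\nu x.\varphi\subseteq\mu x.\varphi$ requires work, the reverse being automatic. Fixing a valuation $\bar S$ of $\bar y$, writing $w'$ for the corresponding word, a position $i$, and choosing any $m>i$, I would argue as in Theorem~\ref{theorem:br-to-nu}: since $\varphi$ is guarded, an easy induction on $m$ shows that in the $m$-fold unfolding $\varphi^{m}$ the substituted hole occurs only within the scope of at least $m$ nested past modalities; because $i<m$, Lemma~\ref{lemmaforBR-Y} makes the value plugged into that hole irrelevant at position $i$. Concretely,
\[
w',i\models\nu x.\varphi \iff w',i\models\varphi^{m}(\nu x.\varphi,\bar y)\Rightarrow w',i\models\varphi^{m}(\bot,\bar y)\Rightarrow w',i\models\mu x.\varphi,
\]
where the equivalence is fixpoint unfolding, the middle implication is Lemma~\ref{lemmaforBR-Y}, and the last uses $\varphi^{m}(\bot)\subseteq\mu x.\varphi$ (Knaster--Tarski approximation from below). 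This is exactly where the asymmetry between past and future on $\omega$-words is exploited: the past of any position $i$ is the finite set $\{1,\dots,i\}$, so $m=i+1$ unfoldings suffice, whereas the forward analogue has no such bound.

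Second, I would deduce the displayed equality. The inclusion $\nu\mbox{-Fragment}\subseteq\comp\(\mathsf{Formulas}\(M_{\mathtt Y}\)\cup\nu\mbox{-Fragment}\)$ is immediate, since a $\nu$-fragment sentence is closed and hence sits at $\comp$-height $1$. For the converse I would induct on $\comp$-height. Every generator is either already in the $\nu$-fragment, or lies in $\mathsf{Formulas}\(M_{\mathtt Y}\)$ and is therefore equivalent to a $\nu$-fragment formula: by the first part over data $\omega$-words, and by Theorem~\ref{theorem:br-to-nu} (since $\mathsf{Formulas}\(M_{\mathtt Y}\)\subseteq\mathrm{BR}$) over data words. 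Two facts then close the induction: equivalence is a congruence for substitution (the step already used in Theorem~\ref{theorem:br-to-nu}, applied in both the body and the arguments), so every composite of $\comp$-height $k+1$ is equivalent to a substitution of $\nu$-fragment formulas into a $\nu$-fragment formula; and the $\nu$-fragment is syntactically closed under such substitution, because all free variables occur positively (negation is confined to atoms) and plugging $\mu$-free formulas into positive positions yields a formula that is again $\mu$-free.

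The main obstacle is the first part, namely checking that the fixpoint-collapse calculation remains legitimate over infinite words and in the presence of the free parameters $\bar y$. The delicate points I would verify carefully are that guarding stays inside $\mathsf{Formulas}\(M_{\mathtt Y}\)$ without ever introducing forward modalities, that $m$-fold unfolding genuinely pushes the hole to modal depth at least $m$, and that Lemma~\ref{lemmaforBR-Y} is quantified over \emph{all} valuations of $\bar y$ so that the structural induction driving $\psi\equiv\psi'$ goes through uniformly for formulas with free variables. Once these are settled, the remainder of the argument is bookkeeping.
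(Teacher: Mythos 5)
Your proposal is correct and takes essentially the same route as the paper: the paper's own proof just says to redo claim ($\star$) from Theorem~\ref{theorem:br-to-nu} for $\mathsf{Formulas}\(M_{\mathtt{Y}}\)$ using Lemma~\ref{lemmaforBR-Y}, which is exactly your fixpoint-collapse calculation with the position-dependent unfolding bound $m>i$ replacing the word-length bound $n+1$, together with the (clear) $\supseteq$ direction and Theorem~\ref{theorem:br-to-nu} for the finite-word case. The only difference is that you make explicit the bookkeeping the paper leaves implicit (guarding stays within $\mathsf{Formulas}\(M_{\mathtt{Y}}\)$, equivalence is a congruence for substitution, and the $\nu$-fragment is closed under substitution), all of which is sound.
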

\begin{proof}
For data words the claim follows from Theorem \ref{theorem:br-to-nu}. For data $\omega$-words the direction
$$\comp\(\mathsf{Formulas}\(M_{\mathtt{Y}}\) \cup \nu\mbox{-Fragment}\) \supseteq \nu\mbox{-Fragment}$$ is clear.
For the other direction, we redo the claim ($\star$) from the proof of Theorem \ref{theorem:br-to-nu} for
$\mathsf{Formulas}\(M_{\mathtt{Y}}\)$ using Lemma \ref{lemmaforBR-Y}.
\end{proof}

 Let us remark that since the class of languages definable by the $\nu$-fragment is not closed under complement while the class of
languages {\em of data words} definable
by 
 BR is closed under complement, it follows that BR is strictly less expressive than the $\nu$-fragment over data words.

\section{Characterizing BMA and BR as cascades of automata}
\label{rue-de-cascade}

In this section we give the characterization of BR and BMA.
It is classical that composition ($\comp$) corresponds to the natural operation of composing sequential transducers.
Given a $\mu$-calculus formula $\varphi$, we can see it as a transducer that reads the input, and labels it 
with one extra bit of information at each position, representing the truth value of the formula at that point.
Under this view, the composition of formulas corresponds to applying the transducers in sequence:
the first transducer reads the input, and adds some extra labelling on it. Then a second transducer reads the
resulting word, and processes it in a similar way, etc...
If we push this view further, we can establish exact correspondences between the class
BR and BMA, and suitable cascades of transducers. Furthermore, the comp-height of the formula matches the
number of transducers involved in the cascade. 

\subsection{Characterizing BMA} 

In this section we characterize BMA in terms of cascades of letter-to-letter 
functional transducers.

We recall that a \intro{functional letter-to-letter transducer} $\A: \Sigma^* \rightarrow \Sigma'^*$ over words is a nondeterministic finite
state letter-to-letter transducer such that every input word
has at most one output word.
Similarly a functional letter-to-letter transducer $\A_\omega: \Sigma^\omega \rightarrow \Sigma'^\omega$ over $\omega$-words is a nondeterministic
finite
state letter-to-letter B\"{u}chi transducer such that every input word
has at most one output word.

\begin{definition}[Global transducer]
\label{definition:global-tranducer}
 A \intro{global transducer $\G$ over data words} 
 with input alphabet $\Sigma\times \M$ and output alphabet $\Sigma'$
 is a functional letter-to-letter transducer which reads the marked string projection $\msp{w}$ of the input data word $w$ and outputs
$\G(\msp{w})$. This defines the unique output data word $w'$ such that $\gdp{w'}=\gdp{w}$ and 
$\gsp{w'}=\G(\msp{w})$. A \intro{global transducer $\G_\omega$ over data $\omega$-words} is defined exactly in the same way except that $\G_\omega$
is a  functional letter-to-letter B\"{u}chi transducer.
\end{definition}
\begin{definition}[Class transducer]
\label{definition:class-tranducer}
A \intro{class transducer $\L$ over data words} 
 with input alphabet $\Sigma\times \M$ and output alphabet $\Sigma'$
 is a functional letter-to-letter transducer which works in the following way. A copy of  the automaton 
 $\L$ reads the marked class projection $\msp{w|_S}$ of the input
data word $w$ for each class $S$ in $w$ and outputs
$\L(\msp{w|_S})$. The unique output data word is defined to be $w'$ such that 
$\gdp{w'}=\gdp{w}$ and 
$\gsp{w'|_S}=\L(\msp{w|_S})$ for each class $S$ in $w$.

 A \intro{class transducer over data $\omega$-words is a pair $(\L, \L_\omega)$} 
where  $\L$ is as before and $\L_\omega$ is a functional letter-to-letter B\"{u}chi transducer.
The working of the automaton is analogous with the addition that on each finite class
the transduction is done by $\L$ and on each infinite class the transduction is done 
by $\L_\omega$.

\end{definition}

\begin{definition}
\label{definition:cascade-bma}
A \intro{cascade of class and global transducers over data words} $\C$ 
is a sequence $\langle \Sigma=\Sigma_0, \A_1, \Sigma_1, \ldots, \Sigma_{n-1}, \A_n, \Sigma_n\rangle$ such that
$\A_1,\ldots, \A_n$ is a sequence of class and global transducers over data words and for each $i$, the transducer $\A_i$ has input alphabet
$\Sigma_{i-1}\times \M$ 
and output alphabet $\Sigma_i$. A \intro{cascade of class and global transducers over data $\omega$-words} $\C$ 
is defined analogously where each $\A_i$ is either a global or a class transducer over data 
$\omega$-words.
We call $\Sigma_0$ the \intro{input alphabet} of $\C$ and $\Sigma_{n}$ the \intro{output alphabet} of $\C$. Also, $n$ is called the \intro{height}
of the cascade. Let $\mathsf{C}$ ({\it resp.} $\mathsf{C}_\omega$) denote the set of all cascades of class and global transducers 
on data words ({\it resp.} data $\omega$-words).
\end{definition}

Given a cascade of class and global transducers $\C$,
a \intro{successful run} of $\C$ on a given data word (\resp{} data $\omega$-word) $w$ is a sequence $w_0=w, \rho_1, w_1,\rho_2, \ldots, w_n, \rho_n$ 
such that $\rho_i$ is a successful run of $\A_i$ 
on $w_{i-1}$ outputing the data word (\resp{} data $\omega$-word) $w_i$. The language accepted by $\C$ is the set of all
data words $w$ on which $\C$ has a successful run.

Observe that cascades are natural analogue of the $\comp$ operator on sets of formulas. Two cascades $\C_1$ and $\C_2$ 
can be composed to form the cascade $\C_1 \circ \C_2$ if the output alphabet of $\C_1$ and input alphabet of $\C_2$ coincide. 

\begin{remark}
 $\mathsf{C}$ and $\mathsf{C}_\omega$ are closed under composition.
\end{remark}

\begin{remark}
\label{remark:bma-union}
 Global (\resp{} class) transducers are closed under product ($C:\Sigma^* \rightarrow \Sigma_1^*\times \Sigma_2^*$ is the product of $A
 :\Sigma^* \rightarrow \Sigma_1^*$ and $B
 :\Sigma^* \rightarrow \Sigma_2^*$ if $C(w)=(A(w),B(w))$). By the previous remark cascades are closed under product. 
 \end{remark}

Next we establish the equivalence between BMA and cascades. We recall the following classical 
results. 

\begin{fact}
\label{fact:mu-Buchi}
Given a $\mu$-calculus formula $\varphi$ over words ({\it resp.} $\omega$-words) there is a 
non-deterministic finite state ({\it resp.} B\"uchi) functional transducer $\A_\varphi$ 
such that given any word ({\it resp.} $\omega$-word) $w$ the automaton $\A_\varphi$ outputs $1$ (\resp{} $0$) exactly
on those positions where $\varphi$ is true (\resp{} false). Moreover $\A_\varphi$ is deterministic if
$\varphi$ uses only the past modalities, and $\A_\varphi$ is co-deterministic if
$\varphi$ uses only the future modalities. Using closure under union of automata 
we can extend this statement to a finite set of formulas.
\end{fact}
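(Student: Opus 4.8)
The plan is to reuse the closure--atom machinery from the proof of Theorem~\ref{theorem:nu-to-data}, specialised to ordinary words so that the only modalities are $\nextg$ and $\prevg$ and the zeroary modalities are read off the $\M$-component of each letter, and then to extract the determinism and co-determinism claims from the directionality of the modalities. As a preliminary step I would invoke Lemma~\ref{guarded} to assume $\varphi$ guarded; this hypothesis is exactly what makes the deterministic transition in the last paragraph single-valued.

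For the mere existence of a functional $\A_\varphi$ in the general case, the quickest route is to observe that $\mu$-calculus formulas over words translate into $\mso$ (fixpoints becoming second-order quantifiers), and that $\mso$ over finite words (resp. $\omega$-words) defines exactly the regular (resp. $\omega$-regular) languages; hence the set of pairs $\langle\text{input},\text{truth-labelling}\rangle$ over the product alphabet is regular (resp. $\omega$-regular) and is recognised by a nondeterministic finite-state (resp. B\"uchi) automaton. Since the truth value of $\varphi$ at each position is uniquely determined by the input, every accepting run emits the same labelling, so the transducer is automatically functional. The same automaton can be produced explicitly by the atom construction of Theorem~\ref{theorem:nu-to-data}: states record an atom of $\fl(\varphi)$, the automaton guesses at each position the set of closure-formulas holding there and checks the word-only local conditions ($\nextg\psi\in A_i$ iff $\psi\in A_{i+1}$, $\prevg\psi\in A_i$ iff $\psi\in A_{i-1}$, the propositional and $\M$-conditions from the letter, and the fixpoint unfoldings), and outputs $1$ at $i$ iff $\varphi\in A_i$. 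Local consistency already forces the correct value of every $\nu$-subformula as in Theorem~\ref{theorem:nu-to-data}; for the $\mu$-subformulas one adds, over $\omega$-words, the standard acceptance condition forbidding a $\mu$-variable from being regenerated forever along the guessed unfolding, while over finite words the absence of a successor (resp. predecessor) at the last (resp. first) position falsifies $\nextg\psi$ (resp. $\prevg\psi$) there and the boundary propagates inwards, pinning the annotation down.

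The determinism claims then follow from the direction of the modalities. Suppose $\varphi$ uses only $\prevg$. Then the truth of every formula of $\fl(\varphi)$ at a position $i$ depends only on the prefix up to $i$, so I would let the state reached after reading that prefix be the type $\tau_i=\{\psi\in\fl(\varphi)\mid w,i\models\psi\}$. The key observation is that $\tau_i$ is a \emph{function} of $\tau_{i-1}$ and the letter $a_i$: the propositions and markings are fixed by $a_i$; $\prevg\psi\in\tau_i$ iff $\psi\in\tau_{i-1}$; Boolean combinations are then determined; and for a fixpoint $\mu x.\chi$ or $\nu x.\chi$, guardedness guarantees that every occurrence of $x$ in $\chi$ lies under $\prevg$, so unfolding the fixpoint at $i$ only ever queries its value at $i-1$, already recorded in $\tau_{i-1}$. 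Hence the transition is total and single-valued, giving a deterministic $\A_\varphi$ that outputs $1$ exactly when $\varphi\in\tau_i$. The case of a formula using only $\nextg$ is the mirror image: truth at $i$ depends only on the suffix from $i$, and the same construction read from right to left (equivalently, the reverse-deterministic B\"uchi automaton in the $\omega$-word case) is co-deterministic.

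Finally, for a finite set of formulas I would take the product of the individual transducers, which preserves functionality as well as determinism and co-determinism, obtaining one transducer that labels every position with the truth values of all the formulas at once. The routine parts are the local bookkeeping and the product; the two points needing genuine care --- and the crux of the argument --- are enforcing the least-fixpoint semantics exactly (the regeneration/acceptance condition over $\omega$-words and the boundary argument over finite words) and verifying that the past-only transition really is single-valued, which is precisely where the guardedness granted by Lemma~\ref{guarded} is indispensable.
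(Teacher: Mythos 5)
The paper itself gives no proof of this Fact: it is merely \emph{recalled} as a classical result (the correspondence between linear-time $\mu$-calculus and finite-state/B\"uchi automata), so your proposal must stand on its own rather than be matched against an official argument. Its backbone does stand: translating $\varphi$ into \mso{} and invoking the equivalence of \mso{} with regular (resp.\ $\omega$-regular) languages to get a nondeterministic automaton recognising the graph of the truth labelling --- which is automatically functional because that labelling is uniquely determined by the input --- is a correct and economical proof of the existence claim; and your left-to-right type construction for past-only formulas (state $=$ set of closure formulas true at the current position; the transition is single-valued because guardedness forces every fixpoint unfolding through $\prevg$, hence it queries only the type at $i-1$) correctly yields determinism, with the product construction extending everything to finite sets of formulas.

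Two points are flawed, however. First, your side remark that in the explicit atom construction over \emph{finite} words ``the boundary propagates inwards, pinning the annotation down'' is false as soon as both directions occur: for $\varphi=\mu x.(\nextg x\vee\prevg x)$ on a word of length $2$, the annotation declaring $\varphi$ true at both positions satisfies every local and boundary condition (position $1$ uses the $\nextg$-disjunct, position $2$ the $\prevg$-disjunct), yet $\sem{\varphi}_w=\emptyset$; a trace/regeneration condition is needed over finite words exactly as over $\omega$-words. This does not invalidate your proof, since the \mso{} argument independently covers the general case, but the claim should be removed or repaired. Second --- and this is a genuine gap --- co-determinism over $\omega$-words cannot be obtained by ``the same construction read from right to left'': there is no right end to start from, and backward-consistency of types does not pin down the run (mark ``eventually $a$'' true at every position of a word containing no $a$; every backward transition is consistent). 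One must equip the backward-deterministic transition structure with an acceptance condition that selects the unique correct run, which is precisely the nontrivial prophetic-automaton theorem of Carton and Michel that the paper itself invokes when discussing right-sequential transducers on $\omega$-words; your parenthetical ``reverse-deterministic B\"uchi automaton'' silently assumes it. Over finite words your right-to-left reading is fine.
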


\begin{fact}
\label{fact:Buchi-mu}
Given a nondeterministic finite state automaton  ({\it resp.} B\"uchi) $\A$ 
and a transition $\delta$ of $\A$ 
there is $\mu$-calculus formula $\varphi_\delta$ such that for any word ({\it resp.} $\omega$-word)
$w$ and a position $i$ in $w$, $w,i \models \varphi_\delta$ if and only if there is a successful
run $\rho=\delta_1\delta_2 \ldots$ of $\A$ such that $\delta_i=\delta$.
 It follows that given a letter-to-letter transducer $\A:\Sigma^*\rightarrow \Sigma'^*$ 
(\resp{} $\A:\Sigma^\omega\rightarrow \Sigma'^\omega$) and a letter $a\in \Sigma'$ 
there is a formula $\varphi_a$ such that for any word (\resp{} $\omega$-word)
$w$ and a position $i$ in $w$, $w,i \models \varphi_a$ if and only if there is an output
word $a_1a_2\ldots$ of $\A$ such that (\resp{} $\A_\omega$) $a_i=a$. In particular if the transducer is functional
$\varphi_a$ holds if and only if in the unique output word $a_1a_2\ldots$ it is the case that $a_i=a$.
\end{fact}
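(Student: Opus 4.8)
The plan is to prove the first assertion by encoding, with $\mu$-calculus formulas evaluated over the input word $w$ seen as the labelled successor structure $(D,+1,\ell)$, the three ingredients of ``there is a successful run using the transition $\delta=(p,a,q)$ at position $i$'': a \emph{past} ingredient (the state $p$ is reachable just before $i$), a \emph{local} ingredient (the letter read at $i$ is $a$ and the step goes to $q$), and a \emph{future} ingredient (from $q$ an accepting continuation exists over the suffix). Throughout, for a letter $b\in\Sigma$ I write $b$ also for the atomic proposition that holds at positions labelled $b$. The second assertion about transducers then follows by a disjunction over transitions.

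First I would capture reachability by a backward recursion. For each state $s\in Q$ introduce a formula $\mathrm{Acc}_s$, intended to hold at position $j$ exactly when some run-prefix reading positions $1,\dots,j$ ends in state $s$. These satisfy the mutually recursive system
\[
\mathrm{Acc}_s \;\equiv\; \bigvee_{(q_0,a,s)\in\Delta}\big(a\wedge\firstg\big)\ \vee\ \bigvee_{(s',a,s)\in\Delta}\big(a\wedge\prevg\,\mathrm{Acc}_{s'}\big).
\]
At position $1$ the second disjunction is automatically false (no predecessor), so only run-prefixes of length one from $q_0$ fire; at positions $j>1$ the conjunct $\firstg$ is false, so only the recursive disjunction applies. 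Every recursive occurrence is guarded by $\prevg$ and strictly decreases the position, so the system is well-founded; its simultaneous solution is a vectorial $\mu$-calculus fixpoint, which I convert into nested single-variable fixpoints by Beki\'c's principle. Since the recursion bottoms out at the first position, least and greatest fixpoints coincide, and the resulting formula uses only the past modality $\prevg$ (consistent with the determinism recorded in Fact~\ref{fact:mu-Buchi}).

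Symmetrically, in the finite-word case I would define completion formulas $\mathrm{Fin}_s$, holding at $j$ iff some run starting in $s$ reads positions $j{+}1,\dots,n$ and ends in a final state. Writing $[\chi]$ for $\true$ when the meta-level side condition $\chi$ holds and for $\bot$ (an unsatisfiable formula, e.g.\ $p\wedge\neg p$) otherwise, so that guarded disjuncts simply vanish when $\chi$ fails, these satisfy
\[
\mathrm{Fin}_s \;\equiv\; \big(\lastg\wedge[\,s\in F\,]\big)\ \vee\ \bigvee_{(s,a,s')\in\Delta}\nextg\big(a\wedge\mathrm{Fin}_{s'}\big),
\]
a well-founded forward recursion guarded by $\nextg$ and using only future modalities (hence co-deterministic). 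Setting
\[
\varphi_\delta \;\equiv\; a\wedge\big((\firstg\wedge[\,p=q_0\,])\vee\prevg\,\mathrm{Acc}_p\big)\wedge\mathrm{Fin}_q,
\]
one verifies directly that $w,i\models\varphi_\delta$ iff some accepting run has $\delta_i=\delta$: the three conjuncts state respectively that the letter read at $i$ is $a$, that $p$ is reached before $i$, and that the run can be completed from $q$. The solutions of the two fixpoint systems plug into $\varphi_\delta$ by the substitution principle.

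For the B\"uchi case the past and local parts are unchanged, but $\mathrm{Fin}_q$ is replaced by a formula $\mathrm{Inf}_q$ asserting the existence of an \emph{infinite} run from $q$ over the suffix that visits accepting states infinitely often; this is expressed by the standard alternating fixpoint of shape $\nu Y_s.\,\mu Z_s.\dots$ over the states, and this alternation is the one genuine obstacle, the remainder being bookkeeping about first/last positions and vectorial fixpoints. Finally, the ``follows'' clause is immediate: a letter-to-letter transducer $\A$ is an automaton whose transitions carry input/output pairs, the $i$-th output letter of a run being the output component of $\delta_i$; hence $\varphi_a\equiv\bigvee\{\varphi_\delta:\delta\ \text{outputs}\ a\}$ holds at $i$ iff some successful run outputs $a$ at position $i$, and when $\A$ is functional this run, and thus the output word, is unique.
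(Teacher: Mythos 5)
The paper never proves this statement: it is labelled a \emph{Fact} and introduced by ``We recall the following classical results'', so it is invoked as folklore and used only later (in Proposition~\ref{proposition:cascade-to-bma}); there is therefore no internal proof to compare yours against line by line. Your argument is a correct, essentially self-contained rendition of the classical translation, and it fits the toolkit the paper uses elsewhere: the prefix/local/suffix decomposition of a run, well-founded backward and forward recursions (under which $\mu$ and $\nu$ coincide), and Bekic's principle to scalarize the vectorial fixpoint --- the same device the paper itself employs in the proof of Proposition~\ref{proposition:cascade-to-br}. The finite-word case checks out in detail: $\mathrm{Acc}_s$, $\mathrm{Fin}_s$ and the assembled $\varphi_\delta$ have exactly the intended semantics, including the edge cases $i=1$ and $i=n$, since under the paper's semantics $\prevg\varphi$ is false at the first position and $\nextg\varphi$ at the last, which is precisely what your definitions rely on; the disjunction over transitions then gives the transducer clause, with functionality yielding the uniqueness statement. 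Two caveats. First, the B\"uchi case is the one place where you sketch rather than prove: you correctly name the $\nu\bar{Y}.\mu\bar{Z}$ encoding of ``some run from $q$ visits $F$ infinitely often'' but neither write it down nor verify it, and since that alternation is exactly what separates the $\omega$-word case from the trivially well-founded finite case, a fully self-contained proof would have to spell it out (it is, admittedly, the genuinely classical part: linear-time $\mu$-calculus captures all $\omega$-regular languages). Second, a cosmetic point: your parenthetical appeal to Fact~\ref{fact:mu-Buchi} about determinism concerns the opposite translation (formulas to automata) and does no work in your argument; it is best dropped.
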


\begin{proposition} 
For every BMA formula $\varphi$ on data words (\resp{} data $\omega$-words) there is an equivalent cascade 
$\mathcal{C}_\varphi$ in $\mathsf{C}$ (\resp{} in $\mathsf{C}_\omega$) such that the
$\comp$-height of $\varphi$ is exactly the same as the height of the cascade $\C_\varphi$.
\label{proposition:bma-to-cascade}
\end{proposition}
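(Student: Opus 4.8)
The plan is to proceed by induction on the $\comp$-height $k$ of the BMA formula $\varphi$, proving the stronger statement that there is a cascade $\C_\varphi$ of height $k$ whose unique successful run labels every position $i$ of the input with the truth value of $\varphi$ at $i$, so that the output word records, in an extra bit, the set $\sem{\varphi}_w$. The language equivalence asked for in the proposition is then recovered at the top: I restrict the acceptance of the last transducer so that a run is successful only when the $\varphi$-bit at the first position is set, which turns the run-existence language of the cascade into the data language of $\varphi$ without changing its height. The engine of the translation is Fact~\ref{fact:mu-Buchi}: a formula using only the global modalities $M_{\mathit{g}}$ is, once read over the marked string projection $\msp{w}$ (whose alphabet carries the propositions together with the marking $\M$ that encodes the relevant zeroary modalities), an ordinary $\mu$-calculus formula over words, hence equivalent to a functional letter-to-letter transducer; read over $\msp{w}$ this is exactly a global transducer. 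Symmetrically, a formula using only the class modalities $M_{\mathit{c}}$, viewed over each marked class projection $\msp{w|_S}$, yields a class transducer.

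Before the induction I would normalise the composition so that modes strictly alternate. Since substituting global-only formulas for the propositional holes of a global-only formula again produces a global-only formula (and likewise for class), any two adjacent composition levels of the same mode may be merged; iterating this I may assume that along every branch of the composition the two modes alternate, and hence that all maximal subformulas sitting at a fixed composition depth share a single mode $m_\ell$. This is precisely the presentation realising the fewest levels, i.e.\ the $\comp$-height, so the number of alternating levels equals $k$. The base case $k=1$ is then immediate: $\varphi\in\mathsf{Formulas}(M_{\mathit{g}})\cup\mathsf{Formulas}(M_{\mathit{c}})$ is mode-pure, and the single transducer produced above is already a cascade of height $1$.

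For the inductive step I would build the cascade level by level, from the innermost depth outward. At depth $\ell$ let $\Phi_\ell$ be the finite tuple of all maximal subformulas of $\varphi$ occurring at that depth; each is $m_\ell$-pure and refers to the depth-$(\ell-1)$ formulas only through propositional holes, whose truth values have already been written onto the word by the first $\ell-1$ transducers. Applying the base-case translation to each member of $\Phi_\ell$ gives finitely many $m_\ell$-transducers, and by closure of global (\resp{} class) transducers under product (Remark~\ref{remark:bma-union}) these combine into a single $m_\ell$-transducer that labels each position with the whole tuple of depth-$\ell$ truth values. Stacking these transducers for $\ell=1,\dots,k$ yields a cascade of height exactly $k$ whose top level computes $\sem{\varphi}_w$, and the uniform-mode-per-depth normalisation guarantees that the product at each level is legitimately a single global or class transducer. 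The data $\omega$-word case is identical, replacing the transducers of Fact~\ref{fact:mu-Buchi} by their functional B\"uchi versions and using the pair $(\L,\L_\omega)$ for class transducers.

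The main obstacle is exactly the bookkeeping that keeps the height equal to $k$: an arbitrary witness to $\varphi$ having $\comp$-height $k$ may place formulas of both modes at the same depth, in which case a level-wise product would fail to be a single transducer and a naive construction would inflate the height. The normalisation step is what removes this, so the crux of the proof is to verify carefully that merging same-mode levels preserves equivalence, that it yields a uniform mode at each depth, and that the resulting decomposition still has exactly $k$ levels. The remaining verifications, namely that the marked projections expose enough information for the zeroary modalities and that the product construction of Remark~\ref{remark:bma-union} preserves functionality, are routine.
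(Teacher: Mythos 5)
Your proof is correct and takes essentially the same route as the paper's: mode-pure formulas become global or class transducers via Fact~\ref{fact:mu-Buchi}, finitely many same-mode formulas at one level are combined using the product of Remark~\ref{remark:bma-union}, and the induction on $\comp$-height uses closure of cascades under composition. The one point where you go beyond the paper is the explicit normalisation that merges adjacent same-mode levels so that each composition depth is mode-uniform; the paper leaves this implicit in its terse induction, and your step is precisely the bookkeeping needed to justify that the cascade height comes out exactly equal to the $\comp$-height rather than larger.
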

\begin{proof} 
Observe that it is sufficient to prove that ($\star$) {\em for every formula $\varphi$ in $\mathsf{Formulas}\(\mathtt{M_g}\)$ on data words
(\resp{} data $\omega$-words)
there is a 
global transducer $\varphi_\C$ in $\mathsf{C}$ (\resp{} $\mathsf{C}_\omega$) outputting $1$ (\resp{} $0$) exactly at those positions where
$\varphi$ does (\resp{} not) hold}. By Remark \ref{remark:bma-union} the claim holds for a finite set of formulas.
By symmetry a similar claim holds for $\varphi$ in $\mathsf{Formulas}\(\mathtt{M_c}\)$. Finally since $\mathsf{C}$ and 
$\mathsf{C}_\omega$ are closed under composition by induction on the $\comp$-height the proposition follows.
Note that $(\star)$ is guaranteed by Remark \ref{fact:mu-Buchi}.
\end{proof}

\begin{proposition}
For every  cascade 
$\mathcal{C}$ in $\mathsf{C}$ (\resp{} in $\mathsf{C}_\omega$) there is an equivalent BMA-formula $\varphi_\C$ on data 
words (\resp{} data $\omega$-words) such
that the  height of the cascade $\C$ is exactly the same as
 the $\comp$-height of $\varphi_\C$.
\label{proposition:cascade-to-bma}
 \end{proposition}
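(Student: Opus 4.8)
The plan is to invert the construction of Proposition~\ref{proposition:bma-to-cascade}: replace each transducer of the cascade by a single-layer $\mu$-calculus formula via Fact~\ref{fact:Buchi-mu}, and then glue these layers together by substitution, so that the cascade height becomes the $\comp$-height. Write $\C=\langle\Sigma_0,\A_1,\Sigma_1,\dots,\A_n,\Sigma_n\rangle$. The single observation that makes everything fit is that every transducer preserves the data projection, hence the class structure and the marking alphabet $\M$: one has $\gdp{w_i}=\gdp{w}$ for each intermediate word $w_i$. Consequently the zeroary modalities $\S,\P,\firstc,\lastc,\firstg,\lastg$ denote the same sets of positions on every $w_i$ as on $w$, and a global (\resp{} class) transducer reading $w_{i-1}$ really is a word (\resp{} per-class) automaton whose behaviour can be captured inside $\mathsf{Formulas}(M_{\mathit{g}})$ (\resp{} $\mathsf{Formulas}(M_{\mathit{c}})$).

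I would prove by induction on $i$ that for each $a\in\Sigma_i$ there is a BMA-formula $\Phi^i_a$ of $\comp$-height $i$ such that, for every data word (\resp{} data $\omega$-word) $w$ and every position $j$, $w,j\models\Phi^i_a$ iff the partial cascade $\A_1,\dots,\A_i$ has a successful run and its level-$i$ output carries letter $a$ at $j$. For the base case $i=1$ I apply Fact~\ref{fact:Buchi-mu} to $\A_1$ viewed as a word (\resp{} $\omega$-word) transducer over $\Sigma_0\times\M$: if $\A_1$ is global it reads $\msp{w}$ and yields formulas $\Phi^1_a\in\mathsf{Formulas}(M_{\mathit{g}})$; if $\A_1$ is a class transducer the same fact is applied on each marked class projection $\msp{w|_S}$, whose word successor and predecessor are exactly $\nextc$ and $\prevc$ of $w$, so that $\Phi^1_a\in\mathsf{Formulas}(M_{\mathit{c}})$. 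In the $\omega$-case the class transducer is a pair $(\L,\L_\omega)$, and I combine the finite-word formula for $\L$ with the B\"uchi formula for $\L_\omega$ under the $\mu$-calculus-definable test ``the current class is finite'' (e.g.\ $\Fc\lastc$). For the inductive step I apply Fact~\ref{fact:Buchi-mu} to $\A_{i+1}$ over $\Sigma_i\times\M$ to obtain, for each $a\in\Sigma_{i+1}$, a formula $\psi^{(i+1)}_a$ in $\mathsf{Formulas}(M_{\mathit{g}})$ or $\mathsf{Formulas}(M_{\mathit{c}})$ whose atomic propositions name the input letters $b\in\Sigma_i$, and set $\Phi^{i+1}_a=\psi^{(i+1)}_a(\Phi^i_b:b\in\Sigma_i)$. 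Since $\psi^{(i+1)}_a$ is a single-layer formula and each $\Phi^i_b$ has $\comp$-height $i$, the substitution has $\comp$-height $i+1$.

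Correctness of the step rests on the marking invariance above together with two bookkeeping facts: when the partial cascade up to level $i$ succeeds, exactly one $\Phi^i_b$ holds at each position, so the annotation $(\Phi^i_b)_b$ faithfully represents $w_i$ on which $\A_{i+1}$ is run; and when it fails, all $\Phi^i_b$ are false, no valid run of $\A_{i+1}$ exists, and $\psi^{(i+1)}_a(\Phi^i_b:b)$ is false everywhere, matching the intended semantics. Finally, the whole cascade succeeds iff position $1$ carries some level-$n$ letter, so $\varphi_\C:=\bigvee_{a\in\Sigma_n}\Phi^n_a$ defines $L(\C)$. A priori this Boolean combination of $\comp$-height-$n$ formulas would have height $n+1$; but all the $\Phi^n_a=\psi^{(n)}_a(\Phi^{n-1}_b:b)$ share the same substituted arguments, so the substitution distributes over the disjunction and $\varphi_\C=\big(\bigvee_a\psi^{(n)}_a\big)(\Phi^{n-1}_b:b)$, whose outer formula $\bigvee_a\psi^{(n)}_a$ lies in the single layer $\mathsf{Formulas}(M_{\mathit{g}})$ or $\mathsf{Formulas}(M_{\mathit{c}})$. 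Hence $\varphi_\C$ has $\comp$-height exactly $n$, as required.

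The main obstacle is precisely this $\comp$-height accounting: the naive acceptance formula spends an extra $\comp$-layer on a Boolean combination, and the point to get right is that all disjunctions needed to express a transducer's acceptance can be absorbed into one formula of $\mathsf{Formulas}(M_{\mathit{g}})$ (or $\mathsf{Formulas}(M_{\mathit{c}})$) before substituting, since these fragments are closed under the Boolean connectives; the only genuine height increase comes from nesting a class-layer formula inside a global-layer one or vice versa. The second delicate (but routine) point is semantic: justifying that substitution commutes with evaluation even though the working word changes from $w_i$ to $w$, which is exactly where the invariance of $\gdp{\cdot}$ and of the marking alphabet is used.
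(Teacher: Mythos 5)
Your proposal is correct and follows essentially the same route as the paper's proof: apply Fact~\ref{fact:Buchi-mu} to each global (\resp{} class) transducer to obtain, for every output letter, a formula in $\mathsf{Formulas}(M_{\mathit{g}})$ (\resp{} $\mathsf{Formulas}(M_{\mathit{c}})$) describing the output at each position, and then conclude by induction on the height using closure of BMA under composition. The only difference is that you spell out details the paper leaves implicit --- the preservation of the data projection and markings across levels, the acceptance test at position~$1$, and the absorption of the final disjunction into the top layer so the $\comp$-height stays exactly $n$ --- all of which are sound.
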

\begin{proof}
Let $\A$ be a global transducer with output alphabet $\Sigma'$ . From Fact \ref{fact:Buchi-mu}. we obtain that 
for every letter $a \in \Sigma'$, there is a formula $\varphi_a$ in $\mathsf{Formulas}(\mathtt{M_g})$
such that on input $w$ and position $i$,
$w,i \models \varphi_a$ iff for $a_1a_2\ldots = \A(w)$, $a_i=a$. Analogously the similar claim holds for class
transducers. Since BMA is closed under composition by induction on the height of the cascade the claim generalizes to 
cascades of arbitrary height.
\end{proof}

From \ref{proposition:cascade-to-bma} and \ref{proposition:bma-to-cascade} it follows that,
\begin{theorem}
  BMA on data words (\resp{} data $\omega$-words) and $\mathsf{C}$ (\resp{} $\mathsf{C}_\omega$) are equivalent. 
 \end{theorem}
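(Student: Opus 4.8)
The plan is to read this theorem as the conjunction of the two preceding propositions: it asserts that a data language is definable by a BMA formula if and only if it is accepted by a cascade in $\mathsf{C}$ (and likewise for the $\omega$-case with $\mathsf{C}_\omega$), together with the sharpening that the $\comp$-height of the formula equals the height of the cascade. Accordingly, no new machinery has to be developed; I would simply assemble the two directions of translation that are already in hand, being careful at the outset to state precisely what ``equivalent'' means here, namely equality of the defined language classes \emph{and} the matching of the two size parameters.

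First I would spell out the forward inclusion. By Proposition~\ref{proposition:bma-to-cascade}, every BMA formula $\varphi$ on data words (\resp{} data $\omega$-words) has an equivalent cascade $\C_\varphi\in\mathsf{C}$ (\resp{} $\mathsf{C}_\omega$) whose height coincides with the $\comp$-height of $\varphi$. Hence every BMA-definable data language is accepted by some cascade, and the height parameter is preserved in the passage.

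Next I would record the converse inclusion. By Proposition~\ref{proposition:cascade-to-bma}, every cascade $\C\in\mathsf{C}$ (\resp{} $\mathsf{C}_\omega$) has an equivalent BMA formula $\varphi_\C$ whose $\comp$-height equals the height of $\C$. Thus every cascade-accepted data language is BMA-definable, again with the parameter matched on the nose.

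Putting the two inclusions together shows that the two classes of data languages coincide, and the two parameter equalities combine to give that $\comp$-height and cascade height correspond exactly; the finite-word and $\omega$-word cases are covered uniformly because both propositions are stated in the ``(\resp{} \dots)'' form for both settings. I do not expect any genuine obstacle at this stage: the mathematical content lives entirely inside Propositions~\ref{proposition:bma-to-cascade} and~\ref{proposition:cascade-to-bma}---in particular in the use of Fact~\ref{fact:mu-Buchi} and Fact~\ref{fact:Buchi-mu} to translate between $\mu$-calculus formulas read over the (marked) string and class projections and the corresponding functional transducers, and in the closure of $\mathsf{C}$ and $\mathsf{C}_\omega$ under composition---so the only work left is to package these two translations as a single two-way equivalence.
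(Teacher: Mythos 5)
Your proposal is correct and matches the paper's own treatment exactly: the paper derives this theorem immediately from Propositions~\ref{proposition:bma-to-cascade} and~\ref{proposition:cascade-to-bma}, which provide the two directions of translation with the height parameters preserved. No further argument is needed, and your careful packaging of the two inclusions (including the height correspondence and the uniform handling of the finite and $\omega$-word cases) is precisely what the paper intends.
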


{\bf Sequentializing $\mathsf{C}$ and $\mathsf{C_\omega}$.} Sequentializing cascades is the analogue of determinizing automata 
(it can also be seen as transfering the semantic notion of functionality to a syntactic notion of determinism or co-determinism). A
\intro{left-sequential} ({\it resp.} \intro{right-sequential})
transducer is a transducer which reads the input from left-to-right ({\it resp.} right-to-left) and produces the output
synchronously. On finite words a transducer is left-sequential ({\it resp.} right-sequential) if the automaton obtained by removing 
the output letters is deterministic ({\it resp.} co-deterministic). It is a classical theorem due to Elgot and Mezei \cite{ElgotMezei}
that every rational function on finite words (i.e. one defined by a functional transducer) is defined by the cascade of 
a left-sequential and right-sequential transducer. A similar result holds also for $\omega$-words due to Carton \cite{Carton10}.
In the case of $\omega$-words a left-sequential transducer, as before, is one where the underlying automaton is deterministic,
while the notion of a right sequential transducer is not immediate as the word does not have a maximal position. In this case
one has to use the notion of a {\it prophetic} automaton (%
%
{\em Prophecy} is a strong form of co-determinism. See \cite{Carton10} for more details.)

\begin{definition}[Cascade of sequential transducers] A global ({\it resp.} class) transducer over data words 
is left-sequential if it is deterministic and it is right-sequential if it is is co-deterministic.
A global ({\it resp.} class) transducer $\G$ (\resp{} $\L,\L_\omega$) over data $\omega$-words 
is left-sequential if it is deterministic (\resp{} both $\L,\L_\omega$ are deterministic).
A global transducer $\G_\omega$ (\resp{} class transducer $(\L,\L_\omega)$) is right-sequential if $\G_\omega$ is prophetic (\resp{} if $\L$ is
right-sequential and 
$\L_\omega$ is prophetic). A cascade of sequential transducers is defined in the obvious way.
\end{definition}

\begin{remark}
 Every cascade in $\mathsf{C}$ ({\it resp.} $\mathsf{C}_\omega$) of height $k$ 
 is equivalent to a cascade of sequential transducers of height at most $2k$. 
\end{remark}
\begin{proof}
 Inductively replace each class ({\it resp.} global) transducer with a cascade of left-sequential and right sequential class
 ({\it resp.} global) transducers.
\end{proof}

\begin{remark}[BMA $\subseteq$ DA]
\label{remark:bma-to-da}
We claim that the class of cascades obtained by removing the restriction of functionality from Definitions \ref{definition:global-tranducer}
,\ref{definition:class-tranducer} and \ref{definition:cascade-bma} is equivalent to data automata. It is easy to see that data automata 
belong to this class. For the other direction, it is sufficient to observe that given a cascade 
$\C=\langle \Sigma=\Sigma_0, \A_1, \Sigma_1, \ldots, \Sigma_{n-1}, \A_n, \Sigma_n\rangle$ of ({\em not necessarily functional}) class and global
transducers (without loss of generality assume $n$ is even and even numbered $\A_i$'s work on class projections 
and odd numbered $\A_i$'s work on global projection)
there is a data automaton $(B,\Sigma',C)$ (\resp{} data $\omega$-automaton $\A=(B,\Sigma',C,C_\omega)$) with the intermediate 
alphabet $\Sigma'=\(\Sigma_1\times\Sigma_2\times \ldots\times \Sigma_{n}\)^*$ which works in the following way;
Note that there is an obvious correspondence between words in $\Sigma'^*$ and tuples of words (of identical length) of the form
$(w_1,w_2,\ldots,w_n)$ where $w_i \in \Sigma_i^*$. We implicitly make use of this correspondence below. The transducer $B$ guesses
the words $w_1,w_2,\ldots, w_n$ and outputs it while verifying that on each odd $i$, $A_i$ has a run on $w_{i-1}$ outputting $w_i$.
The class automaton $C$ (\resp{} $C$ and $C_\omega$) verifies that for each even $i$, $A_i$ has a run on $w_{i-1}$ outputting $w_i$.
It is clear that $\C$ has an accepting run on $w$ if and only if $\A$ has an accepting run on $w$. Hence the claim is shown.
It follows that BMA $\subseteq$ DA.
\end{remark}

\subsection{Characterizing BR}
Take note that we treat BR on data words {\em only} below. The results presented do not extend to data $\omega$-words.
First we formally define cascades of class memory transducers, which is then followed by the proof of the equivalence.

The transducers we use are the transducer versions of \intro{class-memory automata} (CMA for short) introduced in
\cite{BjorklundS10}. A class-memory automaton 
is an automaton which reads the data word from left-to-right and at every position the state depends on the current letter, the previous state and
the state the automaton was in when reading the class-predecessor position. Let us remark that it is known that CMA are equivalent to data automata,
while their deterministic variant is strictly weaker \cite{BjorklundS10}. For characterizing BR
we use cascades of deterministic CMA transducers which reads the data word either from left to right and from right to left.

\begin{definition}[Class-memory transducers]
A \intro{deterministic class-memory transducer} (denoted by $\DCMT$) $\A$ is given by a tuple 
$(Q, \Sigma, \Sigma', \Delta, q_0, F_c, F_g)$
where $Q$ is the finite set 
of states, $\Sigma$ is the input alphabet, $\Sigma'$ is the output alphabet, 
$\Delta : Q \times Q\cup\{\top, \bot\} \times \Sigma \times \M  \rightarrow Q \times \Sigma'$ is the transition function, $q_0$
is the initial state,
$F_c$ is the set of class final states and $F_g$ is the set of global final states. 

A \intro{forward} (\resp{} \intro{backward}) deterministic class-memory transducer is a $\DCMT$
which reads its input data
word 
from left-to-right (\resp{} right-to-left).
\end{definition}

Let $\A$ be a forward (\resp{} backward) $\DCMT$. Given a data word $w=(a_1,d_1)\ldots(a_n,d_n)$, 
a successful run $\rho$ of $\A$ on $w$ (a unique one if it
exists) is a sequence of states
$q_0q_1\ldots q_n$  (\resp{} $q_n \ldots q_1q_0$) and the output of the run is a word $a_1'\ldots a_n'$ such that,

\begin{itemize}[-]
\item $q_0$ is the initial state,
\item $q_n$ is a global final state,
\item for any position $i$ which does not have a 
class successor (\resp{} class predecessor), the state $q_i$ (\resp{} $q_{n-i+1}$) is a class final state.
\item Let $i$ be a  position with the types 
$(p,s)\in \M$. Then,
\begin{itemize}
 \item if $i$ has no class predecessor (\resp{} no class successor) 
then the tuple $(q_{i-1}, \bot, a_i, p, s, q_i, a_i')$ (\resp{} $(q_{n-i}, \top, a_{i}, p, s,
q_{n-i+1}, a_{i}')$) is in $\Delta$, and,
\item if has a class predecessor (\resp{} class successor) (say $j$), then the tuple 
$(q_{i-1}, q_j, a_i, p, s, q_i, a_i')$ (\resp{} $(q_{n-i}, q_{n-j+1}, a_{i}, p, s, q_{n-i+1}, a_{i}')$) is in $\Delta$.

\end{itemize}
\end{itemize}

Note that if there is a successful run it is unique and it defines a unique output data word 
$w'$ which is obtained by applying the labelling supplied by the run to the data word $w$ (that is
$\gdp{w'}=\gdp{w}$ and $\gsp{w'}=\A(\msp{w})$).

\begin{definition}
A {\em cascade} of $\DCMT$ $\C$ is a sequence, $\langle \Sigma=\Sigma_0, \A_1, \Sigma_1, \ldots, \Sigma_{n-1}, \A_n, \Sigma_{n}\rangle$ 
such that $\A_1,\ldots, \A_n$ is a sequence of forward and backward $\DCMT$s and for each $i$, $\A_i$ is a $\DCMT$ with input alphabet
$\Sigma_{i-1}$ and output alphabet $\Sigma_{i}$. We denote by $\mathsf{D}$ the set of all cascades of $\DCMT$. 
\end{definition}
The run of $\C$ is defined as before.

\begin{remark}
\label{DCMTunion}
 Using standard product construction it follows that forward (\resp{} backward) $\DCMT$ are closed under product.
This can be extended to cascades.
 \end{remark}

\begin{proposition} 
\mlabel{proposition:br-to-cascade}
For every BR-formula $\varphi$ of $\comp$-height $k$ there is an equivalet cascade in $\mathsf{D}$ of height $k$. 
\end{proposition}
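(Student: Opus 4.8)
The plan is to induct on the $\comp$-height, following the same skeleton as the proof of Proposition~\ref{proposition:bma-to-cascade}, the only difference being that the single-pass transducers are now deterministic class-memory transducers read in a prescribed direction. As there, everything reduces to the base case $(\star)$: \emph{for every $\psi\in\mathsf{Formulas}(M_{\mathtt{X}})$ there is a single backward $\DCMT$ that outputs $1$ exactly at the positions where $\psi$ holds, and, symmetrically, for every $\psi\in\mathsf{Formulas}(M_{\mathtt{Y}})$ there is such a forward $\DCMT$.} Since a $\DCMT$ emits one output letter per position, once $(\star)$ is available the general statement follows by composing transducers along the $\comp$-tree of $\varphi$, the height of the cascade matching the $\comp$-height.

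For the base case take $\psi\in\mathsf{Formulas}(M_{\mathtt{X}})$, so $\psi$ uses only the future modalities $\nextg,\nextc$. First I make $\psi$ guarded using Lemma~\ref{guarded}; the guarding transformation only erases unguarded occurrences of fixpoint variables and introduces no new modalities, so the result is still in $\mathsf{Formulas}(M_{\mathtt{X}})$. Next, invoking the second step in the proof of Theorem~\ref{theorem:br-to-nu}, I turn every $\mu$-fixpoint of $\psi$ into a $\nu$-fixpoint; over data words this yields an equivalent guarded formula $\psi'$ that now lies in the $\nu$-fragment. I can therefore reuse the atom machinery of Theorem~\ref{theorem:nu-to-data}: the states of the transducer are the atoms of $\fl(\psi')$ (Definitions~\ref{definition:closure} and~\ref{definition:atom}), and the word is read \emph{backward}, i.e.\ right-to-left. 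The point is that when the backward run is about to label a position $i$, it has already labelled both the global successor $i+1$ and the class successor $i\plusc1$, and the atom reached at the class successor is exactly the class-memory component that the $\DCMT$ transition function receives. Hence the truth of $\nextg\chi$ (read off the successor atom) and of $\nextc\chi$ (read off the class-successor atom) at $i$ is fixed, while atomic and zeroary constraints are read off the current letter and marking. Because $\psi'$ is a guarded $\nu$-formula, the atom-consistency condition for fixpoints (condition~3 of Definition~\ref{definition:atom}) unfolds each fixpoint until it reaches a modality, whose value is already determined; consequently the atom at $i$ is a \emph{function} of the successor atom, the class-successor atom, the letter and the type, so the transition is a genuine (deterministic) $\DCMT$ transition. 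Setting the output bit at $i$ to $1$ iff $\psi'\in A_i$ and choosing $F_g,F_c$ to enforce the boundary conditions of the construction in Theorem~\ref{theorem:nu-to-data} gives the required backward $\DCMT$; its unique run computes the correct truth values by Corollary~\ref{corollary:BR-unique}. The case $\psi\in\mathsf{Formulas}(M_{\mathtt{Y}})$ is entirely symmetric, reading forward and using $\prevg,\prevc$.

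For the inductive step let $\varphi=\psi(\varphi_1,\dots,\varphi_n)$ be a BR-formula of $\comp$-height $k+1$, where $\psi(x_1,\dots,x_n)\in\mathsf{Formulas}(M_{\mathtt{X}})\cup\mathsf{Formulas}(M_{\mathtt{Y}})$ and each $\varphi_j$ is a BR-formula of $\comp$-height $k$. By the induction hypothesis each $\varphi_j$ is computed by a cascade in $\mathsf{D}$ of height $k$. Using the closure of forward (resp. backward) $\DCMT$, and hence of cascades, under product from Remark~\ref{DCMTunion}, I combine these $n$ cascades into a single cascade of height $k$ whose output records at each position the $n$-tuple of truth values of $\varphi_1,\dots,\varphi_n$. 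Finally I append the base-case $\DCMT$ for $\psi$, read over this enriched alphabet with the hole $x_j$ treated as the proposition carrying the $j$-th truth bit. The composite is a cascade in $\mathsf{D}$ of height $k+1$ equivalent to $\varphi$, completing the induction.

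The genuinely new point, and the main obstacle, is the \emph{determinism} demanded by the base case: the atom-guessing automaton of Theorem~\ref{theorem:nu-to-data} is only nondeterministic and by itself yields a merely functional class-memory transducer. What upgrades it to an honest $\DCMT$ is the interplay of three facts special to this fragment: reading against the direction of the modalities makes the successor and class-successor atoms available through the class-memory mechanism; the $\mu\to\nu$ conversion of Theorem~\ref{theorem:br-to-nu} together with guardedness forces the fixpoint labelling to be locally determined; and Corollary~\ref{corollary:BR-unique} certifies that the unique run indeed computes $\psi$. It is precisely this chain that is unavailable for data $\omega$-words, where Theorem~\ref{theorem:br-to-nu} fails, which is why the construction is stated for data words only.
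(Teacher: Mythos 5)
Your proposal is correct and follows essentially the same route as the paper's proof: reduce everything to the single-direction base case via product (Remark~\ref{DCMTunion}) and closure under composition, then build a $\DCMT$ whose states are the atoms of a guarded, $\nu$-only normal form of the formula, read against the direction of its modalities, with determinism coming from the unfolding of guarded $\nu$-fixpoints and correctness from Corollary~\ref{corollary:BR-unique} together with the argument of Theorem~\ref{theorem:nu-to-data}. The only cosmetic difference is that you treat the $\mathsf{Formulas}(M_{\mathtt{X}})$/backward case explicitly where the paper spells out $\mathsf{Formulas}(M_{\mathtt{Y}})$/forward and appeals to symmetry for the other.
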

\begin{proof} 
Let us observe that it is sufficient to prove the following claim;
($\star$) {\em for every formula $\varphi$ in $\mathsf{Formulas}(M_{\mathtt{Y}})$ there is a forward $\DCMT$ which outputs
$\varphi$ at every position where it holds in the input.} By symmetry we will obtain that
for every formula $\varphi$ in $\mathsf{Formulas}(M_{\mathtt{X}})$ there is a backward $\DCMT$ which outputs
$\varphi$ at every position where it holds in the input. Since by Remark \ref{DCMTunion} given a finite set of formulas
$\{\varphi_1, \ldots, \varphi_k\}$ we can find a 
forward $\DCMT$ which will label every position of the input with the precise subset of formulas which are true there.
Finally since BR and $\mathsf{D}$ are closed under composition (by induction on height $k$) the proposition follows.

Next we show ($\star$). Without loss of generality assume $\varphi$ is guarded and uses only $\nu$-fixpoints.
Recall the definition of closure and atom (Definitions
\ref{definition:closure} and \ref{definition:atom}).
We define a forward-$\DCMT$ $\A_\varphi$ whose states are precisely the 
atoms in $\fl(\varphi)$. Let us observe that using Corollary \ref{corollary:BR-unique} every formula in every atom in $\fl(\varphi)$ can also be
transformed to
use only $\nu$-fixpoints. 
Next we discuss the transitions of $\A_\varphi$; this machine verifies
that the sequence of atoms defined by the run of the automaton indeed satisfies all consistency conditions defined below.
We let $\(A_{-1}, A_{\minusc 1}, a, p, s, A, a'\)$ to be a transition of $\A_\varphi$ such that
\begin{enumerate}[(i)]
\item if $A_{\minusc 1} = \bot$ then $\firstc \in A$,
\item $a \in A$,
\item $\S \in A$ iff $s=\S$, similarly, $\P \in A$ iff $p=\P$,
\item if  $\varphi \in A_{-1}$ then  $\prevg \varphi\in A$,
\item  if $\varphi\in A_{\minusc 1}$ then $\prevc \varphi  \in A$,
\item $a'=\varphi$ iff $\varphi \in A$.
\end{enumerate}

All states of $\A_\varphi$ are final and class final. The initial state is the unique atom which contains $\firstg$ and $\neg p$ for every
propositional variable $p$. Let us verify that the automaton is deterministic. Assume $\(A_{-1}, A_{\minusc 1}, a, p, s, a\)$ and
$\(A_{-1}, A_{\minusc 1}, a, p, s, A',a'\)$ are two transitions of $\A_\varphi$. We want to show that $A=A'$, that is to say
for every $\psi$ if $\psi\in A \Rightarrow \psi \in A'$ (This is sufficient since atoms are maximal).
We proceed by induction on the structure of $\psi$.
Observe that because of conditions (i--iii) 
whenever $\psi$ is a propositional variable, a zeroary modality or their negation
the claim holds.
When $\psi$ is of the form $\prevg \chi$ (the case of $\prevc \chi$ being similar) then $\psi \in A \Rightarrow
\chi \in A_{-1} \Rightarrow \psi \in A'$ (by conditions (iv--v)). Assume $\psi = \chi \vee \delta \in A 
\Rightarrow \chi \in A \mbox{ or } \delta \in A \Rightarrow \chi \in A' \mbox{ or } \delta \in A' \mbox{ (by IH) } 
\Rightarrow \chi \vee \delta \in A'$. The case of $\wedge$ and $\neg$ is similar.
Finally assume that $\psi = \nu x. \chi(x) \in A$ where $\chi(x)$ is guarded. 
Hence $\chi(\nu x. \chi(x)) \in A$. Let us safely assume (using the unfolding of the fixpoints in the atom) that
$\chi(x)$ is not of the form $\nu y,\chi'(x,y)$. In which case $\chi$ is a boolean combination
of formulas of the form $\mathtt{M}\phi(x)$ or $\phi'$ where $\phi'$ does not contain $x$. We apply induction hypothesis
to $\chi(\nu x. \chi(x))$. For every subformula $\phi'$ of $\chi(x)$, $\phi' \in A \Leftrightarrow  \phi' \in A'$.
For every formula of the form $\mathtt{M}\phi(\nu x. \chi(x))$, it is the case that (by conditions (iv--v)) 
$\mathtt{M}\phi(\nu x. \chi(x)) \in A \Leftrightarrow \mathtt{M}\phi(\nu x. \chi(x)) \in A'$. Hence we conclude that
$\psi \in A'$.

Next we show the correctness of the construction. For a given data word $w$, we observe that the sequence of atoms
$A_0,A_1,\ldots,A_n$ where $A_i$ is the set of all formulas in $\fl(\varphi)$ is an accepting run of $\A_\varphi$.
For the other direction we need to show that if $A_0,A_1,\ldots, A_n$ is the unique accepting run of $\A_\varphi$ on $w$, then
for every formula $\psi \in \fl(\varphi)$, $A_i \ni \psi \Leftrightarrow w,i \models \psi$.
We prove the stronger claim; {\em for every formula $\varphi$ and for every data words $w$ and every sequence $A_0,A_1,\ldots, A_n$
satisfying conditions (i--vi) it is the case that for every formula $\psi \in \fl(\varphi)$, if $A_i \ni \psi \Rightarrow w,i \models \psi$.}
Note that if $A_i \not \ni \psi \Rightarrow A_i \ni \neg \psi \Rightarrow w,i \models \neg \psi \Rightarrow w,i \not \models \psi$.
Proof of the claim is a repetition of the similar claim in the proof of Theorem \ref{theorem:nu-to-data} using Corollary \ref{corollary:BR-unique}.
\end{proof}

\begin{proposition}
\mlabel{proposition:cascade-to-br}
 For every cascade of height $k$ there is an equivalet BR-formula of $\comp$-height $k+1$.
\end{proposition}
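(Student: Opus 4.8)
The plan is to mirror the proof of Proposition~\ref{proposition:cascade-to-bma}, with $\mathsf{Formulas}(M_{\mathtt{Y}})$ and $\mathsf{Formulas}(M_{\mathtt{X}})$ playing the roles that $\mathsf{Formulas}(M_{\mathit{g}})$ and $\mathsf{Formulas}(M_{\mathit{c}})$ played there, and with one extra layer devoted to acceptance. First I would establish the following analogue of Fact~\ref{fact:Buchi-mu}: ($\star$) for every forward $\DCMT$ $\A$ and every output letter $a$ there is a formula $\varphi_a\in\mathsf{Formulas}(M_{\mathtt{Y}})$ (of $\comp$-height $1$) such that, on every data word $w$ and position $i$, we have $w,i\models\varphi_a$ iff the unique run of $\A$ emits $a$ at $i$; symmetrically, a backward $\DCMT$ yields such a family in $\mathsf{Formulas}(M_{\mathtt{X}})$. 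Granting ($\star$), the cascade is handled by induction on its height exactly as in Proposition~\ref{proposition:cascade-to-bma}: the output letters of $\A_i$ are obtained by substituting the output-letter formulas of $\A_{i-1}$ into those of $\A_i$, which is precisely the $\comp$ operation, so that after $k$ transducers all intermediate outputs are defined by formulas of $\comp$-height $k$.

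To prove ($\star$) I would encode the unique run of $\A$ as a simultaneous greatest fixpoint. For each state $q$ I introduce an indicator $\chi_q$, intended to hold at $i$ iff the run assigns $q$ to $i$. Since a forward $\DCMT$ fixes the state at $i$ from the states at its predecessor $i-1$ and at its class predecessor $i\minusc1$ together with the current letter and type, I would set
\[
\chi_q \;\defeq\; \bigvee_{\tau}\ \bigl(\mathrm{letter}_\tau\wedge\mathrm{type}_\tau\wedge\mathrm{bd}_\tau\wedge\prevg\chi_{q'}\wedge\prevc\chi_{q''}\bigr),
\]
the disjunction ranging over the transitions $\tau=(q',q'',a,p,s,q,a')$ reaching $q$, with $\mathrm{bd}_\tau$ handling the boundaries via $\firstg$ and $\firstc$ (the initial state standing in for $q'$ at the first position, and $\bot$ for $q''$ at class-first positions). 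Every free fixpoint variable occurs guarded, under $\prevg$ or $\prevc$; hence by Corollary~\ref{corollary:BR-unique} this system has a unique fixpoint, which I may take to be a $\nu$-fixpoint, and solving it by the usual reduction of simultaneous fixpoints to nested ones gives a genuine $\chi_q\in\mathsf{Formulas}(M_{\mathtt{Y}})$. Correctness---that this unique fixpoint coincides with the unique run of $\A$---is then proved just as the corresponding claim in Theorem~\ref{theorem:nu-to-data} and Proposition~\ref{proposition:br-to-cascade}, by induction on formula structure using determinism and guardedness. The formula $\varphi_a$ is the disjunction of those transition bodies whose output is $a$, again in $\mathsf{Formulas}(M_{\mathtt{Y}})$; the backward case is symmetric, with $\prevg,\prevc,\firstg,\firstc$ replaced by $\nextg,\nextc,\lastg,\lastc$ and landing in $\mathsf{Formulas}(M_{\mathtt{X}})$.

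It remains to account for acceptance, which is the source of the extra $\comp$-level. A successful run of the cascade requires each $\A_i$ to reach a global-final state at the end of its reading and a class-final state at the end of each class. Rather than testing this end-condition with a modality of the ``wrong'' direction---which would force a boolean combination of $M_{\mathtt{X}}$- and $M_{\mathtt{Y}}$-formulas and raise the $\comp$-height by two---I would have each transducer emit a \emph{reject flag} $r_i$ as part of its output: for a forward $\A_i$ it is set at the $\lastg$ (\resp{} $\lastc$) position whenever the state there is not global-final (\resp{} class-final), and symmetrically at $\firstg,\firstc$ positions for a backward transducer. Each $r_i$ is computable inside the transducer's own directional block, so it is a formula of $\comp$-height $\le k$. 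The entire cascade accepts iff no $r_i$ is ever set, which is expressed by instantiating the single $\mathsf{Formulas}(M_{\mathtt{X}})$ template $\bigwedge_i\Gg\neg x_i$ with the flags: since $\Gg$ scans forward from position $1$ over all positions, it catches every reject uniformly, whatever the direction of the transducer that produced it. This template uses only $\nextg$, so applying it to the $\comp$-height-$\le k$ flags adds exactly one $\comp$-layer, yielding the bound $k+1$.

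The main obstacle is twofold. First, one must justify the fixpoint characterisation of the two-dimensional (predecessor and class-predecessor) recursion of a class-memory transducer and prove it correct; this is exactly where guardedness and the uniqueness of fixpoints for guarded BR-formulas (Corollary~\ref{corollary:BR-unique}) are indispensable, as they let us freely pass to $\nu$ and identify the semantic fixpoint with the syntactic run. Second, the $\comp$-height accounting for acceptance is delicate: a naive conjunction of the end-conditions of forward and backward transducers mixes the two directional blocks and costs an extra level, so the reject-flag reformulation---which turns every acceptance test into one uniformly future-directed $\Gg$-check over $\comp$-height-$k$ flags---is precisely what keeps the total at $k+1$ rather than $k+2$.
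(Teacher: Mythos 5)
Your proposal matches the paper's proof in all essentials: the run of each forward (\resp{} backward) $\DCMT$ is encoded as a guarded vectorial $\nu$-fixpoint over state indicators in $\mathsf{Formulas}(M_{\mathtt{Y}})$ (\resp{} $\mathsf{Formulas}(M_{\mathtt{X}})$), linearized via Bekic's principle with uniqueness supplied by Corollary~\ref{corollary:BR-unique}, composed through the cascade by induction using closure of BR under composition, and acceptance is deferred to a single future-directed $\Gg$-check that accounts for exactly one extra $\comp$-level. The only cosmetic difference is that the paper folds the final-state conditions into the output-letter formulas $\varphi_{a'}$ themselves and asserts via $\Gg\bigvee_{a'}\varphi_{a'}$ that some output is produced at every position, whereas you emit explicit reject flags and assert via $\Gg$ that none is ever raised; both devices yield the same bound $k+1$.
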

\begin{proof}
We prove the following claim; {\em Given a forward $\DCMT$ $\A$ with output alphabet $\Sigma'$ and a letter $a' \in \Sigma'$ 
there is a formula $\varphi_{a'}$ in the composition (of height $2$) of $\mathsf{Formulas}(M_{\mathtt{Y}}) \cup \mathsf{Formulas}(M_{\mathtt{X}})$
such
that $\A$ on input $w$ outputs $a'$ on position $i$ iff
$w,i \models \varphi_{a'}$}. By symmetry we obtain the
analogous claim for backward $\DCMT$. Furthermore since
BR is closed under composition we obtain the claim for cascades of arbitrary height (by induction on the height). 
Finally to check that the cascade accepts the input, all we need to check is that the some output is produced at the first position .

Next we prove the claim. Let $\A$ be a forward $\DCMT$ with set of states $\{q_1,\ldots, q_n\}$ and transitions $\Delta$ and 
initial state $q_1$ and class and global final states $F_c$ and $F_g$ respectively. Let us assume without loss of generality that there are no
incoming transitions to $q_1$. Denote by $\bar{x_q}$ the tuple of variables $x_{q_1},\ldots,x_{q_n}$.
Let $\psi_{q_i}(\bar{x_q})$ be the formula

$$ \psi_{q_i}(\bar{x_{q}}):= \bigvee_{\delta \in
\Delta} \left\{\begin{array}{ll} \(\prevg
x_q \wedge \firstc \wedge a \wedge p \wedge s\) & \mbox{ if  $q' = \bot$}\\
\(\firstg \wedge a \wedge p \wedge s\) & \mbox{ if $q= q_1$}\\
\(\prevg x_q \wedge \prevc x_{q'} \wedge a \wedge p \wedge s\) & \mbox{ else }  \\
\end{array}\right.$$
where $\delta = (q, q', a, p, s, q_i, a')$.

We write a formula in vectorial form (see \cite{ArnoldNiwinski} for related definitions and results)
of the following form, 
$$\varphi=\nu \left(\begin{array}{c}x_{q_1}\\ \vdots\\ x_{q_n} \end{array}\right). \left ( \begin{array}{c} \psi_{q_1}(x_{q_1},\ldots, x_{q_n})\\
\vdots\\ 
\psi_{q_n}(x_{q_1},\ldots, x_{q_n})\\ \end{array}\right )$$
which computes the unique run of the $\DCMT$ (if it exists) as a vector of subsets of positions. 
Now, using Bekic's principle one can linearize this vectorial $\mu$-calculus 
formula to yield a $\mu$-calculus formula $\varphi_{q_i}$ which computes the set of positions $x_{q_i}$ at the fixpoint of $\varphi$. 
Now $\varphi_{a'}$ is defined as
\begin{align*}
\varphi_{a'}:= \( \firstc \rightarrow \vee_{q_i\in F_c} \varphi_{q_i}\) &\wedge \( \firstg \rightarrow \vee_{q_i\in F_g} \varphi_{q_i}\)\\
&\wedge \bigvee_{\delta\in
\Delta} \psi_{q_i}\(\varphi_{q_1},\ldots,\varphi_{q_n}\),
\end{align*}

where $\delta = (q, q', a, p, s, q_i, a')$.

Note that so far the formulas $\varphi_{a'}$ is true at a position $i$ iff the unique partial run outputs $a'$ on it.
For the inductive case this is enough. To assert that there is a successful run we write the formula $\Gg \vee_{a'\in \Sigma'} \varphi_{a'}$ which is
of $\comp$-height $2$.
\end{proof}


Hence we obtain,
\begin{theorem}
 BR and $\mathsf{D}$ are equivalent.
\end{theorem}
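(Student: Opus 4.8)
The plan is to read off this theorem as an immediate corollary of the two preceding propositions, which together establish the two inclusions witnessing the equivalence of the class of data languages definable in BR and the class recognized by cascades in $\mathsf{D}$.

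First I would handle the inclusion $\mathrm{BR}\subseteq\mathsf{D}$. Given a BR-formula $\varphi$ of $\comp$-height $k$, Proposition~\ref{proposition:br-to-cascade} directly supplies an equivalent cascade in $\mathsf{D}$ of height $k$. Since the data language of $\varphi$ is by definition the set of data words satisfying $\varphi$ at the first position, and the constructed cascade accepts exactly those words (as the successful run labels the first position according to $\varphi$), the two define the same data language.

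For the converse inclusion $\mathsf{D}\subseteq\mathrm{BR}$, I would invoke Proposition~\ref{proposition:cascade-to-br}: every cascade in $\mathsf{D}$ of height $k$ has an equivalent BR-formula of $\comp$-height $k+1$. Combining the two inclusions yields that a data language is definable in BR if and only if it is recognized by some cascade in $\mathsf{D}$, which is exactly the asserted equivalence. The only point worth flagging is that the $\comp$-height and cascade-height correspondence is not exact in both directions (one passes from $k$ to $k$, the other from $k$ to $k+1$), so I would phrase the theorem purely as equivalence of expressive power rather than a level-by-level isomorphism; the slack of one in the height bound is harmless for language equivalence.

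There is essentially no remaining obstacle at this level: the substantive work — constructing a forward $\DCMT$ for each $\mathsf{Formulas}(M_{\mathtt{Y}})$-formula via the atom/closure machinery and verifying determinism and correctness, and conversely encoding the unique run of a forward $\DCMT$ by a vectorial $\nu$-formula linearized through Bekic's principle — is already discharged in Propositions~\ref{proposition:br-to-cascade} and~\ref{proposition:cascade-to-br}. The theorem is thus proved by simply citing these two results.
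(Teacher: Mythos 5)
Your proof is correct and matches the paper's own treatment: the paper derives this theorem immediately ("Hence we obtain") from Propositions~\ref{proposition:br-to-cascade} and~\ref{proposition:cascade-to-br}, exactly as you do. Your remark that the one-level slack in $\comp$-height is harmless for language equivalence is a sensible clarification, not a deviation.
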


\newcommand{\ign}[1]{}
\ign{
\subsection{Characterizing BR and BMA using cascades} 
The analogue of composition operator $\comp$ in logic is played by cascade in automata. We conclude this section by introducing the cascades of automata
which captures BMA and BR. We keep the discussions below to an informal level, further details can be found in Appendix \ref{rue-de-cascade}.

A global transducer is a functional finite state transducer which sees only the string projection of the data word and outputs another word. 
A class transducer is similarly a functional finite state transducer though it differs in the way it functions. On a given input data word
several copies of the class transducer runs over each class and each copy outputs a word and these words are put together 
(in the original order of positions) to create the new data word. By a cascade of such transducers we mean a sequence of automata $\A_1\ldots, \A_k$
such that the output of $\A_i$ is fed into the input of $\A_{i+1}$.

Now, consider a formula $\varphi$ in BMA. Since it is in BMA it makes boundedly many switches (say k) between the class mode and global mode. 
A cascade $\A_1,\ldots, \A_k$ verifies this formula as follows. The automaton $\A_1$ (assume it is a global transducer) marks each position with the 
subformulas $\varphi$ which uses only global modalities, while the automaton $\A_2$ (assume it is a class transducer) marks each position  
with subformulas which consists of class modalities and formulas verified by $\A_1$. Each successive automata $\A_i$, thus, evaluates subformulas 
whose $\comp$-height (or the number of switches) is at most $i$. This way we prove,

\begin{proposition}\mlabel{bma-cascade}
 BMA and cascades of finite state transducers are equivalent.
\end{proposition}

For characterizing BR the reasoning is similar. Instead of finite state transducers we use class memory transducers which are the tranducer version
of class memory automata from \cite{BjorklundS10}. Here a level of the transduction corresponds to formulas which uses modalities which are of the 
same direction. We prove that,

\begin{proposition}\mlabel{br-cascade}
 BR and cascades of class memory transducers are equivalent.
\end{proposition}
}

\section{Data-LTL and $\mathbf{\mbox{FO}^2}$}
\mlabel{section:data-ltl}
Here we make a remark about two logics already discussed in the literature namely $\fotwo$ \cite{BojanczykDMSS11} and Data-LTL \cite{KaraSZ10}. 
Data-LTL (DLTL for short) was introduced in \cite{KaraSZ10} in the setting of data words with multiple data values. 
We restrict it to the case of data words. 
The fragment described below is called Basic DLTL there. It has
the following syntax, let $M_1=\{\nextg, \prevg, \nextc, \prevc\}$ and $M_2=\{ \Ug, \Sg, \Uc, \Sc\}$,
\begin{align*}
\varphi := p \in \mathrm{Prop}\mid \S \mid \P &\mid \mathtt{M_1} \varphi, \mathtt{M_1}\in M_1 \\
                                              &\mid \varphi \wedge \varphi \mid \neg \varphi \mid \varphi \mathtt{M_2} \varphi,
\mathtt{M_2} \in M_2\ .
\end{align*}
From the Example \ref{modalityexample} it is clear that DLTL is a subclass of BMA.
The fragment of DLTL containing the set of modalities $\{\nextg, \nextc, \prevg, \prevc, \Fc, \Fg, \Pg, \Pc\}$ is called unary-Data-LTL. 

Define the modalities $\mathtt{fF}^{\not \sim}$  (\textit{far-future not in class}) and $\mathtt{dP}^{\not \sim}$
(\textit{deep-past not in class}) as,

$$\begin{array}{lll}
 w,i \models \mathtt{fF}^{\not \sim} \varphi &\Leftrightarrow& \exists j>i+1 \mbox{ such that } i \not \sim j \mbox{ and } w,j \models \varphi\\
 w,i \models \mathtt{dP}^{\not \sim} \varphi &\Leftrightarrow& \exists j<i-1 \mbox{ such that } i \not \sim j  \mbox{ and } w,j \models \varphi\\
\end{array}
$$
\begin{lemma}
The modalities $\mathtt{fF}^{\not \sim}$ and $\mathtt{dP}^{\not \sim}$ are  expressible using the modalities 
$\{\nextg, \nextc, \prevg, \prevc, \Fc, \Fg, \Pg, \Pc\}$ over data words and data $\omega$-words.
\mlabel{lemma:modality-to-ltl}
\end{lemma}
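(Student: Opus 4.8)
The plan is to prove the statement for $\mathtt{fF}^{\not\sim}$ in detail and to obtain $\mathtt{dP}^{\not\sim}$ by the mirror symmetry that swaps future and past ($\nextg\leftrightarrow\prevg$, $\nextc\leftrightarrow\prevc$, $\Fg\leftrightarrow\Pg$, $\Fc\leftrightarrow\Pc$) and the roles of $\firstc$ and $\lastc$. So I focus on $\mathtt{fF}^{\not\sim}$. First I would record that the auxiliary zeroary predicates I need are already available: $\lastc\equiv\neg\nextc\true$ and $\firstc\equiv\neg\prevc\true$, while emptiness of the gap immediately after a position is detected by $\S$ (and dually, emptiness of the gap just before a position by $\P$). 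Thus I may use $\lastc,\firstc,\S,\P$ freely.

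The structural heart is the observation that the positions $j$ with $j>i$ and $j\not\sim i$ are exactly those lying strictly inside a gap of the class of $i$: writing the class members of $i$ in increasing order as $i=c_0<c_1<c_2<\cdots$ with $c_{k+1}=c_k\plusc1$, every such $j$ satisfies $c_k<j<c_{k+1}$ for a unique $k$ (or $j>c_m$ when the class is finite with last member $c_m$). Hence $\mathtt{fF}^{\not\sim}\varphi$ holds at $i$ iff some gap of the class of $i$ contains a $\varphi$-position, the side constraint $j\geq i+2$ excluding only the single position $i+1$ and so mattering only in the first gap. I plan to sweep the class members with the class modalities, so that, informally,
\[
\mathtt{fF}^{\not\sim}\varphi \;\equiv\; \mathsf{firstgap} \;\vee\; \nextc\,\Fc\,\mathsf{gap}(\varphi),
\]
where $\Fc$ visits $c_1,c_2,\dots$, the predicate $\mathsf{gap}(\varphi)$ asserts that the gap immediately following the current class member contains $\varphi$, and $\mathsf{firstgap}$ is the variant of $\mathsf{gap}(\varphi)$ evaluated at $i$ that additionally discards $i+1$ (one extra $\nextg$). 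The degenerate cases are handled separately: $i$ with no class successor by $\lastc\wedge\nextg\nextg\Fg\varphi$, and the unbounded final gap after the last member by $\Fc(\lastc\wedge\nextg\Fg\varphi)$.

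Everything then reduces to expressing the single predicate $\mathsf{gap}(\varphi)$: at a position $c$ with class successor $r=c\plusc1$, that $\varphi$ holds strictly between $c$ and $r$. My plan here is to move to the right endpoint via $\nextc$ and re-read the requirement from $r$, where it becomes ``$\varphi$ holds strictly between $r$ and its class predecessor $r\minusc1$''. Non-emptiness of this gap is exactly $\neg\P$ at $r$, and the position $r-1$ adjacent to $r$ is reached by a single $\prevg$; I would then search the interior leftwards with $\Pg$, but \emph{stopping before} the delimiter $r\minusc1$.

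I expect this stopping condition to be the main obstacle. Searching ``$\varphi$ before the position $r\minusc1$'' is a bounded strict \emph{until}, whereas the logic offers only the unary $\Fg,\Pg$ and the one-step modalities; over a bare linear order such an until is provably not definable from $\Fg,\Pg$ alone. The crux must be that the delimiter is not arbitrary but the \emph{class neighbour} $r\minusc1$, reached by the jump $\prevc$, so that the interplay between the global order and the class order supplies the missing power. Concretely I intend to characterise the interior positions of $(r\minusc1,r)$ by a unary formula that uses $\prevc$ to pin the left delimiter, combined with a case analysis on the markings $\S,\P$ and on $\firstc$, and to verify correctness through the gap decomposition above; converting this informal ``race between a $\Pg$-walk and a $\prevc$-jump'' into a genuine formula of $\{\nextg, \nextc, \prevg, \prevc, \Fc, \Fg, \Pg, \Pc\}$ is the delicate point, and is where I would concentrate the effort. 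Once $\mathsf{gap}$ is available, assembling $\mathtt{fF}^{\not\sim}$ and, by symmetry, $\mathtt{dP}^{\not\sim}$ over both finite data words and data $\omega$-words is routine, the only $\omega$-specific remark being that an infinite class has no last member, so the ``final gap'' clause is simply dropped.
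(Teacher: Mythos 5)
Your reduction of $\mathtt{fF}^{\not\sim}\varphi$ to gaps of the origin's class is correct as a characterization, but it stops exactly where the lemma becomes hard: everything rests on the single predicate $\mathsf{gap}(\varphi)$ (``some position strictly between the current position and its class successor satisfies $\varphi$''), and you never construct it --- you explicitly defer it as ``the delicate point''. That deferred point is not a residual technicality; it is the entire content of the lemma, and there is a structural obstruction that your sketch does not address. The property is an \emph{until} whose bounding event (``a member of the evaluation position's class'') depends on the evaluation position. In the fragment $\{\nextg,\nextc,\prevg,\prevc,\Fc,\Fg,\Pg,\Pc\}$, a subformula under a modality is evaluated purely as a property of the position reached, with no way to refer back to the origin; and since any witness $j$ inside the gap satisfies $j\not\sim r$, the class modalities available \emph{at} $j$ give no access to the delimiter $r\minusc 1$ either. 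So the hoped-for ``race between a $\Pg$-walk and a $\prevc$-jump'' has no compositional realization: this is the classical non-definability of until from unary $\Fg/\Pg$ over linear orders, aggravated here by origin-dependence of the stopping condition. The fact that the paper goes to considerable lengths to avoid any such interval reasoning is strong evidence that this route cannot be completed as stated.

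The paper's proof uses a genuinely different, global idea. Over finite data words it exploits a globally \emph{unique} marker: the last position where $\varphi$ holds, definable by $\varphi_{\mathit{last}}=\varphi\wedge\neg\nextg\Fg\varphi$. Then $\mathtt{fF}^{\not\sim}\varphi$ holds at $i$ iff either the marker is at distance at least $2$ to the right of $i$ and outside $i$'s class, expressed by $\nextg\nextg\Fg\varphi_{\mathit{last}}\wedge\neg\Fc\varphi_{\mathit{last}}$, or the marker is in $i$'s class and some $\varphi$-position at distance at least $2$ lies outside the \emph{marker's} class, expressed by $\Fc\varphi_{\mathit{last}}\wedge\nextg\nextg\Fg(\varphi\wedge\neg\Fc\varphi_{\mathit{last}})$. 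Uniqueness of the marker is what converts the origin-dependent condition ``$j\not\sim i$'' into the position-local condition ``the class of $j$ does not contain the marker'', which unary modalities can express. Note also that this trick presupposes a last $\varphi$-position, so it does not survive the passage to $\omega$-words; there the paper needs a separate case analysis on the number of classes containing infinitely many $\varphi$-positions (none, exactly one, at least two), with a different formula in each case. So your closing claim that the $\omega$-word case is routine is also optimistic: in the paper it is the longest part of the argument. Both halves of the lemma therefore require ideas that the proposal does not contain.
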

\begin{proof}

{\bf Finite data word case:}
We only do the case of $\mathtt{fF}^{\not \sim}$. The case of $\mathtt{dP}^{\not \sim}$ is symmetric. Assume we are given a formula 
$\mathtt{fF}^{\not \sim} \varphi$. Let $k$ be the 
last position where $\varphi$ is true. Obviously it is the unique position where $ \varphi_{\mathit{last}} = \varphi \wedge \neg \Fg \varphi$ is
true. 
A position $i$ satisfies $\mathtt{fF}^{\not \sim} \varphi$ if and only if one of the following scenarios hold;
\begin{enumerate}
 \item $k>i+1$ and $k \not \sim i$,
 \item $k \sim i$ and there is a $j>i+1$ such that $j$ satisfies $\varphi$ and $j \not \sim k$.
\end{enumerate}
The first scenario holds if the formula $\nextg\nextg\Fg \varphi_{\mathit{last}} \wedge \neg \Fc \varphi_{\mathit{last}}$ is true at position $i$. 
(Note that $\Fg$ evaluates a formula on all positions in the future including the current position, hence $\nextg\nextg\Fg \varphi_{\mathit{last}}$).
The second scenario holds if the formula $\Fc \varphi_{\mathit{last}} \wedge \nextg\nextg\Fg(\varphi \wedge \neg \Fc \varphi_{\mathit{last}})$ holds
at 
position $i$. Hence $\mathtt{fF}^{\not \sim} \varphi$ is equivalent to the formula
$$\Psi \equiv (\nextg\nextg\Fg \varphi_{\mathit{last}} \wedge \neg \Fc \varphi_{\mathit{last}}) \vee (\Fc \varphi_{\mathit{last}} \wedge
\nextg\nextg\Fg(\varphi
\wedge \neg \Fc \varphi_{\mathit{last}})).$$
\newline
{\bf Data $\omega$-word case:}
Let $\alpha$ be a data $\omega$-word and $i$ be a position of $\alpha$.
Below we characterize the scenarios when $i$ satisfies the formula $\varphi$. 
We do a case analysis based on the number of classes in $\alpha$ which has infinitely many 
positions satisfying $\varphi$. 

{\bf case 1: when all classes of $\alpha$ has only finitely many positions satisfying $\varphi$} :
Let us observe that this is the case if and only if all class minimum positions in $\alpha$ satisfy the formula
$\Fc\Gc \neg \varphi$ . Hence $\alpha$ belongs to this case if and only if $\alpha$ satisfies the formula
$$C_1 \equiv \firstg \rightarrow \Gg\( \firstc \rightarrow \Fc\Gc \neg \varphi \) .$$
In this scenario we have two subcases;

{\it subcase 1: When there are only finitely many $\varphi$ in $\alpha$ :} This is the case if and 
only if $\alpha$ satisfy the formula $$S_1 \equiv \firstg \rightarrow \Fg \Gg \neg \varphi \,.$$
Note that in thie case our reasoning essentially is the same as that of the finite data word case.
Hence in this subcase a position $i$ satisfies $\mathtt{fF}^{\not \sim} \varphi$ if and only if 
it satisfies the formula 
$$ \Phi_1 \equiv \Hg \( C_1 \wedge S_1 \) \rightarrow \Psi \,.$$
{\it subcase 2: When there are infinitely many $\varphi$ in $\alpha$ :} This is the case if and 
only if $\alpha$ satisfies the formula 
$$S_2 \equiv \firstg \rightarrow \Gg \Fg \varphi \,.$$
Also observe that since all classes in $\alpha$ contain only finitely many $\varphi$ and $\alpha$ contain
infinitely positions with $\varphi$ it is the case that there are infinitely many classes in 
$\alpha$ containing a $\varphi$. Therefore it is guaranteed that all positions $i$ have a 
position to the right which is not in its class and which satisfies $\varphi$. We can characterize
this subcase by the formula 
$$\Phi_2 \equiv \Hg\(C_1 \wedge S_2\) \rightarrow \true \,.$$

{\bf case 2: when there is exactly one class in $\alpha$ which has  infinitely many positions satisfying $\varphi$} : 
First we observe that we can characterize this case using a formula. This scenario holds
if in $\alpha$ there is exactly one class minimum posiiton satisfying the formula $\Gc\Fc \varphi$ and 
all other class minimum position satisfies the formula $\Fc\Gc \neg \varphi$. 
Therefore the positions in the unique class (call it $I$) containing infinitely many $\varphi$ are characterized
by the formula 

\begin{align*} U \equiv \Pc \( \firstc \wedge \Gc\Fc \varphi \wedge \nextg\Fg \(\firstc \rightarrow \Fc \Gc \neg \varphi\)\right.  \\
\quad\left. \wedge \prevg\Pg \(\firstc \rightarrow \Fc \Gc \neg \varphi\) \)\, .
\end{align*}

Using the formula $U$ we can assert that $\alpha$ belongs this class by stating that 
$\Fg U$. Now observe that in this scenario a position $i$ satisfies the formula $\mathtt{fF}^{\not \sim} \varphi$ if and only if one of the following
two conditions hold;
\begin{enumerate}
 \item $i$ is not in the class $I$, which is encoded by the formula $\neg U$,
 \item $i$ is in the class $I$ and there is a $j > i+1$ such that $j$ satisfies $\varphi$ and $j$ is not in $I$. This is encoded by the formula 
 $$ U \wedge \nextg \nextg \Fg \(\varphi \wedge \neg U\) \,.$$  
\end{enumerate}
Hence in this case we can say that $\mathtt{fF}^{\not \sim} \varphi$ is equivalent to the formula
$$\Phi_3 \equiv \Fg U \vee \Pg U \rightarrow \(\neg U \vee \(U \wedge \nextg \nextg \Fg \(\varphi \wedge \neg U\)\)\)\, .$$

{\bf case 3: when there are atleast two classes in $\alpha$ containing infinitely many positions satisfying $\varphi$} : If this is the case then
every position in $\alpha$ satisfies the formula $\mathtt{fF}^{\not \sim} \varphi$. We can 
check this case by stating that there exist two class minimum positons where the formula
$\Gc\Fc \varphi$ holds. Hence in this case $\mathtt{fF}^{\not \sim} \varphi$ is equivalent to 
the formula

\begin{align*}
 \Phi_4 \equiv \Pg \( \firstg \wedge  \Fg \( \firstc \right. \right.& \\
  &\hspace{-1cm}\left. \left. \wedge \(\Gc\Fc \varphi \wedge \nextg\Fg \( \firstc \wedge \(\Gc\Fc \varphi \)\)\)\)\)\, .\\
\end{align*}

Finally to conclude the proof we observe that the three cases described above are exhaustive and hence the formula $\mathtt{fF}^{\not \sim} \varphi$
is equivalent to the disjunction 
$$\Phi_1 \vee \Phi_2 \vee \Phi_3 \vee \Phi_4\, .$$
\end{proof} 

\begin{corollary}
The modalities $\mathtt{F}^{\not \sim}$ (future not in class) and $\mathtt{P}^{\not \sim}$ (past not in
class) defined as 
$$\begin{array}{lll}
 w,i \models \mathtt{F}^{\not \sim} \varphi &\Leftrightarrow& \exists j>i \mbox{ such that } i \not \sim j \mbox{ and } w,j \models \varphi\\
 w,i \models \mathtt{P}^{\not \sim} \varphi &\Leftrightarrow& \exists j<i \mbox{ such that } i \not \sim j  \mbox{ and } w,j \models \varphi\\
\hfill\mathtt{G}^{\not \sim} \varphi & \Leftrightarrow & \neg \mathtt{F}^{\not \sim} \neg \varphi \\
\hfill\mathtt{H}^{\not \sim} \varphi & \Leftrightarrow & \neg \mathtt{P}^{\not \sim} \neg \varphi \\
\end{array}
$$ is definable in DLTL over data words and data $\omega$-words.
\end{corollary}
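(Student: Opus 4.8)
The plan is to reduce the corollary directly to Lemma~\ref{lemma:modality-to-ltl}, which already supplies DLTL definitions of the ``gapped'' variants $\mathtt{fF}^{\not \sim}$ and $\mathtt{dP}^{\not \sim}$. The key observation is that the only difference between $\mathtt{F}^{\not \sim}$ and $\mathtt{fF}^{\not \sim}$ is whether the witnessing position $j$ may equal the immediate successor $i+1$: the former asks for some $j>i$ with $i\not\sim j$ and $w,j\models\varphi$, while the latter restricts to $j>i+1$. Hence I would write $\mathtt{F}^{\not \sim}\varphi$ as the disjunction of $\mathtt{fF}^{\not \sim}\varphi$, which handles every witness strictly beyond the successor, and a single extra check at the position $i+1$.

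That extra check must assert that the successor exists, lies outside the class of $i$, and satisfies $\varphi$. Since $\S$ holds at $i$ exactly when $i+1=i\plusc 1$ (equivalently $i\sim i+1$), the condition ``$i$ has a successor not in its own class'' is precisely $\neg\S$; and if $i$ happens to be the last position then $\neg\S$ holds but $\nextg\varphi$ is false, so the conjunction $\neg\S\wedge\nextg\varphi$ automatically forces the successor to exist. I would therefore set
$$\mathtt{F}^{\not \sim}\varphi\ \equiv\ \mathtt{fF}^{\not \sim}\varphi\ \vee\ \(\neg\S\wedge\nextg\varphi\)\,,$$
and symmetrically, using $\P$ and $\prevg$,
$$\mathtt{P}^{\not \sim}\varphi\ \equiv\ \mathtt{dP}^{\not \sim}\varphi\ \vee\ \(\neg\P\wedge\prevg\varphi\)\,.$$
By Lemma~\ref{lemma:modality-to-ltl} the first disjunct of each is already in DLTL, and the second disjunct uses only $\nextg$ (\resp{} $\prevg$) together with the zeroary modality $\S$ (\resp{} $\P$), so both equivalences remain inside the allowed fragment, and they do so uniformly for finite data words and for data $\omega$-words, since the added successor test is purely local.

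Finally, the duals $\mathtt{G}^{\not \sim}$ and $\mathtt{H}^{\not \sim}$ are definable by their defining equations $\mathtt{G}^{\not \sim}\varphi\equiv\neg\mathtt{F}^{\not \sim}\neg\varphi$ and $\mathtt{H}^{\not \sim}\varphi\equiv\neg\mathtt{P}^{\not \sim}\neg\varphi$, which is legitimate because the DLTL syntax permits full negation. I do not anticipate a genuine obstacle here: all the substantive content, in particular the delicate case analysis over infinite classes in the $\omega$-word setting, is already discharged by Lemma~\ref{lemma:modality-to-ltl}, so what remains is only to splice in the treatment of the single immediate-successor position and to invoke negation for the duals.
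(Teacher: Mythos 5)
Your proposal is correct and matches the paper's own proof exactly: the paper defines $\mathtt{F}^{\not \sim} \varphi \equiv \( \neg \S \wedge \nextg \varphi\) \vee \mathtt{fF}^{\not \sim} \varphi$ and $\mathtt{P}^{\not \sim} \varphi \equiv \( \neg \P \wedge \prevg \varphi\) \vee \mathtt{dP}^{\not \sim} \varphi$, which are precisely your formulas. Your additional remarks about $\neg\S\wedge\nextg\varphi$ forcing the successor to exist and about handling the duals via DLTL's closure under negation are correct and only make explicit what the paper leaves implicit.
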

\begin{proof} 
Define $\mathtt{F}^{\not \sim} \varphi \equiv \( \neg \S \wedge \nextg \varphi\) \vee \mathtt{fF}^{\not \sim} \varphi$ and 
$\mathtt{P}^{\not \sim} \varphi \equiv \( \neg \P \wedge \prevg \varphi\) \vee \mathtt{dP}^{\not \sim} \varphi$.
\end{proof}

\begin{remark} In \cite{DL08} it is shown that $\fotwo(\Sigma, <, +1, \sim)$ and \intro{simple freeze-LTL} (LTL with the operators $\downarrow,
\uparrow$ for the
registers 
and modalities $\nextg, \Fg,\prevg, \Pg$ and their duals such that the each modality is immediately preceded by a freeze $\downarrow$) are equivalent.
Applying the above idea 
it follows that formulas in simple freeze-LTL can be equivalently written such that the negation appears only at the propositional 
variables (i.e. no need to have negation at the de-freeze operator, i.e. no need to have $\uparrow \not \sim$).
\end{remark}

Next using the above lemma, we prove the equivalence between $\fotwo\(\Sigma, <, +1, \sim, \succclass\)$ and DLTL. The \intro{modal-depth} of a
DLTL formula and the \intro{quantifier-depth} of an $\fotwo$ formula are defined as the maximum number of nested modalities and the 
maximum number of nested quantifiers in the formula.

\begin{theorem}
\label{theorem:udltl-fo2}
$\fotwo\(\Sigma, <, +1, \sim, \succclass\)$ and unary-DLTL are equivalent over data words and 
data $\omega$-words\footnote{It is known from
\cite{KaraSZ10}(Proposition 2) that unary-Data-LTL extended with the additional modalities 
$P^{\not\sim}$ and $F^{\not\sim}$ is equivalent to $\fotwo\(\Sigma, <, +1, \sim, \succclass\)$. However this result uses fewer modalities and is not
known before.}\!. More precisely, 
\begin{enumerate}
 \item  for every unary-DLTL formula $\varphi$ there is a
$\fotwo\(\Sigma, <, +1, \sim,\succclass\)$ formula $\varphi'(x)$ such that $w,i \models \varphi$ if and only if $w,i \models \varphi'(x)$.
Moreover the size of $\varphi'(x)$ is linear in the size of the formula. Similarly the quantifier-depth of $\varphi'(x)$ is the same as the 
modal-depth of of $\varphi'$. 
\item Similarly,
for every $\fotwo\(\Sigma, <, +1, \sim, \succclass\)$ formula $\varphi(x)$ there is a unary-DLTL formula $\varphi'$ such that 
$w,i \models \varphi'$ if and only if $w,i \models \varphi(x)$. The size of $\varphi'$ is exponential in the size of $\varphi(x)$. 
The modal-depth of $\varphi'$ is linear in the quantifier-depth of $\varphi(x)$. 
\end{enumerate}
\end{theorem}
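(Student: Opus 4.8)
The plan is to prove the two inclusions separately, the direction from unary-DLTL to $\fotwo$ being routine and the converse being the substantial one. For unary-DLTL $\to\fotwo$ I would use the standard translation of modal logic into first-order logic, exploiting that all modalities are unary so that the two variables $x,y$ can be recycled by swapping their roles. Concretely I would define by mutual induction two translations $\str_x(\varphi)$ and $\str_y(\varphi)$, each with a single free variable, setting $\str_x(p)=p(x)$, $\str_x(\S)=\exists y\,(x+1=y\wedge x\plusc 1=y)$, and for each modality $M$, $\str_x(M\psi)=\exists y\,(\rho_M(x,y)\wedge\str_y(\psi))$, where $\rho_M$ is the $\fotwo$-definable relation attached to $M$ (for instance $\rho_{\nextc}(x,y)\equiv x\plusc 1=y$, $\rho_{\Fg}(x,y)\equiv x\le y$, $\rho_{\Fc}(x,y)\equiv x\le y\wedge x\sim y$, recalling that $\Fg,\Fc$ are reflexive). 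Each modality contributes exactly one quantifier and a bounded amount of syntax, so $\str_x(\varphi)$ has size linear in $|\varphi|$ and quantifier-depth equal to the modal-depth, which is item~1.

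The interesting direction is item~2, for which I would follow the Ehrenfeucht--Fra\"iss\'e/type-elimination strategy of Etessami--Vardi--Wilke, adapted to data words. The heart is to prove, by induction on the quantifier rank $k$, that the rank-$k$ $\fotwo$-type of a position is captured by a Boolean combination of unary-DLTL formulas of modal-depth $O(k)$; item~2 then follows at once, since every $\fotwo(x)$ formula of rank $k$ is a Boolean combination of rank-$k$ type-formulas. For the inductive step I would first bring each rank-$(k{+}1)$ formula into the normal form of a Boolean combination of atoms $a(x)$ and formulas $\exists y\,\psi(x,y)$ with $\psi$ of rank $\le k$, and then handle each existential block.

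The crucial observation is that the mutual position of $x$ and $y$ falls into finitely many $\fotwo$-definable \emph{relationship types}: $y=x$; $y=x+1$ with $y\sim x$ (i.e.\ $\S$ at $x$); $y=x+1$ with $y\not\sim x$ ($\neg\S$); $y=x\plusc 1$ strictly beyond $x+1$; $y\sim x$ strictly beyond $x\plusc 1$; $y\not\sim x$ strictly beyond $x+1$; together with their five past mirrors. For each relationship type $\tau$, the constrained statement $\exists y\,(\tau(x,y)\wedge\psi(x,y))$, read with $y$ as the new evaluation point, is equivalent to a unary-DLTL modality applied to the rank-$\le k$ translation of $\psi$: $\nextg,\prevg$ guarded by $\S,\neg\S$ (resp.\ $\P,\neg\P$) for the immediate neighbours, $\nextc,\prevc$ for the class successor/predecessor, iterated class steppers together with $\Fc,\Pc$ for the ``same-class, strictly-beyond'' cases, and precisely $\mathtt{fF}^{\not \sim}$ and $\mathtt{dP}^{\not \sim}$ for the ``different-class, far'' cases. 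By Lemma~\ref{lemma:modality-to-ltl} these last two modalities are themselves expressible in unary-DLTL over both data words and data $\omega$-words, which is exactly what keeps the whole navigation inside the fragment.

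To close the induction I would phrase it as a composition argument: the rank-$(k{+}1)$ type of a position $i$ is determined by the rank-$k$ type of $i$ together with, for each relationship type $\tau$ and each rank-$k$ type $t$, the single bit recording whether some $y$ standing in relation $\tau$ to $i$ realises $t$; and each such bit is one of the modalities above applied to the already-constructed rank-$k$ type-formula for $t$. Since each step adds only a constant number of nested modalities the modal-depth grows linearly, while the number of rank-$k$ types is exponential in $k$ and the type-formulas are disjunctions over them, giving the exponential size bound. The main obstacle is precisely the Ehrenfeucht--Fra\"iss\'e composition lemma underlying this step: one must verify that two positions agreeing on their rank-$k$ type and on the realised-type information for every relationship type are indistinguishable in the $(k{+}1)$-round two-pebble game. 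The point specific to data words, absent in the classical word case, is that the opponent's ``far, not in class'' move must always be answerable; this is guaranteed exactly because the relationship types form a finite $\fotwo$-definable partition and the far-out-of-class case is covered by $\mathtt{fF}^{\not \sim}/\mathtt{dP}^{\not \sim}$, with the $\omega$-word subtleties (finitely vs.\ infinitely many $\varphi$-positions, a unique or several infinite $\varphi$-classes) already absorbed into the case analysis of Lemma~\ref{lemma:modality-to-ltl}.
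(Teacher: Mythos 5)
Your proposal follows, in substance, the same route as the paper's proof. Item~1 is handled identically (standard translation with variable recycling), and for item~2 the paper likewise adapts the Etessami--Vardi--Wilke argument \cite{EtessamiVW02}: it organizes the elimination of $\exists y$ around exactly your finite partition into relationship types (the paper's sets $\Delta(x,y)$ and $\Gamma(x,y)$ of order- and class-atomics, whose consistent conjunctions are your eleven types), and it invokes Lemma~\ref{lemma:modality-to-ltl} for the two ``far, not in class'' cases --- which you correctly identify as the only genuinely data-specific ingredient, including the $\omega$-word case analysis. The one real difference is presentational: the paper runs a \emph{syntactic} induction on the structure of $\varphi(x)$, putting each block $\exists y\,\psi(x,y)$ in disjunctive normal form $\bigvee_i \alpha_i(x)\wedge\beta_i(y)\wedge\delta_i(x,y)\wedge\gamma_i(x,y)$ and translating disjunct by disjunct, whereas you run a \emph{semantic} induction on quantifier rank via type formulas and an Ehrenfeucht--Fra\"iss\'e composition lemma. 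Your composition lemma is correct, and in fact needs no game argument: since every quantified subformula of an $\fotwo$ formula has at most one free variable, any $\psi(x,y)$ of rank $k$ is a Boolean combination of atomics in $(x,y)$ and rank-$\le k$ one-variable formulas, which immediately gives that the rank-$(k{+}1)$ type of $x$ is determined by its rank-$k$ type together with the bits ``some $y$ in relation $\tau$ to $x$ realises rank-$k$ type $t$''.

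There is, however, one genuine gap: the size bound. You justify it by asserting that ``the number of rank-$k$ types is exponential in $k$'', but that is not what your construction delivers. If $T_k$ denotes the set of rank-$k$ one-variable types, your composition step gives only $|T_{k+1}|\le |T_k|\cdot 2^{O(|T_k|)}$, and the formula for a rank-$(k{+}1)$ type is built from the formulas of \emph{all} types in $T_k$; iterating, both the number of types and the total formula size grow as a tower of exponentials in $k$, not singly exponentially in $|\varphi(x)|$. The paper's formula-driven induction avoids this because it only ever translates the subformulas $\alpha_i,\beta_i$ that actually occur in $\varphi$, so the sole blow-up is the DNF expansion performed once per quantifier block, which is what yields size exponential in $|\varphi(x)|$ together with modal depth linear in the quantifier depth. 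So, as stated, your argument establishes the equivalence and the linear depth bound, but not the claimed exponential size bound; to repair it, either switch to the paper's structural induction or restrict your type machinery to the types of formulas occurring as subformulas of the given $\varphi(x)$.
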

\begin{proof} 

($\Leftarrow$)
 Follows simply from the fact that the modalities used in unary-DLTL are expressible in $\fotwo\(\Sigma, <, +1, \sim, \succclass\)$ and we use the 
 obvious analogue of the standard translation from modal logic to two-variable first order logic. The translation is linear and preserves the depth 
 as claimed. 

($\Leftarrow$)

For convenience we define the  abbreviations $x \ll y$ and $x \ll^c y$ for $x < y \wedge x+1 \neq y$ and  
$x \sim y \wedge x < y \wedge x \succclass \neq y$.

We intend to prove that for every $\fotwo\(\Sigma, <, +1, \sim, \succclass\)$ formula $\varphi(x)$ there is a unary-DLTL formula $\varphi'$ such that 
$w,i \models \varphi'$ if and only if $w,i \models \varphi(x)$. The proof idea is quite standard (see \cite{EtessamiVW02}).
Let $\varphi(x)$ be a formula in $\fotwo$, the {\em quantifier depth} of $\varphi(x)$ 
is defined as usual as the maximum number of nested quantifiers in $\varphi(x)$. The proof is by induction on the structure of the formula. 
When $\varphi(x)$ is $a(x)$ then $\varphi'$ is simply $a$. When $\varphi(x)$ is of the form $\varphi_1(x) \vee \varphi_2(x)$ (or $\neg \varphi_1(x)$),
using inductive hypothesis, we define $\varphi'$ as $\varphi_1' \vee \varphi_2'$ (or $\neg \varphi_1'$). The remaining cases are that when 
$\varphi(x)$ is of the form $\exists x. \varphi_1(x)$ or $\exists y. \varphi_1(x,y)$. Both cases are identical upto a renaming of variables. So it is
enough to consider only $\exists y. \varphi(x,y)$. We write $\varphi(x,y)$ in disjunctive normal form and distribute the existential quantifier over
the 
disjunctions to obtain a formula of the form $\bigvee_i \exists y. \varphi_i(x,y)$ where each $\varphi_i(x,y)$ is of the form 
$\alpha_i(x) \wedge \beta_i(y) \wedge \delta_i(x,y) \wedge \gamma_i(x,y)$ in which $\alpha_i(x), \beta_i(y)$ are formulas with only one free variable
$\delta_i(x,y) \in \Delta(x,y)$ and $\gamma_i(x,y) \in \Gamma(x,y)$ 
where the sets $\Delta(x,y)$ and $\Gamma(x,y)$ are, 

$$\Delta(x,y)= \{ y \ll x,\, y+1 =x,\, x =y ,\, x + 1 = y,\, x \ll y \},$$
$$\Gamma(x,y) = \{ y \ll^c x,\, y\succclass = x,\, x \not \sim y,\, x \succclass =y,\,  x \ll^c y \}.$$

We also note that writing each conjuct $\varphi_i$ in this form might require replacing subformulas in $\varphi_i$ which are 
negations of formulas in $\Delta(x,y)$ by an equivalent formula consisting of disjunctions of formulas from $\Delta(x,y)$ 
(and further distributing these disjunctions in the conjunct). Let us observe that it is enough to define a translation for each of the disjunct 
of the form $\varphi(x,y) \equiv \exists y.\, \alpha(x) \wedge \beta(y) \wedge \delta(x,y) \wedge \gamma(x,y)$. 
Inductively we assume that we have the DLTL formulas $\alpha'$ and $\beta'$ which are equivalent to $\alpha(x)$ and $\beta(y)$.
We define the translation below.

Consider the case when $\gamma(x,y)$ is $x \not \sim y$. Then the translations are listed below.

$$
\arraycolsep=1.5pt
\begin{array}{l@{\hspace{.5cm}}|@{\hspace{.5cm}}l}
 \delta(x,y) & \varphi'\vspace{1pt}\\
 \hline
 x = y & \mathsf{false}\\
x\ll y &  \alpha' \wedge \mathtt{fF}^{\not \sim} \beta' \\
x+1=y & \alpha' \wedge \neg \S \wedge \nextg \beta' \\
y+1=x &  \alpha' \wedge \neg \P \wedge \prevg \beta' \\
y \ll x & \alpha' \wedge \mathtt{dP}^{\not \sim} \beta'\\
\end{array}$$

The rest of the cases are symmetric and hence we treat only the cases when $x \leq y$. 

Assume $\delta(x,y)=x \ll y$. Then $\varphi(x,y)$ is satisfiable only when $\gamma(x,y)$ is $x \succclass = y$, $x \ll^c y$ and we define 
the respective translations as $\alpha \wedge \nextc \beta' \wedge \neg \S$ and $\alpha \wedge \nextc \nextc \Fc \beta'$.

When $\delta(x,y)$ is $x +1 = y$, $\varphi(x,y)$ is satisfiable only when $\gamma(x,y)$ is $x \succclass= y$ and we define the translation
as $\alpha \wedge \nextc \beta' \wedge \S$.

For estimating the size and modal depth one proceeds by induction. We omit the analysis as it is straightforward.
\end{proof}

Finally, remark that the separation of LTL and unary-LTL over words implies that (consider data words in which all data values are identical)
unary-DLTL is a strictly less expressive than DLTL. Similarly the separation of $\mu$-calculus and LTL over words implies that
DLTL is strictly less expressive than BMA.


\section{Discussion}
Over data words we have the following inclusions.
\begin{align*}
 \fotwo=\mbox{uDLTL} \stackrel{1}{\subsetneq} \mbox{DLTL} \stackrel{2}{\subsetneq} \mbox{BMA} \stackrel{3}{\subsetneq} \mbox{BR}
\stackrel{4}{\subsetneq} \nu\mbox{-Fragment} \stackrel{5}{\subseteq} \mbox{DA}
\end{align*}
Over data $\omega$-words we have the following inclusions.
\begin{align*}
 \fotwo=\mbox{uDLTL} \stackrel{1'}{\subsetneq} \mbox{DLTL} \stackrel{2'}{\subsetneq} \mbox{BMA} \stackrel{3'}{\subsetneq}
\mbox{DA}\stackrel{4'}{\supseteq} \nu\mbox{-Fragment}
\end{align*}

Inclusions $1$,$1'$,$2$ and $2'$ follow from Example \ref{modalityexample} and Theorem \ref{theorem:udltl-fo2} while the strictness of the
inclusions follow from the respective strictness on words and $\omega$-words (which are data words and data $\omega$-words when $\D$ is
singleton). Inclusion $3$ and $3'$ follows from Theorem \ref{theorem:bma-to-br} and Remark \ref{remark:bma-to-da} while the strictness of the
inclusion depends on deep results
from additive combinatorics which will appear in a later publication. Inclusion $4$ follows from Theorem \ref{theorem:br-to-nu} while
strictness follows from the fact that BR is closed under complementation while $\nu$-fragment is not (Theorem \ref{theorem:undecidability}).
Inclusions $4'$ and $5$ follow from Theorem \ref{theorem:nu-to-data}. The strictness is open.
Also note that over data $\omega$-words $\nu$-fragment has non-empty intersection with uDLTL but do not contain it. The non-containment
follows from the non-containment of unary-LTL in the $\nu$-fragment of $\mu$-calculus on $\omega$-words.

\section{Conclusions}

In this paper we have studied the expressive power of $\mu$-calculus over data words. Though the general logic is undecidable, we disclose
several fragments that are: the $\nu$-fragment, the Bounded Reversal fragment (BR) and the Bounded Mode Alternation fragment (BMA).
BR and BMA happen to form Boolean algebras making them very natural, and relatively expressive logics over data words. We also establish the
relationship with earlier logics like $\fotwo$ or Data-LTL. We end with the following question.

\begin{question}
Cascades of finite state automata can be characterized as wreath product of semigroups (Krohn-Rhodes theorem), a result which
has an 
analogue on trees \cite{bojanczyk2009wreath}. Is there a generalization to BMA? 
\end{question}

\bibliographystyle{IEEEtran}
\bibliography{mu}

\begin{thebibliography}{10}
\providecommand{\url}[1]{#1}
\csname url@samestyle\endcsname
\providecommand{\newblock}{\relax}
\providecommand{\bibinfo}[2]{#2}
\providecommand{\BIBentrySTDinterwordspacing}{\spaceskip=0pt\relax}
\providecommand{\BIBentryALTinterwordstretchfactor}{4}
\providecommand{\BIBentryALTinterwordspacing}{\spaceskip=\fontdimen2\font plus
\BIBentryALTinterwordstretchfactor\fontdimen3\font minus
  \fontdimen4\font\relax}
\providecommand{\BIBforeignlanguage}[2]{{%
\expandafter\ifx\csname l@#1\endcsname\relax
\typeout{** WARNING: IEEEtran.bst: No hyphenation pattern has been}%
\typeout{** loaded for the language `#1'. Using the pattern for}%
\typeout{** the default language instead.}%
\else
\language=\csname l@#1\endcsname
\fi
#2}}
\providecommand{\BIBdecl}{\relax}
\BIBdecl

\bibitem{KaminskiFrancez94}
M.~Kaminski and N.~Francez, ``Finite-memory automata,'' \emph{Theor. Comput.
  Sci.}, vol. 134, no.~2, pp. 329--363, 1994.

\bibitem{Bojanczyk11data}
M.~Boja\'{n}czyk, ``Data monoids,'' in \emph{STACS}, 2011, pp. 105--116.

\bibitem{KaminskiZeitlin08}
M.~Kaminski and D.~Zeitlin, ``Extending finite-memory automata with
  non-deterministic reassignment (extended abstract),'' in \emph{AFL}, 2008,
  pp. 195--207.

\bibitem{BojanczykDMSS11}
M.~Boja\'{n}czyk, C.~David, A.~Muscholl, T.~Schwentick, and L.~Segoufin,
  ``Two-variable logic on data words,'' \emph{ACM Trans. Comput. Log.},
  vol.~12, no.~4, p.~27, 2011.

\bibitem{MFCS11:colcombet-ley-puppis}
T.~Colcombet, C.~Ley, and G.~Puppis, ``On the use of guards for logics with
  data,'' in \emph{MFCS}, ser. LNCS, vol. 6907.\hskip 1em plus 0.5em minus
  0.4em\relax Springer, 2011, pp. 243--255.

\bibitem{DL08}
S.~Demri and R.~Lazi{\'c}, ``{LTL} with the freeze quantifier and register
  automata,'' \emph{ACM Transactions on Computational Logic}, vol.~10, no.~3,
  Apr. 2009.

\bibitem{JurdzinskiL11}
M.~Jurdzi\'{n}ski and R.~Lazic, ``Alternating automata on data trees and xpath
  satisfiability,'' \emph{ACM Trans. Comput. Log.}, vol.~12, no.~3, p.~19,
  2011.

\bibitem{KaraSZ10}
A.~Kara, T.~Schwentick, and T.~Zeume, ``Temporal logics on words with multiple
  data values,'' in \emph{FSTTCS}, ser. LIPIcs, vol.~8, 2010, pp. 481--492.

\bibitem{PMM}
A.~Manuel, A.~Muscholl, and G.~Puppis, ``Walking on data words,'' in
  \emph{CSR}, ser. LNCS, vol. 7913.\hskip 1em plus 0.5em minus 0.4em\relax
  Springer, 2013, pp. 64--75.

\bibitem{BjorklundS10}
H.~Bj{\"o}rklund and T.~Schwentick, ``On notions of regularity for data
  languages,'' \emph{Theor. Comput. Sci.}, vol. 411, no. 4-5, pp. 702--715,
  2010.

\bibitem{ElgotMezei}
C.~C. Elgot and J.~E. Mezei, ``On relations defined by generalized finite
  automata,'' \emph{IBM J. Res. Dev.}, vol.~9, no.~1, pp. 47--68, Jan. 1965.

\bibitem{Carton10}
O.~Carton, ``Right-sequential functions on infinite words.'' in \emph{CSR'10},
  ser. LNCS, vol. 6072.\hskip 1em plus 0.5em minus 0.4em\relax Springer, 2010,
  pp. 96--106.

\bibitem{ArnoldNiwinski}
A.~Arnold and D.~Niwinski, \emph{Rudiments of {$\mu$}-calculus}, ser. Studies
  in Logic and the Foundations of Mathematics.\hskip 1em plus 0.5em minus
  0.4em\relax Burlington, MA: Elsevier, 2001.

\bibitem{EtessamiVW02}
K.~Etessami, M.~Y. Vardi, and T.~Wilke, ``First-order logic with two variables
  and unary temporal logic,'' \emph{Inf. Comput.}, vol. 179, no.~2, pp.
  279--295, 2002.

\bibitem{bojanczyk2009wreath}
M.~Bojanczyk, H.~Straubing, and I.~Walukiewicz, ``Wreath products of forest
  algebras, with applications to tree logics,'' in \emph{LICS'09.}\hskip 1em
  plus 0.5em minus 0.4em\relax IEEE, 2009, pp. 255--263.

\end{thebibliography}
\end{document}